\documentclass[11pt]{article}
\usepackage[utf8]{inputenc}
\usepackage{amsmath, amssymb, amsthm, dsfont}
\usepackage{enumitem}
\usepackage{graphicx}
\usepackage{subcaption}
\captionsetup{compatibility=false}
\usepackage[dvipsnames]{xcolor}
\usepackage{hyperref, comment}
\usepackage[ruled,vlined,linesnumbered]{algorithm2e}
\usepackage[left=1.00in, right=1.00in, top=1.00in, bottom=1.00in]{geometry}
\usepackage{tikz}
\usetikzlibrary{calc}

\usepackage[framemethod=tikz]{mdframed}
\mdfsetup{
	roundcorner=4pt,
	linewidth=.5,
	innerleftmargin=0pt,
	innerrightmargin=10pt,
	innertopmargin=10pt,
	innerbottommargin=10pt,
	skipabove=12pt,skipbelow=12pt
}

\allowdisplaybreaks

\usepackage{color}


\title{Beyond Worst-Case Budget-Feasible Mechanism Design}
\author{Aviad Rubinstein\thanks{Supported by NSF CCF-1954927, and a David and Lucile Packard Fellowship.}\\Stanford University\\\texttt{aviad@cs.stanford.edu} \and Junyao Zhao\thanks{Supported by NSF CCF-1954927.}\\ Stanford University\\\texttt{junyaoz@stanford.edu}}
\date{}

\def \exp {\textnormal{exp}}
\def \E {\mathbb{E}}
\def \P {\textnormal{Pr}}
\def \R {\mathbb{R}}
\def \eg {{e.g.}}
\def \ie {{i.e.}}

\newcommand{\floor}[1]{\left\lfloor #1 \right\rfloor}

\newtheorem{theorem}{Theorem}[section]
\newtheorem{lemma}[theorem]{Lemma}

\newtheorem{claim}[theorem]{Claim}
\newtheorem{definition}[theorem]{Definition}
\newtheorem{fact}[theorem]{Fact}
\newtheorem{example}[theorem]{Example}
\newtheorem{observation}[theorem]{Observation}

\begin{document}

\maketitle

\begin{abstract}
Motivated by large-market applications such as crowdsourcing, we revisit the problem of budget-feasible mechanism design under a ``small-bidder assumption''. Anari, Goel, and Nikzad (2018) gave a mechanism that has optimal competitive ratio $1-1/e$ on worst-case instances. However, we observe that on many realistic instances, their mechanism is significantly outperformed by a simpler open clock auction by Ensthaler and Giebe (2014), although the open clock auction only achieves competitive ratio $1/2$ in the worst case. Is there a mechanism that gets the best of both worlds, i.e., a mechanism that is worst-case optimal and performs favorably on realistic instances? To answer this question, we initiate the study of {\em beyond worst-case budget-feasible mechanism design}. 

Our first main result is the design and the analysis of a natural mechanism that gives an affirmative answer to our question above:
\begin{itemize}
\item We prove that {\em on every instance}, our mechanism performs at least as good as all uniform mechanisms, including Anari, Goel, and Nikzad's and Ensthaler and Giebe's mechanisms.
\item Moreover, we empirically evaluate our mechanism on various realistic instances and observe that it {\em beats} the worst-case $1-1/e$ competitive ratio {\em by a large margin} and compares favorably to both mechanisms mentioned above.
\end{itemize}

Our second main result is more interesting in theory: We show that in the semi-adversarial model of {\em budget-smoothed analysis}, where the adversary designs a single worst-case market for a distribution of budgets, our mechanism is optimal among all (including non-uniform) mechanisms; furthermore our mechanism guarantees a strictly better-than-$(1-1/e)$ expected competitive ratio for any non-trivial budget distribution {\em regardless of the market}. (In contrast, given any bounded range of budgets, we can construct a single market where Anari, Goel, and Nikzad's mechanism achieves only $1-1/e$ competitive ratio for every budget in this range.) We complement the positive result with a characterization of the worst-case markets for any given budget distribution and prove a fairly robust hardness result that holds against any budget distribution and any mechanism.
\end{abstract}

\setcounter{page}{0}
\thispagestyle{empty}
\newpage

\section{Introduction}
The budget-feasible mechanism design problem was introduced by Singer~\cite{singer2010budget} and has become a core problem in algorithmic mechanism design~\cite{chen2011approximability,DobzinskiPS11,BadanidiyuruKS12,SingerM13,ChanC14,EG14,GoelNS14,HorelIM14,BalkanskiH16,ChanC16,NushiS0K16,ZhaoLM16,ZhengWGZTC17,LeonardiMSZ17,anari2018budget,KhalilabadiT18,AmanatidisKS19,gravin2019optimal,LiZY20,BGGST21}. We will use microtask crowdsourcing as a running example for this problem (see Section~\ref{section:problem_setup} for a formal setup): An employer (buyer) on a crowdsourcing platform (market $I$) such as Mechanical Turk or Microworkers is given a fixed budget $B$, and is looking to acquire some services from a set of workers (sellers) $[n]$. Each worker $i$ can perform a microtask (provide a service) that has a utility $u_i$ to the employer at an incurred cost $c_i$ to the worker himself. The employer's total utility is $\sum_{i\in W}u_i$ for the services provided by each subset of workers $W\subseteq[n]$. As is common in the literature, the employer knows the workers' utilities $u_i$'s (e.g.~by grading their work ex-post, or using the worker's rating on previous tasks), but does not know their private costs $c_i$'s. Moreover, in large-market applications like microtask crowdsourcing, it is often very natural to make a small-bidder assumption: the cost of each worker is a small fraction of the employer's total budget%
\footnote{Other important applications where this assumption is natural include allocation of R\&D subsidies by government agencies and emission reduction auctions~\cite{anari2018budget}.}.

The objective of budget-feasible mechanism design is to design a truthful mechanism that maximizes the employer's total utility while keeping the total payment to the workers within the budget. Roughly speaking, a truthful mechanism makes sure the workers honestly report their private costs $c_i$'s by providing them incentives (payments) and decides which subset of services the employer will get (allocation), and we want the mechanism to maximize the total utility of the services allocated to the employer, under the constraint that the total payment does not exceed $B$.

Without any incentive constraints (i.e., the workers' costs are public, and the employer only needs to pay a worker's cost to get the worker's service), this becomes the well-known knapsack problem. Therefore, it is standard to consider the following performance metric for a mechanism: the ratio between the utility achieved by the mechanism and the optimal utility of the knapsack problem without incentive constraints {\em in a worst-case market and for a worst-case budget}. This metric is called the worst-case competitive ratio, and a mechanism is $\alpha$-competitive if its worst-case competitive ratio is $\ge\alpha\in[0,1]$.

Research in budget-feasible mechanism design has been focusing on designing (polynomial-time) mechanisms that achieve optimal worst-case competitive ratio. Under the small-bidder assumption, ~\cite{anari2018budget} gave a $(1-1/e)$-competitive mechanism and characterized the worst-case instances\footnote{An instance is specified by a market $I$ and a budget $B$.} for which any mechanism can only achieve at most $1-1/e$ competitive ratio.

Although this optimal result provides a satisfactory answer with respect to worst-case competitive ratio, our quest to design even better mechanisms does not come to an end. Indeed, recall that worst-case competitive ratio measures a mechanism's performance in worst-case market given worst-case budget. Such worst-case market and worst-case budget rarely appear in practice. Even if we are given an typical-case market and/or an typical-case budget, a mechanism that achieves optimal worst-case competitive ratio could (potentially) perform as bad as on the worst-case instance. We probably would not prefer such worst-case optimal mechanism over other mechanisms that perform much better on the typical-case instances. To make this point more concrete, consider the following extremely simple instance: 
\begin{example}\label{ex:identical_costs_instance}
\normalfont
The buyer has a budget $B=n$, and each of $n$ seller's services has a cost $1$ to the seller himself and a utility $1$ to the buyer.
\end{example}
For the simple instance in Example~\ref{ex:identical_costs_instance} (which satisfies small-bidder assumption), simply offering a payment of $1$ to each seller extracts full utility, but~\cite{anari2018budget}'s worst-case optimal mechanism only obtains a $(1-1/e)$-fraction. Moreover, instead of identical sellers' costs, consider a more natural variant of Example~\ref{ex:identical_costs_instance}, where the sellers' costs are sampled i.i.d.~from a natural distribution (e.g., Gaussian/uniform/exponential/mixture distribution). Our numerical simulation shows that~\cite{anari2018budget}'s mechanism only obtains close-to-$(1-1/e)$ fraction of the optimal utility for these natural instances, while a simple open clock auction~\cite{EG14}, that is equivalent to setting a single uniform price, obtains significantly larger fractions (see Table~\ref{table:synthetic}). Of course the open clock auction also extracts the full utility for the simple instance in Example~\ref{ex:identical_costs_instance}. However, it is known that the open clock auction is suboptimal in the worst case: \cite{anari2018budget} exhibits a simple example for which the open clock auction has worst-case competitive ratio $1/2$.

\begin{quote}{\em Is there a mechanism that gets the best of both worlds,~i.e., a mechanism that is worst-case optimal and ``performs favorably'' on every instance (not just in the worst case)?}
\end{quote}
By ``perform favorably'', we mean that the mechanism should achieve utility at least as good as a large class of mechanisms. Which class of mechanism should we consider as an appropriate benchmark? At least, we want this class to include the previous mechanisms of~\cite{anari2018budget} and~\cite{EG14}. The most ambitious class is obviously the class of all the mechanisms, but as we now explain, it is unfair to compare with this class. Consider the mechanism in the following example:
\begin{example}\label{ex:silly_mechanism}
\normalfont
Consider an arbitrary instance $(I,B)$ specified by market $I$ and budget $B$, which becomes a knapsack problem when sellers' costs are public, and the optimal solution (i.e., the optimal subset of sellers' services) to this knapsack problem always exists. Now we hard-code the market $I$ in the following mechanism: When given an input instance $(I',B')$ (assume for simplicity\footnote{This is without loss of generality, because otherwise the mechanism could use an arbitrary mapping from sellers in $I$ to sellers in $I'$.} that $I'$ has the same number of sellers as $I$, but the sellers' costs and utilities in $I'$ can be arbitrarily different from $I$), this mechanism reads nothing from input except the budget $B'$, and it always non-uniformly offers each seller, who is in the optimal knapsack solution of instance $(I,B')$, a posted price that is equal to this seller's cost in $I$, and offers nothing to the remaining sellers.
\end{example}
Although the mechanism in Example~\ref{ex:silly_mechanism} is silly (because it always decides the allocation and payments according to $I$ regardless of the actual market $I'$ it is facing), it is a well-defined non-uniform posted price mechanism that is truthful and budget-feasible. Even though we expect this mechanism to perform poorly in general, it is optimal for the specific market $I$ that is hard-coded in it, and there is no way we can compete with such unreasonable mechanism on instance $(I,B)$. In order to exclude such mechanisms while including the mechanisms of~\cite{anari2018budget} and~\cite{EG14}, we restrict our attention to the class of all the uniform mechanisms (for now\footnote{In our results for budget-smoothed analysis, we will compare to all (possibly non-uniform) mechanisms.}). Roughly speaking, a mechanism is uniform if the distributions of normalized offers is essentially the same for all the sellers (see Section~\ref{section:further_important_concepts} for the exact definition).

It is noteworthy that unlike algorithm design, where one can combine two algorithms by taking the best solution outputted by these algorithms, naively combining two mechanisms in such way typically does not result in a truthful mechanism, which motivates us to search for a new mechanism that satisfies the desiderata in our main question.

With the above motivation, we initiate the study of {\em beyond worst-case budget-feasible mechanism design}, and we also make the small-bidder assumption given its wide applicability in practice (see~\cite[Section 10]{anari2018budget}). In the next two subsections, we give an overview of our results. In terms of the significance, we believe the first main result (instance optimality), which compares our new mechanism with uniform mechanisms, is more significant from the practical perspective, and the techniques are arguably not complicated and hence can be applied in practice. The second main result (budget-smoothed analysis), which compares our new mechanism with the general (possibly non-uniform) mechanisms is more interesting from the theoretical perspective. We believe these two results complement each other, and we hope these results could encourage researchers in the broad area of mechanism design to examine the worst-case optimal mechanisms for their mechanism design problems through beyond-worst-case lens and design even better mechanisms with improved beyond-worst-case performance.

\subsection{Main result I: instance optimality}
The first result of this paper is the design and (theoretical and empirical) analysis of a new natural mechanism. We prove that our new mechanism performs at least as good as any uniform mechanism on every instance (Theorem~\ref{thm:random_sample_greedy}).
\begin{theorem}[Instance-optimality against uniform mechanisms] \hfill\\
We give a computationally efficient,
truthful and strictly budget-feasible randomized mechanism, that, on  every instance of the budget-feasible mechanism design problem with additive buyer's utility function and small sellers, achieves ${\ge (1-o(1))}$ of the expected utility of any uniform mechanism.
\end{theorem}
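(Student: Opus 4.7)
The plan is a ``random-sample-and-deploy'' mechanism in the spirit of Goldberg--Hartline. Randomly partition the $n$ sellers into two halves $A$ and $B$ by independent fair coin flips. From the reported costs on side $A$ together with the public utilities, compute the empirically-best uniform mechanism $M_A$ for the sub-instance $(A, B/2)$ and deploy $M_A$ on $B$; symmetrically deploy $M_B$ on $A$. (A variant uses a small sample of size $\varepsilon n$ for learning only, excluded from the allocation; this trades concentration quality against utility lost on the sample, and should give the same $1-o(1)$ guarantee.) The ``greedy'' component is the fact that, for fixed utilities and a fixed offer distribution, the uniform mechanism being deployed picks sellers in a natural utility-sorted order subject to the drawn price threshold and remaining budget.

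Truthfulness and budget-feasibility should be immediate. A seller in $B$ faces a uniform mechanism whose parameters depend only on reports from $A$, so truthful reporting is dominant by the truthfulness of the underlying uniform mechanism; the symmetric argument handles $A$. Budget-feasibility follows because each half is run with budget $B/2$, summing to $B$. The core claim is the competitive bound: for any uniform mechanism $M^*$ with expected utility $U^*$ on the full instance $(I,B)$, the deployed mechanism achieves $(1-o(1))\,U^*$ in expectation. This will rest on two concentration steps. First, the small-seller assumption together with a Hoeffding-type bound implies that the utility and the cost of $M^*$ restricted to a random half are tightly concentrated around $U^*/2$ and $B/2$ respectively, so $M^*$'s true utility on side $A$ under budget $B/2$ is at least $(1-o(1))\,U^*/2$. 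Second, since $M_A$ is chosen to be optimal on the empirical half $A$, it matches or exceeds $M^*$'s utility on $A$, and by a symmetric concentration argument its deployed utility on $B$ is within a $(1-o(1))$ factor of its empirical utility on $A$. Summing the two halves yields $(1-o(1))\,U^*$.

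The main obstacle is the uniform-convergence step: we need that \emph{simultaneously for every} uniform mechanism, empirical utility on a random half is close to true expected utility on the whole instance. The class of uniform mechanisms is infinite (indexed by an arbitrary distribution over offered prices), so a naive union bound fails. I would address this by quantizing the class to a $\mathrm{poly}(n)$-size net --- discretizing offered prices on a multiplicative grid between the minimum and maximum possibly-winning price, and rounding probabilities to granularity $1/\mathrm{poly}(n)$ --- and showing that the rounding loses only $o(1)$ of the utility thanks to smallness of sellers. A Chernoff tail bound together with a union bound over the net then yields uniform convergence.

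A secondary subtlety that will need care is that a uniform mechanism tight against the budget on $(I,B)$ may exceed $B/2$ when restricted to a random half. Introducing a small budget slack --- running each half with $(1-\varepsilon)B/2$ and letting $\varepsilon\to 0$ with $n$ --- should absorb these fluctuations at an $o(1)$ cost in utility, and provides the ``strictly budget-feasible'' guarantee stated in the theorem.
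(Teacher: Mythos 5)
Your high-level plan --- random half-and-half partition, learn the best uniform rule from one half, deploy it on the other --- is indeed the paper's approach (it is essentially Mechanism~\ref{mech:RS_greedy}). However, two of its key technical steps fail as you have described them.

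The claim that the small-seller assumption plus Hoeffding yields tight concentration of utility on a random half is incorrect. The small-bidder assumption only bounds each seller's \emph{cost} as $o(B)$; it places no constraint whatsoever on sellers' \emph{utilities}. A single seller with cost $o(B)$ could contribute a constant fraction of the total utility, in which case the utility of a random half is far from concentrated around half the total. The paper repairs this with an extra pre-processing step you are missing: it \emph{pre-purchases} the top $C$ highest-utility sellers (affordable precisely because each costs $o(B)$), so that every remaining seller's utility is at most a $1/C$ fraction of the pre-purchased utility $U(T)$. Only after this bounded-influence normalization does Hoeffding give the multiplicative concentration you need, and even then an additional \emph{truncation} step is required to discard a lower price threshold $p_1$ that catches too little utility mass (relative to $U(T)/(\eta C)$) for the tail bound to kick in.

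Your resolution of the uniform-convergence step is also incomplete. A multiplicative price grid together with $1/\mathrm{poly}(n)$ rounding of probabilities parameterizes a net over monotone allocation functions that is super-polynomial in size (roughly $\binom{K+L}{K}$ for $K$ grid levels and $L$ probability levels), while the per-rule Hoeffding failure probability after pre-purchasing is only of order $e^{-\Omega(C)}$ for the fixed constant $C$; a naive union bound does not close. The paper's crucial structural observation --- which your argument does not exploit --- is that the \emph{instance-optimal} uniform allocation rule produced by its greedy procedure is always a $\le 2$-step function, i.e.\ a lottery over at most two posted prices, parameterized by just three scalars $(t,p_1,p_2)$. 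This collapses the class one needs to cover to essentially the set of single posted prices, and even then the union bound is taken over a multiplicative grid of utility-mass thresholds $(1+\delta)^i \tau U(T)$ chosen so the failure probabilities form a converging (doubly exponentially decaying) series. Finally, a smaller issue: your budget slack of $(1-\varepsilon)B/2$ only shrinks the overflow probability but does not yield \emph{strict} budget feasibility; the paper instead applies the estimated allocation and payment rules sequentially to the other half and halts at a hard budget threshold, which is what guarantees budget feasibility on every realization, not just with high probability.
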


Moreover, we empirically evaluate our mechanism on many realistic instances and observe that it beats the worst-case $1-1/e$ competitive ratio by a large margin.
\paragraph{Empirical analysis}
Specifically, we compare the performance of our mechanism, the open clock auction~\cite{EG14}, and~\cite{anari2018budget}'s mechanism on synthetic instances (see Section~\ref{section:sythetic} for details).
We observe that our mechanism and the open clock auction outperform~\cite{anari2018budget}'s worst-case optimal mechanism on all synthetic instances by a large margin.
In the instances where the distribution of sellers' cost-per-utility is multi-modal\footnote{We note that multi-modal sellers' distribution is possible in the real world. For example, in an international market, the average cost-per-utility of sellers in a country could differ from that in another country because of the difference of resources/technology between different countries.}, our mechanism outperforms both other mechanisms significantly (recall that we indeed prove that it is always optimal).

\subsubsection*{Our mechanism in a nutshell}
An idealized version of our mechanism, where we know the market statistics (i.e., the empirical distribution of sellers' types\footnote{A seller's type is specified by his cost and utility.}), has the following nice interpretation: each seller is independently offered one of two possible prices, and can choose to accept or reject the offer she receives.
Knowing the market statistics is a reasonable assumption in many cases in practice, e.g.~when the buyer has access to historical bids. 
In general, when the statistics are not known, we can randomly partition the sellers into two subsets, and compute prices for each half based on market statistics estimated from truthful reporting of costs from the other half.

The main novelty of our mechanism is the design of its idealized version---a greedy-type uniform ``mechanism'' (Mechanism~\ref{mech:greedy}), which can be interpreted as a probabilistic combination of at most two uniform prices per utility. We prove that this greedy ``mechanism'' is instance-optimal compared to all the uniform mechanisms by a neat greedy exchange argument. We are surprised that despite being such a natural ``mechanism'' (from the information-theoretic point of view), Mechanism~\ref{mech:greedy} has never been studied in the literature to our best knowledge.

In the random partitioning step for estimating the market statistics, the technical part is how to control the noise caused by random partitioning (overly large noise could ruin the budget feasibility of the mechanism without giving up a significant fraction of utility). Thanks to the simple form of our idealized mechanism, we are able to succinctly discretize the space of candidate allocation rules. Moreover, in order to upper bound the influence of each individual seller during the random partitioning, we truncate the allocation rule, which does not lose much utility because of small-bidder assumption. By a careful probabilistic analysis, we show that combining these techniques is sufficient to approximate our idealized greedy mechanism within negligible error. Therefore, the approximate version of the greedy mechanism, which is our final mechanism (Mechanism~\ref{mech:RS_greedy}), is (nearly) optimal on every instance compared to any uniform mechanism.

\subsection{Main result II: budget-smoothed analysis}
We have shown that empirically our mechanism's performance on realistic instances is much better than the $1-1/e$ competitive ratio on the ``worst-case instance'', which suggests that optimality on the ``worst-case instance'' is a weak notion that fails to capture better-than-worst-case performance. We also have shown that our mechanism beats all the uniform mechanisms on ``every instance'', but as we explained before, we restricted our attention to the class of uniform mechanisms, because it is unreasonable to compare with the class of non-uniform mechanisms on ``every instance'', which suggests that optimality on ``every instance'' is somewhat too strong if we hope to compare our mechanism with the more general non-uniform mechanisms.

Thus in addition to our first result, we strike a reasonable middle ground between ``worst-case instance'' and ``every instance'' by examining our mechanism under the {\em budget-smoothed analysis} framework recently introduced in~\cite{RZ20} in the context of submodular maximization. This framework gives a reasonable notion of beyond-worst-case instances that allows us to theoretically compare our mechanism to all the (even non-uniform) mechanisms.

Briefly (see the formal definition in Section~\ref{sec:budget-smoothed-model}), the budget-smoothed analysis framework is a semi-adversarial model: We first pick a budget distribution and a mechanism, and then the adversary, who knows the mechanism and the budget distribution, chooses a single worst-case market, and finally we sample a budget from the distribution and measure the mechanism's {\em expected competitive ratio} (formally defined in Section~\ref{sec:budget-smoothed-model}) on the adversarially chosen market, where the expectation is over the randomness of the budget distribution and (potentially) the mechanism itself. (The motivation of the budget-smoothed analysis in the context of budget-feasible mechanism design deserves an in-depth discussion, which we defer to Section~\ref{sec:smoothed_motivation} due to the interest of space.)

We show the following fundamental results in the budget-smoothed analysis model.

\subsubsection*{Optimal mechanism and worst-case markets for any budget distribution}
We prove that our mechanism is optimal (see Definition~\ref{def:budget_smoothed_optimal}) among all the (not necessarily uniform) mechanisms on the worst-case market for any budget distribution\footnote{It is particularly interesting that our mechanism, which does not require any knowledge of the budget distribution, is optimal even when compared with the mechanisms that know the budget distribution. In other words, our mechanism intrinsically adjusts itself to the budget distribution optimally.} (Theorem~\ref{thm:worst_distribution}), and moreover, the expected competitive ratio of our mechanism is guaranteed to be strictly better than $1-1/e$ for every nontrivial budget distribution regardless of the market\footnote{Moreover, in Section~\ref{section:numerical_computing_ratios}, we formulate a (non-convex) mathematical program that computes the expected competitive ratio on the worst-case market for any given budget distribution. We solve this program for various distributions and observe nonnegligible improvement over $1-1/e$.} (Theorem~\ref{thm:beating}). In contrast, given any bounded range of budgets, we construct a {\em single} market where~\cite{anari2018budget}'s worst-case optimal mechanism cannot beat the worst-case $1-1/e$ competitive ratio {\em for any budget in this range} (Theorem~\ref{thm:hard_instance_for_agn}), which exhibits a strong separation between our mechanism and ~\cite{anari2018budget}'s mechanism.

Our proof of the optimality result is conceptually appealing: We observe that once we fix an arbitrary budget distribution, determining the worst-case market and the optimal mechanism is a min-max game between the adversary and the mechanism designer, in which the adversary tries to give a market that minimizes the mechanism's performance, and the mechanism designer hopes to design a mechanism that performs best on the adversarially chosen market. We analytically solve the equilibrium for this min-max program, and the solution comes with a characterization of the worst-case markets for any given budget distribution (Theorem~\ref{thm:worst_distribution}).

\subsubsection*{Robust hardness result against any budget distribution}
On the negative side, we prove a robust hardness result that shows for any budget distribution and any mechanism, there is a market on which the mechanism's expected competitive ratio is bounded away from 1 (specifically, at most 0.854 --- see Theorem~\ref{thm:lower_bound}). In comparison, the previous worst-case hardness result~\cite{anari2018budget} is very sensitive to budget perturbation: If we perturb (i.e., multiply) the budget of the worst-case instance by a significant factor like $2.5$, then simply setting a single uniform price (i.e.,~\cite{EG14}'s open clock auction) will achieve $100\%$ of the optimal utility.
\section{Preliminaries}\label{section:preliminaries}
\subsection{Problem setup}\label{section:problem_setup}
In the budget-feasible mechanism design (a.k.a., procurement auction) problem with additive utility, there is a market $I$ consisting of one buyer and $n$ sellers, and each seller $i$ has an item with a public utility $u_i\in\mathbb{R}_{\ge0}$ and a private cost $c_i\in\mathbb{R}_{\ge0}$. The buyer has a budget $B\in\mathbb{R}_{\ge0}$ and wants to buy items from the sellers. The goal of the budget-feasible mechanism design problem is to design a truthful mechanism that maximizes buyer's total utility while keeping the total payment to sellers within the budget, which we now explain more formally.

\paragraph{Truthful mechanisms} A mechanism takes as input the buyer's budget $B$, the sellers' public utilities $u_i$'s and the private costs\footnote{We note that there are mechanisms that do not directly ask the sellers to report their costs such as clock auctions. However, our definition is without loss of generality by the revelation principle.} $c_i$'s reported by the sellers, and then outputs which items should be allocated to the buyer and how much the buyer should pay to each seller. Formally, the output of a (randomized) mechanism, i.e., allocation and payments, can be represented by\footnote{If the mechanism is deterministic, $g$ and $Q_{g}$ output the deterministic allocations and deterministic payments, respectively, and if the mechanism is randomized, they output the expected allocations and expected payments.} an \textit{allocation function} $g:\R_{\ge0}^n\to[0,1]^n$ and a \textit{payment function} $Q_{g}:\R_{\ge0}^n\to\R_{\ge0}^n$, where $g$ takes the sellers' cost-per-utility $\gamma_i:=c_i/u_i$'s as input $\vec{\gamma}$, and outputs (the expectation of) the fraction of each item that is allocated to the buyer (and hence, the expected utility the buyer gets from seller $i$ is the $i$-th coordinate of the output of $g$, which we denote by $g(\vec{\gamma})_i$, times $u_i$, and the expected cost of seller $i$ is $g(\vec{\gamma})_i\cdot c_i$), and $Q_{g}$ takes the same input and outputs the associated (expected) payment-per-utility for each item (namely, the expected payment to seller $i$ is the $i$-th coordinate of the output of $Q_{g}$, which we denote by $Q_{g}(\vec{\gamma})_i$, times $u_i$).

A deterministic mechanism is \textit{truthful} if reporting the true $\gamma_i$ always maximizes the net profit for each seller $i\in[n]$, namely, for any $\gamma_{-i}\in\mathbb{R}_{\ge0}^{n-1}$ (where $\gamma_{-i}$ denotes all the $\gamma_j$'s except $\gamma_i$), for all $z\in\mathbb{R}_{\ge0}$,
\begin{equation}\label{eq:truthfulness}
    Q_{g}(\gamma_i,\gamma_{-i})_i\cdot u_i-g(\gamma_i,\gamma_{-i})_i\cdot c_i\ge Q_{g}(z,\gamma_{-i})_i\cdot u_i-g(z,\gamma_{-i})_i\cdot c_i.
\end{equation}

In general, a mechanism can be randomized, and a \textit{randomized mechanism} is simply a distribution of deterministic mechanisms. In this light, we say a randomized mechanism is \textit{truthful-in-expectation} if reporting the true $\gamma_i$ only maximizes seller $i$'s net profit in expectation over the randomness of the mechanism, i.e., Eq.~\eqref{eq:truthfulness} holds in expectation for the randomized mechanism.

The celebrated Myerson's lemma~\cite{myerson1981optimal} asserts that (i) an allocation function $g$ can be implemented as a truthful-in-expectation mechanism if and only if $g$ is \textit{monotone}, i.e., for all $i\in[n]$ and any $\gamma_{-i}\in\mathbb{R}_{\ge0}^{n-1}$, $g_i(\cdot):=g(\cdot,\gamma_{-i})_i$ is a non-increasing function, and (ii) there exists a unique payment function $Q_{g_i}$ associated with $g_i$, which is given by $Q_{g_i}(\gamma):=\gamma\cdot g_i(\gamma)+\int_{\gamma}^\infty g_i(z)dz$.

\paragraph{Budget feasibility} Note that we want the randomized mechanisms to \textit{strictly satisfy} the budget constraint, i.e., every deterministic mechanism in the support of the distribution has to satisfy the following budget constraint
$$\sum_{i\in[n]}Q_{g_i}(\gamma_i)u_i\le B,$$
and our proposed randomized mechanism will indeed strictly satisfy the budget constraint.

The goal of budget-feasible mechanism design is to design a (randomized) mechanism, that is (truthful/truthful-in-expectation) and budget-feasible, to maximize the buyer's (expected) total utility $\sum_{i\in[n]}g_i(\gamma_i)u_i$.

If the sellers' costs are public, the problem becomes the well-known knapsack problem, and we call the optimal utility of this knapsack problem the {\em non-IC} (\ie, without the incentive compatible constraints) optimal utility. The standard performance measure for a (randomized) mechanism $\mathcal{M}$ on the instance $(I,B)$ is the {\em competitive ratio}, i.e.~the ratio $\mathcal{R}_{\mathcal{M}}(I, B)$ between the (expected) total utility (over $\mathcal{M}$'s randomness) achieved by $\mathcal{M}$ and the non-IC optimal utility.

Finally, we make a \textit{small-bidder assumption}~\cite{anari2018budget}: for budget $B$, we require that each seller's cost is at most $o(B)$.

\subsubsection{Further important concepts}\label{section:further_important_concepts}
\paragraph{Uniform mechanism} We call a mechanism\footnote{For example, the idealized versions of~\cite{anari2018budget}'s mechanism (Mechanism~\ref{mech:AGN}), \cite{EG14}'s mechanism and our mechanism (Mechanism~\ref{mech:greedy}) are all uniform mechanisms.} with allocation function $g$ \textit{uniform} if given any $\gamma_i$'s, there exists a 1-dimensional allocation function $f:\R_{\ge0}\to[0,1]$ such that for all $i\in[n]$, it holds that $g_i(\cdot)=f(\cdot)$. Otherwise, we call the mechanism \textit{non-uniform}.

\paragraph{Fractional versus indivisible}
We mentioned that the allocation function specifies the fraction of item purchased from each seller. This makes sense when the item is fractional, \eg, the item is the time of a worker. However, there are settings where the items are indivisible, and then, the image of an allocation function should be $\{0,1\}^n$ instead. Under small-bidder assumption, an indivisible item procurement problem can be reduced to a fractional problem. Specifically, there is a rounding procedure from~\cite[Supplemental Material, Section 7]{anari2018budget} that we can directly apply.
\begin{lemma}[{\cite[Supplemental Material, Section 7]{anari2018budget}}]\label{lem:agn_rounding}
Let $\tilde{x}_1,\dots,\tilde{x}_n$ be the fractional allocations and $\tilde{p}_1,\dots,\tilde{p}_n$ be the associated payments. Under small-bidder assumption, there is a rounding procedure that outputs integral allocations $x_1,\dots,x_n$ and payments $p_1,\dots,p_n$, which achieves approximately the same expected utility as the fractional allocations, while preserving individual rationality, truthfulness in expectation, and strict budget feasibility.
\end{lemma}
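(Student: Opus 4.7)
The plan is to prove the lemma via independent randomized rounding augmented with a budget slack and a fallback step. The natural starting point is: for each seller $i$ independently, set $x_i=1$ with probability $\tilde{x}_i$ and pay per unit utility $\tilde{p}_i/\tilde{x}_i$ when accepted, zero otherwise. With this sampling, $\E[x_i]=\tilde{x}_i$ and $\E[p_i]=\tilde{p}_i$ coordinate-wise, and since the fractional allocation $\tilde{x}_i(\cdot)$ is monotone non-increasing (as it came from a truthful fractional mechanism), the expected integral allocation as a function of $\gamma_i$ is monotone, so Myerson's lemma gives truthfulness in expectation. Individual rationality is immediate because the fractional per-unit payment already dominates $c_i$. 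The only missing property is \emph{strict} budget feasibility, which we now force.

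To obtain strict budget feasibility, I would run the fractional mechanism with a slightly reduced budget $B' = (1-\delta) B$ for some $\delta = o(1)$, so that the expected post-rounding total payment is at most $B'$. The key leverage is the small-bidder assumption: since each seller's cost, and hence each sampled payment $p_i u_i$, is bounded by $o(B)$, a standard concentration inequality (Hoeffding or Chernoff for bounded independent sums) yields $\P[\sum_i p_i u_i > B] = o(1)$ as long as $\delta^2 B \gg \max_i c_i$, which is feasible with $\delta = o(1)$ under the small-bidder hypothesis.

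The last step is a fallback: if the sampled realization overflows budget $B$, the mechanism allocates nothing and pays nothing; otherwise it keeps $(x_i,p_i)$. This preserves strict budget feasibility and individual rationality trivially. The delicate point is restoring truthfulness in expectation, because the overflow event depends on $\gamma_i$ through $\tilde{p}_i$ and can in principle break monotonicity. The cleanest fix is to re-derive payments via Myerson's formula applied to the \emph{post-fallback} expected allocation curve $\hat{x}_i(\gamma_i) := \tilde{x}_i(\gamma_i)\cdot \P[\text{no overflow}\mid \gamma_i]$. A coupling argument shows $\hat{x}_i$ is monotone non-increasing in $\gamma_i$: decreasing $\gamma_i$ can only weakly increase $\tilde{x}_i$, and although it may also raise $\tilde{p}_i$ and hence the overflow probability, the multiplicative slack from $\delta$ dominates this second-order effect. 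Since the overflow probability is uniformly $o(1)$, $\hat{x}_i$ and $\tilde{x}_i$ agree up to a $(1-o(1))$ factor, and the Myerson payments induced by $\hat{x}_i$ differ from the naive ones only by $o(1)\cdot \tilde{p}_i$, so the lemma's ``approximately the same expected utility'' conclusion holds.

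The main obstacle is this last point: naively truncating to the budget destroys monotonicity, while leaving the mechanism alone violates strict budget feasibility. The resolution is the synergy of three ingredients: the small-bidder assumption enables sharp concentration, an $o(1)$ budget slack drives the overflow probability to $o(1)$ while giving up only an $o(1)$ fraction of utility, and re-deriving payments from Myerson's lemma repairs truthfulness \textbf{exactly} rather than approximately. Quantitatively, one should choose $\delta$ just large enough so that $\delta^2 B / \max_i c_i \to \infty$; the small-bidder bound $\max_i c_i = o(B)$ makes this compatible with $\delta = o(1)$, yielding a total utility loss of $o(1)\cdot \sum_i \tilde{x}_i u_i$, as claimed.
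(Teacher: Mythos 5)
The paper does not present its own proof of this lemma — it imports it verbatim from the supplemental material of Anari, Goel, and Nikzad (2018) — so there is no in-paper argument to compare against. Judged on its own merits, your outline (independent randomized rounding, budget slack $(1-\delta)B$, Hoeffding concentration via the small-bidder assumption, a fallback on overflow, then Myerson repair) is a sensible skeleton, and the claims $\E[x_i]=\tilde{x}_i$, $\E[p_i]=\tilde{p}_i$, and $\P[\text{overflow}]=o(1)$ are fine. However, the two delicate points you flag are precisely where the writeup falls short, and in both cases the reason you offer is not actually the one that makes the argument go through.

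First, the monotonicity of $\hat{x}_i(\gamma_i)=\tilde{x}_i(\gamma_i)\cdot\P[\text{no overflow}\mid x_i=1,\gamma_i]$ has nothing to do with ``the multiplicative slack from $\delta$ dominating a second-order effect.'' Monotonicity is an exact, pointwise condition that no amount of slack can repair if the product were ever to tick up. What actually saves you is that the conditional per-unit realized payment $Q_f(\gamma)/f(\gamma)=\gamma+\frac{\int_\gamma^\infty f}{f(\gamma)}$ is non-decreasing in $\gamma$ for \emph{any} monotone $f$ (its derivative is $-f'(\gamma)\int_\gamma^\infty f/f(\gamma)^2\ge 0$), so decreasing $\gamma_i$ weakly decreases the realized payment conditioned on acceptance, hence weakly \emph{increases} $\P[\text{no overflow}\mid x_i=1]$. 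Both factors of $\hat{x}_i$ then move in the same direction. You should also check that the overflow event's dependence on $\gamma_i$ \emph{through the other sellers' payments} is benign: in \textsc{Random-Sampling-Greedy}, if $i\in X$, then $\hat{f}_X$ applied to $Y$ is computed from $\gamma_i$, so the other group's payments are not $\gamma_i$-independent, and this has to be controlled separately (e.g., by a truncation that bounds any single seller's influence, which the paper itself introduces for exactly this kind of reason).

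Second, there is a circularity you do not resolve: the fallback event is defined using the \emph{naive} realized per-unit payments $\tilde{p}_i/\tilde{x}_i$, but after re-deriving payments from $\hat{x}_i$ via Myerson, the actually-paid amounts are $Q_{\hat{x}_i}(\gamma_i)/\hat{x}_i(\gamma_i)$ per unit, and nothing in your argument says these obey the budget. Strict budget feasibility must hold for the payments you actually charge, not the ones used to trigger the fallback. This can be patched by observing that, writing $\hat{x}_i=\tilde{x}_i\cdot q_i$ with $q_i(\cdot)$ non-increasing (by the argument above), $\frac{Q_{\hat{x}_i}(\gamma)}{\hat{x}_i(\gamma)}=\gamma+\frac{\int_\gamma^\infty \tilde{x}_i q_i}{\tilde{x}_i(\gamma)q_i(\gamma)}\le \gamma+\frac{\int_\gamma^\infty\tilde{x}_i}{\tilde{x}_i(\gamma)}=\frac{Q_{\tilde{x}_i}(\gamma)}{\tilde{x}_i(\gamma)}$, so the repaired realized payments are pointwise at most the naive ones, and the fallback's budget guarantee transfers. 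But as written you assert the conclusion without this crucial step. In short: the architecture is right, but the two places where the proof could genuinely break are exactly the two places where you wave your hands, and the hand-wave invokes the wrong mechanism ($\delta$-slack) for a property (monotonicity, strict feasibility of the repaired payments) that requires an exact argument.
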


Henceforth, given this reduction, \textbf{unless specified otherwise, we only consider divisible items} in this paper, and the results apply to indivisible items as well.

\subsection{Budget-smoothed analysis}\label{sec:budget-smoothed-model}
Budget-smoothed analysis is a semi-adversarial model introduced in~\cite{RZ20} in the context of submodular optimization. In our setting, given any fixed distribution of budgets $\mathcal{D}$, the performance metric for a mechanism $\mathcal{M}$ in the budget-smoothed analysis is the \textit{$\mathcal{D}$-budget-smoothed competitive ratio}: the worst possible ratio between the utility achieved by $\mathcal{M}$ and the non-IC optimum in expectation (over budget distribution and mechanism’s randomness), \ie,
\[
    \min_{I} \underset{B\sim \mathcal{D}}{\mathbb{E}}[\mathcal{R}_{\mathcal{M}}(I, B)],
\]
where $\underset{B\sim \mathcal{D}}{\mathbb{E}}[\mathcal{R}_{\mathcal{M}}(I, B)]$ is the \textit{expected competitive ratio} of $\mathcal{M}$ on market $I$ for budget distribution $\mathcal{D}$.
Fixing an arbitrary budget distribution $\mathcal{D}$, the goal of the mechanism designer is to design a mechanism $\mathcal{M}$ that achieves optimal $\mathcal{D}$-budget-smoothed competitive ratio, and hence, we have a max-min game between the mechanism designer and the adversary
\[
    \max_{\mathcal{M}}\min_{I} \underset{B\sim \mathcal{D}}{\mathbb{E}}[\mathcal{R}_{\mathcal{M}}(I, B)].
\]
In other words, we are interested in the expected outcome of the following budget-smoothed analysis game:
\begin{mdframed}
\begin{center}
{ Budget-smoothed analysis game}\\[3mm]
\end{center}
\begin{enumerate}
    \item Fix a distribution of budgets $\mathcal{D}$. The mechanism designer, who knows the budget distribution $\mathcal{D}$, picks a mechanism\footnote{Note that the mechanism designer knows $\mathcal{D}$ and hence is allowed to choose a mechanism $\mathcal{M}$ that is tailored to $\mathcal{D}$, i.e., the mechanism designer knows $\mathcal{D}$ and then specifies what $\mathcal{M}$ does for each budget $B$ in the support of $\mathcal{D}$ as she likes. Interestingly, as we will show later, our optimal mechanism does not need any knowledge of $\mathcal{D}$.} $\mathcal{M}$.
    \item The adversary, who knows the budget distribution $\mathcal{D}$ and the mechanism $\mathcal{M}$ chosen by the mechanism designer, chooses a worst-case market\footnote{Note that the adversary chooses the market $I$ after knowing $\mathcal{D}$ and $\mathcal{M}$. For example, if the mechanism designer chose the silly mechanism in Example~\ref{ex:silly_mechanism} that hard-codes some market $I_1$, this mechanism will perform poorly in the budget-smoothed analysis game, because the adversary can choose a completely different market $I_2$ after observing the mechanism chosen by the mechanism designer.} $I$ (sellers' costs and utilities).
    \item Then, a budget $B$ is drawn at random from $\mathcal{D}$.
    \item Finally, the mechanism designer runs $\mathcal{M}$ on the instance $(I,B)$ (and compare the performance to the non-IC optimum).
\end{enumerate}
\end{mdframed}
\begin{definition}\label{def:budget_smoothed_optimal}
A mechanism $\mathcal{M}^*$ is \textit{worst-case optimal for a budget distribution $\mathcal{D}$} if for any other mechanism $\mathcal{M}$, $\min_{I} \underset{B\sim \mathcal{D}}{\mathbb{E}}[\mathcal{R}_{\mathcal{M}^*}(I, B)]\ge \min_{I} \underset{B\sim \mathcal{D}}{\mathbb{E}}[\mathcal{R}_{\mathcal{M}}(I, B)].$
\end{definition}

We refer the interested readers to~\cite{RZ20} for the original motivation and naming of the budget-smoothed analysis model.
In our context, we can think of the $\mathcal{D}$-budget-smoothed competitive ratio as the average competitive ratio of multiple employers operating in the worst-case market $I$ with different budgets (the empirical distribution of their budgets is $\mathcal{D}$), and the employers' budgets can easily vary by an order of magnitude because of different sizes of business (as in the Microworkers example in Table~\ref{table:data}). Given such budget distribution supported on a wide range, even if the market $I$ is worst-case, the ``average'' employers who use an ``average'' budget could potentially enjoy a competitive ratio that is significantly better than the worst-case optimal competitive ratio $1-1/e$ (and by Markov inequality, most employers achieve strictly better-than-$(1-1/e)$ competitive ratio).

\section{Instance-optimality against uniform mechanisms}~\label{section:instance_optimal}
In this section, we first derive a uniform ``mechanism'' in the complete-information setting, where the sellers' private costs are known. To be precise, the complete-information uniform ``mechanism'' applies a single monotone allocation function and the associated Myerson's payment function to all the sellers and guarantees strict budget-feasibility just like a normal uniform budget-feasible mechanism, and the only caveat is that to compute the allocation function, the complete-information uniform ``mechanism'' needs to know all the sellers' costs. This complete-information uniform ``mechanism'' is essentially a greedy procedure. Then, we show that this greedy ``mechanism'' is \textit{instance-optimal} compared to all the uniform budget-feasible mechanisms.
That is, for every market and every budget, compared to all the uniform budget-feasible mechanisms that satisfy Myerson's characterization of truthful-in-expectation mechanisms, the greedy ``mechanism'' achieves the optimal buyer's utility.

Apparently, this greedy ``mechanism'' by itself is not very useful, since we eventually want a normal mechanism that works in the setting where sellers' private costs are hidden. Therefore, we design a randomized mechanism to approximate the greedy ``mechanism'', i.e., our randomized mechanism is nearly as good as the greedy ``mechanism'' on every instance. On a high level, this is done by first randomly partitioning the market into two halves, and then applying our greedy ``mechanism'' on each half to get an allocation function and the associated payment function, and finally applying the allocation and payment function we get from one half to the other half in a sequential fashion until certain budget threshold is met.

\subsection{Greedy is an instance-optimal uniform ``mechanism''}
In this subsection, we describe the greedy ``mechanism'' $\textsc{Greedy}$ in the complete information setting, where the sellers' private costs are given, and prove that it is instance-optimal compared to all the uniform truthful-in-expectation budget-feasible mechanisms. The pseudo-code of \textsc{Greedy} is given in Mechanism~\ref{mech:greedy}.

It works as follows -- Suppose that the sellers are grouped and sorted according to their cost-to-utility ratio $c/u$. \textsc{Greedy} searches for the best monotone allocation function (and given the allocation function, the payment is determined by Myerson's lemma). It does this iteratively. In each iteration, suppose that it has bought all the items from the sellers with cost-to-utility ratio at most $c_{i-1}/u_{i-1}$, then it will choose the sellers whose $c/u$ ranges from $(c_{i-1}/u_{i-1})_+$ to  $c_j/u_j$ for some $j\ge i$, and simultaneously increase the fraction bought from these sellers until either they are fully purchased or the budget is exhausted. 

We now explain how \textsc{Greedy} selects the next $j$ in each iteration. 
For each candidate seller $k$, it computes the \textit{marginal utility per marginal payment} (denoted by $e_{i,k}$) achieved by simultaneously increasing the fraction of items purchased from all the buyers in $i,\dots,k$.
\textsc{Greedy} then greedily selects $j$ to be the index that maximizes the marginal utility per marginal payment. 

\begin{algorithm}[ht]
\SetAlgoLined
\SetKwInOut{Input}{Input}
\SetKwInOut{Output}{Output}
\Input{$(c_i,u_i)$ for $i\in[n]$, $B$.}
\SetAlgorithmName{Mechanism}~~
Merge the sellers with equal cost-to-utility $\gamma_i:=\frac{c_i}{u_i}$ into one seller by summing up their costs and utilities, and let $u_0$ be the utility of the merged seller with cost $0$ ($u_0=0$ if there is no such seller) and let $\gamma_0=0, f(0)=1$. Sort all non-zero-cost merged sellers such that the $\gamma_i$ are non-decreasing, and let $n'$ be the number of non-zero-cost merged sellers\;

$i \leftarrow 1$\;
 \While{$i\le n'$}{
  Choose $j\in\{i,i+1,\dots,n'\}$ that maximizes $e_{i,j}$ ($e_{i,j}$ is defined by Eq.~\eqref{eq:e_i_j})\;
  \eIf{$B>q_{i,j}^{\max}$ \textnormal{($q_{i,j}^{\max}$ is defined in Eq.~\eqref{eq:q_i_j_max})}}{
   Let $f(\gamma)=1$ for all $\gamma\in(\gamma_{i-1},\gamma_j]$ and $B=B-q_{i,j}^{\max}$\;
   $i=j+1$\;
   }{
   Let $f(\gamma)=\frac{B}{q_{i,j}^{\max}}$ for all $\gamma\in(\gamma_{i-1},\gamma_j]$ and break\;
  }
 }
 \For{each original seller $i\in[n]$}{
  Purchase $f(\gamma_i)$ fraction of seller $i$'s item and pay $u_i Q_f(\gamma_i)$, where $Q_f$ is the payment rule corresponding to $f$ given by Myerson's lemma, i.e., $Q_f(\gamma)=\gamma\cdot f(\gamma)+\int_{\gamma}^\infty f(z)dz$\;
 }
 \caption{\textsc{Greedy}}
 \label{mech:greedy}
\end{algorithm}
\begin{theorem}\label{thm:greedy_instance_optimal}
For divisible items, \textsc{Greedy} decides the allocation and the payment for all the sellers using a single monotone allocation function and the associated Myerson's payment function, and it is strictly budget-feasible, and moreover, on every instance, \textsc{Greedy} achieves buyer's utility no less than any uniform truthful-in-expectation budget-feasible mechanism.
\end{theorem}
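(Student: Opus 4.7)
My plan splits the theorem into its three claims---monotone + Myerson structure, strict budget feasibility, and instance-optimality against uniform mechanisms---and dispatches each in turn. The core of the argument is a change-of-variables reformulation that turns any uniform mechanism's utility and payment into linear functionals of a distribution over $\gamma$-thresholds, so that instance-optimality reduces to a two-constraint linear program whose optimum is Greedy's output. Throughout I write $U(z) := \sum_{i:\gamma_i \le z} u_i$, $U_i := U(\gamma_i)$, and $C_i := \gamma_i U_i$.

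Monotonicity is immediate from the pseudocode: $f$ is assigned on disjoint intervals $(\gamma_{i-1}, \gamma_j]$, always at value $1$ except for at most one final ``else'' iteration that sets $f \equiv \beta := B/q_{i,j}^{\max} \in [0,1]$ (the branch triggers precisely when $B \le q_{i,j}^{\max}$), with $f \equiv 0$ beyond. Since the payment rule is literally Myerson's $Q_f$, Myerson's lemma certifies truthfulness-in-expectation. Strict budget feasibility follows from the identity
\begin{equation*}
\sum_i u_i Q_f(\gamma_i) \;=\; \sum_i u_i \gamma_i f(\gamma_i) + \int_0^\infty f(z) U(z)\,dz,
\end{equation*}
which, for Greedy's step function, telescopes to $\sum_k q_{i_k, j_k}^{\max} + \beta\, q_{i,j}^{\max}$---exactly the quantity tracked by the algorithm's running budget variable, and thus at most the initial $B$.

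For instance-optimality, the key move is to decompose any non-increasing $\tilde f : \R_{\ge 0} \to [0,1]$ via its layer-cake representation $\tilde f(\gamma) = \int_0^1 \mathds{1}[\gamma \le \theta_t]\,dt$, with $\theta_t$ non-increasing in $t$. A straightforward swap of sums and integrals in Myerson's payment formula yields
\begin{equation*}
\text{utility}(\tilde f) \;=\; \int_0^1 U(\theta_t)\,dt, \qquad \text{payment}(\tilde f) \;=\; \int_0^1 \theta_t\, U(\theta_t)\,dt.
\end{equation*}
Hence, writing $D$ for the law of $\theta_t$ under $t \sim \mathrm{Unif}[0,1]$, the problem faced by any uniform budget-feasible mechanism becomes $\max_D \mathbb{E}_{\theta \sim D}[U(\theta)]$ subject to $\mathbb{E}_{\theta \sim D}[\theta\, U(\theta)] \le B$. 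Since $U$ is a step function with jumps at the $\gamma_i$, the minimum-cost representative of each utility level is the left endpoint, so we may restrict $D$ to $\{\gamma_0, \dots, \gamma_{n'}\}$. The resulting two-constraint LP has optimum equal to the height of the upper concave envelope of $\{(C_i, U_i)\}$ at abscissa $B$, attained by a convex combination of the two envelope vertices that bracket $B$.

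The proof is completed by recognizing Greedy's selection rule as the classical max-slope walk along this envelope: $e_{i,j} = (U_j - U_{i-1})/(C_j - C_{i-1})$ is exactly the slope from the current envelope vertex $(C_{i-1}, U_{i-1})$ to the candidate next vertex, so $\argmax_j e_{i,j}$ identifies the next envelope vertex; and the final ``else''-branch mixing weight $\beta$ is precisely the convex-combination parameter needed to land at abscissa $B$. Thus Greedy's utility matches the LP optimum, which upper-bounds any competing uniform mechanism. I expect the only delicate technical moment to be the clean Fubini-style verification of the payment identity $\int_0^1 \theta_t U(\theta_t)\,dt$; once that identity is in place, everything else is structural and follows from standard envelope/convex-hull properties.
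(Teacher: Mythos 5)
Your proof is correct and follows a genuinely different route from the paper's. The paper proves instance-optimality by a greedy exchange argument: given any competing monotone allocation rule $g$, it locates the first $\gamma$ at which $g$ and Greedy's $f$ disagree, and shows (via a two-case analysis on whether $g>f$ or $g<f$ there) that $g$ can be locally modified to agree with $f$ without losing utility. You instead push through the reformulation that the paper only introduces later (Observations~\ref{obs:two_step}--\ref{obs:utility_and_payment}, the ``lottery of posted prices'' view) to derive the exact identities $\text{utility}(\tilde f)=\int_0^1 U(\theta_t)\,dt$ and $\text{payment}(\tilde f)=\int_0^1\theta_t U(\theta_t)\,dt$, so that every uniform budget-feasible mechanism corresponds to a distribution $D$ over the discrete set $\{\gamma_0,\dots,\gamma_{n'}\}$ with $\mathbb{E}_D[C]\le B$, and the optimum is the upper concave envelope of $\{(C_i,U_i)\}$ at abscissa $B$. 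Your identification of $e_{i,j}=(U_j-U_{i-1})/(C_j-C_{i-1})$ is verified by direct algebra (the paper's $q_{i,j}=f(\gamma_j)(C_j-C_{i-1})$ and $v_{i,j}=f(\gamma_j)(U_j-U_{i-1})$), so Greedy's argmax rule is literally the max-slope walk along the hull and the terminal $\beta$ lands exactly at abscissa $B$. The LP/convex-hull argument is arguably cleaner: it avoids the case analysis, makes the tightness structural rather than step-by-step, and it also explicitly shows instance-optimality holds even against mechanisms that are only ex-ante (rather than ex-post) budget feasible, a slightly stronger statement than the exchange argument establishes as written. The paper's exchange argument, on the other hand, is more elementary and self-contained, needing no duality or Fubini. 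Minor notational nit: your $U(z)=\sum_{i:\gamma_i\le z}u_i$ versus $\sum_{i:\gamma_i<z}u_i$ mismatch in the payment-identity integrand is harmless (it affects a null set), but worth flagging in a polished write-up.
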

\begin{proof}
First, observe that $f$ in \textsc{Greedy} is a non-increasing function, and we apply the same $f,Q_f$ to all the sellers in \textsc{Greedy}.
In each iteration of the while loop, suppose we increase the allocation function $f$'s value over $(\gamma_{i-1},\gamma_j]$ from zero to certain $f(\gamma_j)$, the payment-per-utility $Q_f(\gamma)=\gamma\cdot f(\gamma)+\int_{\gamma}^\infty f(z)dz$ should also increase for every $\gamma\le\gamma_j$. Specifically, for every $\gamma\le\gamma_{i-1}$, the $\gamma\cdot f(\gamma)$ part does not change, but the $\int_{\gamma}^\infty f(z)dz$ part increases from zero to $\int_{\gamma_{i-1}^+}^{\gamma_j} f(\gamma_j)dz=f(\gamma_j)\cdot(\gamma_j-\gamma_{i-1})$, and thus, $Q_f(\gamma)$ increases by $f(\gamma_j)\cdot(\gamma_j-\gamma_{i-1})$. For every $\gamma\in(\gamma_{i-1},\gamma_j]$, the $\gamma\cdot f(\gamma)$ part increases from zero to $\gamma\cdot f(\gamma_j)$, and the $\int_{\gamma}^\infty f(z)dz$ part increases from zero to $\int_{\gamma}^{\gamma_j} f(\gamma_j)dz=f(\gamma_j)\cdot(\gamma_j-\gamma)$, and thus, $Q_f(\gamma)$ increases by $f(\gamma_j)\cdot\gamma_j$ in total. Since the total utility of the sellers with cost-per-utility at most $\gamma_{i-1}$ is $\sum_{0\le l\le i-1}u_l$, and the total utility of the sellers with cost-per-utility in $(\gamma_{i-1},\gamma_j]$ is $\sum_{i\le l\le j}u_l$, it follows that the additional payment the mechanism makes in this iteration is
$q_{i,j}:=f(\gamma_j)\cdot(\gamma_j-\gamma_{i-1})\cdot\sum_{0\le l\le i-1}u_l+f(\gamma_j)\cdot\gamma_j\cdot\sum_{i\le l\le j}u_l$, which is at most (equal when $f(\gamma_j)=1$)
\begin{equation}\label{eq:q_i_j_max}
q_{i,j}^{\max}:=(\gamma_j-\gamma_{i-1})\cdot\sum_{0\le l\le i-1}u_l+\gamma_j\cdot\sum_{i\le l\le j}u_l,
\end{equation}
and hence, the if condition in \textsc{Greedy} ensures the budget feasibility. Moreover, observe that the additional utility \textsc{Greedy} gains in this iteration is $v_{i,j}:=f(\gamma_j)\cdot\sum_{i\le l\le j}u_l$.
Therefore, the ratio between the additional utility we gain and the additional price we pay, when we increase $f(\gamma)$ uniformly for all $\gamma\in(\gamma_{i-1},\gamma_j]$, is
\begin{equation}\label{eq:e_i_j}
    e_{i,j}:=\frac{v_{i,j}}{q_{i,j}}.
\end{equation}
In each iteration, \textsc{Greedy} selects the best $j$ that maximizes $e_{i,j}$. Now we show the instance optimality using a greedy exchange argument. Consider any other monotone allocation rule $g$ and suppose $\gamma_{i+1}$ is the smallest among all the sellers' $\gamma$'s such that $g(\gamma_{i+1})\neq f(\gamma_{i+1})$. (Such $\gamma_{i+1}$ cannot be $0$ because otherwise, letting $g(0)=1$ cannot increase the payment or decrease the utility for $g$.) Now we show how to make $g$ more consistent with $f$ without decreasing its achieved utility.

\paragraph{Case (i): $g(\gamma_{i+1})>f(\gamma_{i+1})$} Then $f(\gamma_{i+1})<1$ since $g(\gamma_{i+1})\le 1$. We now argue that $f(\gamma_i)=1$. By our choice of $\gamma_{i+1}$, $f(\gamma_i)=g(\gamma_i)\ge g(\gamma_{i+1})>f(\gamma_{i+1})$, and given that $f(\gamma_i)>f(\gamma_{i+1})$, \textsc{Greedy} prefers the items before $i+1$. Therefore it will not start buying the $(i+1)$-th item until those items are exhausted. Moreover, $f(\gamma_{i+1})$ must be strictly positive, because otherwise, $f$ does not spend as much budget as $g$.  Hence indeed $f(\gamma_i)=1$.

Hence \textsc{Greedy} must have chosen the best $e_{i+1,k}$ for some $k>i+1$, where the inequality is due to the budget feasibility of $g$. (Indeed, if $k=i+1$, then there is enough budget for \textsc{Greedy} to increase $f(\gamma_{i+1})$ to $g(\gamma_{i+1})$, since $g$ is budget-feasible.)
Let $\gamma\ge \gamma_{i+1}$ denote the largest cost-per-utility such that $g(\gamma)>0$. We can assume $\gamma=\gamma_l$ for some $l\ge i+1$ because otherwise we can truncate the extra part of $g$ while preserving its utility. Note that \textsc{Greedy} guarantees that $e_{i+1,k}\ge e_{i+1,l'}$ for all $i+1\le l'\le l$. Hence, if we decrease $g$ over $(\gamma_i,\gamma_l]$ to $0$ and use the saved budget to uniformly increase $g$ over $(\gamma_i,\gamma_k]$, the resulting utility cannot decrease. 

\paragraph{Case (ii): $g(\gamma_{i+1})<f(\gamma_{i+1})$} Suppose that \textsc{Greedy} chose the best $e_{i_1,i_2}$ for some $i_1\le i+1\le i_2$. Therefore, $f$ is a constant on $(\gamma_{i_1-1},\gamma_{i_2}]$, and by monotonicity of $g$ and our assumption that $\gamma_{i+1}$ is the first place where two allocation functions differ, it follows that $f$ is strictly larger than $g$ on $(\gamma_{i_1-1},\gamma_{i_2}]$. Since \textsc{Greedy} guarantees that $e_{i_1,i_2}\ge e_{i_1,j}$ for any $j\ge i_1$, we can decrease $g$ on $(\gamma_{i_1-1},+\infty)$ simultaneously and use the saved budget to uniformly increase $g$ on $(\gamma_{i_1-1},\gamma_{i_2}]$, which can not decrease the achieved utility. We keep doing this unless $g$ reaches $1$ on $(\gamma_{i_1-1},\gamma_{i'}]$ for some $i'\le i_2$. Then, either $f$ is $1$ on $(\gamma_{i_1-1},\gamma_{i_2}]$, and hence, $g$ becomes more consistent with $f$, or $f$ is $<1$ on this interval, in which case, we can decrease $g$ on $(\gamma_{i_1-1},+\infty)$ to $0$ and use the saved budget to uniformly increase $g$ on $(\gamma_{i_1-1},\gamma_{i_2}]$.
\end{proof}
Since~\cite{anari2018budget} showed for a single budget, the uniform mechanism given by Mechanism \ref{mech:AGN} (also with knowledge of all $c_i$'s) has worst-case competitive ratio $1-1/e$ (and there is a matching hardness result), Theorem~\ref{thm:greedy_instance_optimal} implies that \textsc{Greedy} has worst-case competitive ratio $1-1/e$.

\subsection{Greedy allocation rule: a lottery of two posted prices}
Before we present the final randomized mechanism, we observe some nice properties of \textsc{Greedy} which will help us analyze \textsc{Greedy} in a more intuitive way. The key observation, which follows directly from the design of \textsc{Greedy}, is that the allocation rule of \textsc{Greedy} has a simple form that can be fully characterized by three parameters\footnote{One might notice that this characterization actually captures a strictly more general class of allocation rules than just the possible outputs of \textsc{Greedy}. This is for the convenience of analysis later, and we will call any allocation rule that can be characterized in this way a ``greedy allocation rule''.}:
\begin{observation}\label{obs:two_step}
The allocation rule $f$ of \textsc{Greedy} is a $(\le 2)$-step function, i.e.,~there exists some $(t, p_1, p_2)$ where $t\in[0,1)$ and\footnote{$0^-$ denotes a strictly negative number that is arbitrarily close to $0$.} $0^-\le p_1\le p_2$,
$$
f\left(\frac{c}{u}\right)=\begin{cases}
1 &   \frac{c}{u} \leq p_1 \\
t & p_1 < \frac{c}{u} \le p_2 \\
0 &  \frac{c}{u} > p_2 
\end{cases},
$$
and we say that $f$ is characterized by $(t, p_1, p_2)$.
\end{observation}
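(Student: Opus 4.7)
The plan is to prove Observation~\ref{obs:two_step} by directly tracing the execution of the while loop in Mechanism~\ref{mech:greedy} and reading off the structure of $f$. Each iteration of the loop assigns $f$ on an interval of the form $(\gamma_{i-1},\gamma_j]$ via exactly one of two branches: the ``if'' branch sets $f\equiv 1$ on the interval and updates $i\leftarrow j+1$; the ``else'' branch sets $f\equiv B/q_{i,j}^{\max}\in[0,1)$ on the interval and immediately breaks. So the only structural question is how these per-iteration intervals compose.

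First I would observe that because each ``if'' iteration updates $i\leftarrow j+1$, successive ``if'' iterations produce intervals $(\gamma_{j_0},\gamma_{j_1}],(\gamma_{j_1},\gamma_{j_2}],\ldots,(\gamma_{j_{K-1}},\gamma_{j_K}]$ (with $j_0=0$) whose union is a single contiguous interval $(0,\gamma_{j_K}]$ on which $f\equiv 1$. Second, because of the break, the ``else'' branch fires at most once, and whenever it fires it must be after all the ``if'' iterations, producing a single interval $(\gamma_{j_K},\gamma_{j_{K+1}}]$ on which $f\equiv t:=B/q_{i,j_{K+1}}^{\max}\in[0,1)$. For $\gamma>\gamma_{j_{K+1}}$, the while loop never assigns $f$ a value, so $f(\gamma)=0$ there. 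Setting $p_1:=\gamma_{j_K}$ and $p_2:=\gamma_{j_{K+1}}$ then yields exactly the claimed $(\le 2)$-step form.

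There is no real mathematical obstacle; the only mild subtlety is bookkeeping for the boundary cases. If the while loop terminates naturally without the ``else'' branch ever firing (i.e., $B$ suffices to fully purchase every seller), set $p_1=p_2=\gamma_{n'}$ and $t=0$, which collapses the middle step. If the very first iteration takes the ``else'' branch, the full-purchase region contains no actual seller aside from the initialization $f(0)=1$; in this case $u_0=0$ (so $f(0)=1$ affects no real seller), and one uses the $p_1=0^-$ convention so that $\{\gamma\ge 0:\gamma\le p_1\}$ is empty as a set of reported cost-per-utility values, consistent with the characterization. With these checks in place, Observation~\ref{obs:two_step} follows immediately from the algorithm's control flow and requires no argument beyond inspection.
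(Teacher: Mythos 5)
Your proof is correct and coincides with the paper's (implicit) argument: the paper states Observation~\ref{obs:two_step} ``follows directly from the design of \textsc{Greedy}'' without giving details, and your trace of the while loop's control flow (contiguous ``if'' intervals at level $1$, at most one ``else'' interval at level $t<1$ followed by a break, then $0$) is exactly that argument made explicit. The only omission is the degenerate case $B=q^{\max}_{i,j}$ in the ``else'' branch, which gives $t=1$; there one folds the interval into the $p_1$ region (set $p_1=p_2=\gamma_j$, $t=0$) exactly as you already do for natural loop termination, so the convention $t\in[0,1)$ is preserved.
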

Observation~\ref{obs:two_step} allows us to think of the allocation rule of \textsc{Greedy} as a lottery (distribution) of at most two posted prices:
\begin{observation}\label{obs:two_prices}
Given an allocation rule $f$ of \textsc{Greedy} that is characterized by $(t, p_1, p_2)$ where $t\in[0,1)$ and $0^-\le p_1\le p_2$, consider the following randomized posted-price mechanism:
\begin{quote}
    For each seller, the buyer independently tosses a (biased) random coin and offers this seller either (i) a payment-per-utility $p_2$ with probability~$t$; or (ii) a payment-per-utility $p_1$ with probability~$1-t$. Then, each seller can accept the offer (give the item to the buyer and receive the payment) or leave.
\end{quote}

The above randomized posted-price mechanism, which is a lottery of two posted prices, has the same allocation function as $f$ in expectation.
\end{observation}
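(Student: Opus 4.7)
The plan is to verify the claim by a direct case analysis on a seller's cost-per-utility ratio $\gamma := c/u$, comparing the expected (fractional) allocation produced by the randomized posted-price mechanism against the value $f(\gamma)$ predicted by Observation~\ref{obs:two_step}. First I would pin down the seller's best response: when facing an offer of payment-per-utility $p$, a rational seller with utility $u$ and cost $c$ accepts iff the net profit $p \cdot u - c$ is nonnegative, i.e.\ iff $p \geq \gamma$. Consequently the $p_1$ offer is accepted precisely when $\gamma \leq p_1$, and the $p_2$ offer precisely when $\gamma \leq p_2$.

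Next I would split according to the three regions that the $(\le 2)$-step function $f$ carves out. When $\gamma \leq p_1$, both offers are accepted, so the expected allocation is $t \cdot 1 + (1-t) \cdot 1 = 1 = f(\gamma)$. When $p_1 < \gamma \leq p_2$, only the $p_2$ offer (drawn with probability $t$) is accepted, giving expected allocation $t \cdot 1 + (1-t) \cdot 0 = t = f(\gamma)$. When $\gamma > p_2$, neither offer is accepted, and the expected allocation is $0 = f(\gamma)$. Assembling the three cases produces the stated equality of expected allocations pointwise in $\gamma$.

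The one piece of bookkeeping to flag is the convention $p_1 \geq 0^-$, which permits the ``first offer'' to be a strictly negative payment-per-utility in the degenerate case where no seller should be allocated in full. Since every seller has $\gamma \geq 0$, any strictly negative offer is never accepted and contributes $0$ with probability $1-t$ to the expectation, which is exactly the behavior needed to make the top branch of $f$ empty in that regime. Apart from handling this edge convention (and the trivially-degenerate case $p_1 = p_2$, where the middle branch vanishes and the mechanism collapses to a single posted price), the argument is a one-line expectation calculation, so I do not anticipate any real obstacle.
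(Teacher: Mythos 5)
Your proposal is correct and follows essentially the same three-case argument the paper uses (comparing the expected allocation against $f(\gamma)$ on the regions $\gamma \le p_1$, $p_1 < \gamma \le p_2$, and $\gamma > p_2$); the extra bookkeeping about the $0^-$ convention and $p_1 = p_2$ is sound but not something the paper spells out.
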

\begin{proof}
Let $\bar{f}$ denote the expected allocation function of the above randomized posted-price mechanism. Now we show that $\bar{f}$ is equivalent to $f$. First, a seller with a cost-per-utility $\frac{c}{u}\le p_1$ will accept either offer $p_1$ or $p_2$
(because both payments-per-utility are no less than his cost-per-utility), and hence $\bar{f}(\frac{c}{u})=1$. On the other hand, a seller with a cost-per-utility $\frac{c}{u}\in (p_1, p_2]$ will only accept offer $p_2$ (because only $p_2$ is no less than his cost-per-utility), and hence $\bar{f}(\frac{c}{u})=\Pr[p_2\textrm{ is offered}]=t$. Finally, a seller with a cost-per-utility $\frac{c}{u}>p_2$ will not accept either of the offer (because both payments-per-utility are below his cost-per-utility), and hence $\bar{f}(\frac{c}{u})=0$.
\end{proof}
The same allocation function obviously achieves the same total utility in expectation, and moreover, by Myerson's lemma, it also makes the same total payment in expectation. Therefore, Observation~\ref{obs:two_prices} provides a more intuitive way to calculate the total utility and the total payment for \textsc{Greedy} (using the posted prices rather than explicitly using the allocation rule of \textsc{Greedy} and Myerson's payment rule), which we formalize in the following observation:
\begin{observation}\label{obs:utility_and_payment}
Given an allocation rule $f$ of \textsc{Greedy} that is characterized by $(t, p_1, p_2)$ where $t\in[0,1)$ and $0^-\le p_1\le p_2$, for any subset of sellers $S\subseteq[n]$, let $U_f(S)$ and $B_f(S)$ denote the total utility and the total payment respectively when we apply $f$ to the sellers in $S$, and let $U_p(S)$ and $B_p(S)$ denote the total utility and total payment respectively when we offer a posted price (payment-per-utility) $p\in\mathbb{R}_{\ge0}$ to the sellers in $S$ (and each seller can accept the offer or leave). Then, we have that
\begin{align}
    &U_f(S)=(1-t)U_{p_1}(S)+tU_{p_2}(S),\nonumber\\
    &B_f(S)=(1-t)B_{p_1}(S)+tB_{p_2}(S),\label{eq:f_to_ps}
\end{align}
and moreover, for all $p\in\mathbb{R}_{\ge0}$,
\begin{align}
    &B_{p}(S)=pU_{p}(S),\nonumber\\
    &U_{p}(S)=\sum_{i\in S\textnormal{ s.t. } c_i/u_i\le p} u_i.\label{eq:B_and_U_for_p}
\end{align}
\end{observation}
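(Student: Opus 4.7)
The plan is to prove the four equations by leveraging Observation~\ref{obs:two_prices} together with elementary facts about posted-price mechanisms and Myerson's lemma.

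First, I would dispatch Eq.~\eqref{eq:B_and_U_for_p} directly from the semantics of a posted-price offer. When the buyer offers payment-per-utility $p$ to seller $i$, the seller would receive total payment $p \cdot u_i$ in exchange for incurring cost $c_i$, so a rational seller accepts the offer if and only if $p \cdot u_i \ge c_i$, equivalently $c_i/u_i \le p$. Summing $u_i$ over accepting sellers in $S$ yields $U_p(S) = \sum_{i \in S:\, c_i/u_i \le p} u_i$, and since each accepting seller receives exactly $p \cdot u_i$, summing the payments gives $B_p(S) = p \cdot U_p(S)$.

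Next, for Eq.~\eqref{eq:f_to_ps}, I would invoke the randomized posted-price mechanism from Observation~\ref{obs:two_prices}, in which each seller is independently offered $p_2$ with probability $t$ and $p_1$ with probability $1-t$. By linearity of expectation over the independent coin tosses, the expected total utility extracted from $S$ is $(1-t)\,U_{p_1}(S) + t\,U_{p_2}(S)$, and the expected total payment is $(1-t)\,B_{p_1}(S) + t\,B_{p_2}(S)$. Observation~\ref{obs:two_prices} already established that this randomized mechanism has the same (expected) allocation function as $f$; hence the expected utility matches $U_f(S)$, giving the first identity in Eq.~\eqref{eq:f_to_ps}.

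The only nontrivial step is the payment identity. Here I would appeal to Myerson's lemma: both $f$ itself (with the Myerson payments $Q_f$ applied per seller) and the randomized two-price mechanism implement the \emph{same} monotone allocation rule $f$ (in expectation, per seller), so by the uniqueness part of Myerson's characterization their per-seller expected payments must coincide. Summing over $i \in S$ gives $B_f(S) = (1-t)\,B_{p_1}(S) + t\,B_{p_2}(S)$, completing the proof. The subtle point—and the only place where one might slip—is to justify the payment equivalence via Myerson's uniqueness rather than from allocation-distribution equivalence alone; once that is in hand, the remaining verifications are routine.
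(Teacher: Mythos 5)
Your proof is correct and takes essentially the same route as the paper: Eq.~\eqref{eq:B_and_U_for_p} from the definition of a posted-price offer, and Eq.~\eqref{eq:f_to_ps} by invoking Observation~\ref{obs:two_prices} together with the uniqueness part of Myerson's lemma to equate the expected payments. Your write-up is simply more explicit about the payment step, which the paper compresses into the phrase that the randomized two-price mechanism ``by Myerson's lemma, also makes the same total payment in expectation.''
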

\begin{proof}
Eq.~\eqref{eq:f_to_ps} follows immediately by Observation~\ref{obs:two_prices} and the discussion above, and Eq.~\eqref{eq:B_and_U_for_p} follows by definition of the posted-price mechanism.
\end{proof}
We remark that Observation~\ref{obs:utility_and_payment} makes it easier to prove multiplicative concentration inequalities for $U_f(S)$ and $B_f(S)$ when $S$ is a random subset of $[n]$ (specifically, by Eq.~\eqref{eq:f_to_ps} and Eq.~\eqref{eq:B_and_U_for_p}, both $U_f(S)$ and $B_f(S)$ can be written as non-negative linear combination of $U_{p_1}(S)$ and $U_{p_2}(S)$, and thus, it suffices to prove multiplicative concentration inequalities for $U_{p_1}(S)$ and $U_{p_2}(S)$).

\subsection{Approximating greedy via random sampling}
We have shown that \textsc{Greedy} is instance-optimal compared to all the uniform mechanisms in Theorem~\ref{thm:greedy_instance_optimal}, but it requires the knowledge of private costs. In this subsection, we present a proxy of \textsc{Greedy} called \textsc{Random-Sampling-Greedy}, which uses random sampling\footnote{An alternative method often used in the literature is for every seller, computing the prices for the market excluding this seller and then offering the computed prices to this seller. We remark there exist instances for which this method violates budget-feasibility when applied to our idealized mechanism. Besides, random partitioning is much more computationally efficient than this alternative method.} to approximate the distribution of private costs, and in Theorem~\ref{thm:random_sample_greedy}, we will show that this randomized mechanism strictly satisfies the budget constraint and with high probability achieves almost the same utility as \textsc{Greedy}. 

Before that, we introduce two subroutines that will be applied in \textsc{Random-Sampling-Greedy}.
The first subroutine handles an edge case of a small subset $T$ of sellers with exceptionally high utility.
The second subroutine adjusts the price $p_1$ to a new price $\hat{p_1}$, to handle an edge case where $U_{p_1}([n]\setminus T)$ is very small. 
Intuitively, after those adjustments , the utility of any individual seller, who is not in $T$ and has a cost-per-utility at most $\hat{p_1}$, is tiny relative to $U_{\hat{p_1}}([n]\setminus T)$. Therefore when $S$ is a uniformly random subset of $[n]\setminus T$, $U_{\hat{p_1}}(S)$ is concentrated around its expectation w.h.p. (We will show this formally in the analysis of \textsc{Random-Sampling-Greedy}.)

\paragraph{Pre-purchasing the most valuable items} The first subroutine, which will be the first step of \textsc{Random-Sampling-Greedy}, is pre-purchasing the items of highest utilities. By the small-bidder assumption, each seller's cost is $o(B)$. Thus, for an arbitrarily large integer constant $C$, we can pre-purchase the top $C$ items of highest utilities by making a payment $\epsilon_1 B/C$ to each of the $C$ sellers, and the remaining budget is $(1-\epsilon_1)B$. Henceforth, we let $T$ denote the set of the top-$C$ items and let $U(T)$ denote their total utility.

\paragraph{Truncating a greedy allocation rule} We let $\eta>0$ be a parameter which we use for this truncation step (later we will choose $\eta$ to be an arbitrarily small constant and then choose $C$ such that $\eta C$ is arbitrarily large). Suppose we are given an allocation rule $f$ of \textsc{Greedy} that is characterized by $(t,p_1,p_2)$ where $t\in[0,1)$ and $0^-\le p_1\le p_2$. We let $\hat{f}$ denote the truncated allocation rule of $f$. Specifically, $\hat{f}$ is characterized by $(t,\hat{p_1},p_2)$, and $\hat{p_1}$ is defined as follows
\[
    \hat{p_1}:=\begin{cases}
        0^- & U_{p_1}([n]\setminus T)< \frac{U(T)}{\eta C} \\
        p_1 & U_{p_1}([n]\setminus T)\ge \frac{U(T)}{\eta C}
    \end{cases}.
\]
That is, we get $\hat{f}$ by decreasing the value of $f$ over $[0,p_1]$ to $t$ if $U_{p_1}([n]\setminus T)$ is less than $\frac{U(T)}{\eta C}$ (recall that $U_{p_1}([n]\setminus T)$  is the total utility of the sellers in $[n]\setminus T$ whose cost-per-utility is at most $p_1$ by Observation~\ref{obs:utility_and_payment}).

We observe that applying truncation will not significantly decrease (relative to $U(T)$) the utility attained by the allocation rule:
\begin{observation}\label{obs:truncation_does_not_hurt}
For all $S\subseteq[n]$, $U_{f}(S)-U_{\hat{f}}(S)\le \frac{U(T)}{\eta C}$.
\end{observation}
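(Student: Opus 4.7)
The plan is to perform a case split on the definition of $\hat{p_1}$. In the trivial case $\hat{p_1} = p_1$, the truncation does nothing ($\hat{f} \equiv f$), so $U_f(S) - U_{\hat f}(S) = 0$ and the inequality holds for any $S$.

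In the nontrivial case $\hat{p_1} = 0^-$, the case hypothesis directly gives $U_{p_1}([n] \setminus T) < U(T)/(\eta C)$. I would then apply the linearity identity of Observation~\ref{obs:utility_and_payment} to both allocation rules:
\[
U_f(S) = (1-t)\,U_{p_1}(S) + t\,U_{p_2}(S), \qquad U_{\hat f}(S) = (1-t)\,U_{0^-}(S) + t\,U_{p_2}(S).
\]
Subtracting, the $t\,U_{p_2}(S)$ terms cancel, and $U_{0^-}(S) = 0$ because every seller has non-negative cost-per-utility whereas $0^-$ is strictly negative. So the difference collapses to $U_f(S) - U_{\hat f}(S) = (1-t)\,U_{p_1}(S) \le U_{p_1}(S)$. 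An equivalent, slightly more hands-on way to see this is that $f$ and $\hat f$ agree outside $[0,p_1]$, while on $[0,p_1]$ we have $f \equiv 1$ and $\hat f \equiv t$, so the per-seller utility gap is exactly $(1-t)u_i$ for sellers with $c_i/u_i \le p_1$.

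It remains to upper bound $U_{p_1}(S)$ by $U(T)/(\eta C)$. Since $f$ and $\hat f$ are applied by the mechanism only to the residual market $[n] \setminus T$ (the set $T$ being handled by the separate pre-purchase step), I would read the quantifier ``for all $S \subseteq [n]$'' as implicitly meaning $S \subseteq [n]\setminus T$. Under this convention $U_{p_1}(S) \le U_{p_1}([n]\setminus T) < U(T)/(\eta C)$, closing the argument. The computation itself is essentially a one-liner; the only conceptual subtlety---which I would flag explicitly in the write-up---is recognizing that the statement is about $S$ disjoint from $T$, because otherwise a single high-utility seller in $T$ with $c/u \le p_1$ could contribute $(1-t)u_i$ to the difference and spoil the bound. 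This ``obstacle'' is really just a matter of correctly interpreting the scope of $S$; no deeper technical work is needed.
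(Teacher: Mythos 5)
Your proof is correct and uses essentially the same ingredients as the paper's: the linearity decomposition from Observation~\ref{obs:utility_and_payment}, the bound $(1-t)\bigl(U_{p_1}(S)-U_{\hat{p_1}}(S)\bigr)\le U_{p_1}(S)-U_{\hat{p_1}}(S)$, and the threshold test in the definition of $\hat{p_1}$. The only structural difference is the case split: you split on whether $\hat{p_1}=p_1$ or $\hat{p_1}=0^-$, while the paper splits on whether $U_{p_1}(S)\ge U(T)/(\eta C)$. When $S\subseteq[n]\setminus T$ these reduce to the same argument.

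The scope issue you flag is genuine. The paper's intermediate step---``$U_{p_1}(S)\ge U(T)/(\eta C)$ implies $U_{p_1}(S)-U_{\hat{p_1}}(S)=0$''---silently passes from the threshold test on $S$ to the threshold test on $[n]\setminus T$ appearing in the definition of $\hat{p_1}$; this step requires $U_{p_1}(S)\le U_{p_1}([n]\setminus T)$, which holds precisely when $S\subseteq[n]\setminus T$. For $S$ containing sellers of $T$, the bound can fail outright: if every seller in $T$ has cost-per-utility at most $p_1$ and no seller in $[n]\setminus T$ does, then $\hat{p_1}=0^-$ while $U_f([n])-U_{\hat f}([n])=(1-t)U(T)$, which exceeds $U(T)/(\eta C)$ whenever $1-t>1/(\eta C)$. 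So the quantifier ``for all $S\subseteq[n]$'' in the statement is too generous, and your reading $S\subseteq[n]\setminus T$ is the one under which the observation (and the paper's own proof of it) is sound. You should be aware, though, that the paper later invokes this observation with $S=[n]$ when lower-bounding $U_{\widehat{f^*_{(1-\delta_3)B}}}([n])$; a careful fix there would decompose $U_{f^*}([n])=U_{f^*}(T)+U_{f^*}([n]\setminus T)$, apply the observation only to the $[n]\setminus T$ part, and absorb the $T$-part via $U_{f^*}(T)\le U(T)$ and the pre-purchase step, rather than through this observation directly.
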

\begin{proof}
By our design of the truncation step and Observation~\ref{obs:utility_and_payment}, $U_{f}(S)-U_{\hat{f}}(S)=(1-t)(U_{p_1}(S)-U_{\hat{p_1}}(S))\le U_{p_1}(S)-U_{\hat{p_1}}(S)$. Moreover, by definition of $\hat{p_1}$, $U_{p_1}(S)-U_{\hat{p_1}}(S)=0$ if $U_{p_1}(S)\ge \frac{U(T)}{\eta C}$, and obviously $U_{p_1}(S)-U_{\hat{p_1}}(S)\le U_{p_1}(S)< \frac{U(T)}{\eta C}$ if otherwise.
\end{proof}

\subsubsection*{The \textsc{Random-Sampling-Greedy} mechanism}

Now we present \textsc{Random-Sampling-Greedy} (Mechanism~\ref{mech:RS_greedy}) and its theoretical guarantee (Theorem~\ref{thm:random_sample_greedy}). The analysis of \textsc{Random-Sampling-Greedy} (the proof of Theorem~\ref{thm:random_sample_greedy}), which is rather technical but still interesting, is deferred to Section~\ref{section:proof_random_sample_greedy} in appendix for the interest of space.

\begin{algorithm}[ht]
\SetAlgoLined
\SetKwInOut{Input}{Input}
\SetKwInOut{Output}{Output}
\Input{$(c_i,u_i)$ for $i\in [n]$, $B$, and parameters $\epsilon_1$, $\delta_1$, $\eta$, $C$.}
\SetAlgorithmName{Mechanism}~~
 Buy the items from the top $C$ sellers $T$ of highest utilities and pay each of them $\epsilon_1 B/C$\;\label{algline:topseller}
 Partition the other sellers $[n]\setminus T$ into $X$ and $Y$ uniformly at random\;
 (Virtually, aka without making actual allocations or payments) run $\textsc{Greedy}$ mechanism on $X$ and $Y$ with budget $\frac{(1-\delta_1)B}{2}$, separately, and get the resulting allocation rules $f_X, f_Y$\;
 Truncate $f_X, f_Y$ using parameter $\eta$ and get $\hat{f_X}, \hat{f_Y}$ and their associated payment rules $\hat{Q_X}, \hat{Q_Y}$\;
 In an arbitrary order, sequentially apply $\hat{f_X},\hat{Q_X}$ to the sellers in $Y$ until we spend $\frac{(1-\epsilon_1)B}{2}$ on $Y$, and then sequentially apply $\hat{f_Y},\hat{Q_Y}$ to the sellers in $X$ until we spend $\frac{(1-\epsilon_1)B}{2}$ on $X$\;\label{algline:otherseller}
 \caption{\textsc{Random-Sampling-Greedy}}
 \label{mech:RS_greedy}
\end{algorithm}

\begin{theorem}\label{thm:random_sample_greedy}
For divisible items, under the small-bidder assumption, for every $\epsilon>0$, there exists sufficiently small $\delta_1,\eta>0$ and sufficiently large $C$ such that, \textsc{Random-Sampling-Greedy} is truthful-in-expectation and strictly budget-feasible and with high probability achieves utility at least $(1-\epsilon)$-fraction of the utility attained by \textsc{Greedy}.
\end{theorem}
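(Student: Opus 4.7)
The plan is to verify truthfulness-in-expectation and strict budget-feasibility by construction (quickly), and then devote the bulk of the effort to the utility guarantee. Strict budget-feasibility is immediate: the pre-purchase in Line~\ref{algline:topseller} spends exactly $\epsilon_1 B$, and each sequential application phase in Line~\ref{algline:otherseller} is capped at $(1-\epsilon_1)B/2$, so the total payment never exceeds $B$ on any realization. Truthfulness follows from Myerson's lemma once we check monotonicity: the partition of $[n]\setminus T$ into $X,Y$ is independent of reports, $\hat{f_X}$ depends only on $X$'s reported costs and is applied to $Y$ (and vice versa), so each seller $i\in Y$ faces the one-dimensional non-increasing allocation $\hat{f_X}(c_i/u_i)$ with its Myerson payment $\hat{Q}_X(c_i/u_i)\,u_i$; the sequential stopping threshold depends only on earlier sellers' payments, hence the effective allocation faced by seller $i$ remains monotone in $c_i$ and truthfulness-in-expectation is preserved.

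For utility, let $f^*$ be \textsc{Greedy}'s allocation on the full market with budget $B$ and write $V^*:=U_{f^*}([n])$. Since the pre-purchase already guarantees $U(T)\ge U_{f^*}(T)$, it suffices to recover $(1-o(1))\cdot U_{f^*}([n]\setminus T)$ from the two sequential phases. I would argue this in three stages: (a) with high probability a slightly scaled-down truncation of $f^*$ is budget-feasible on $X$ under budget $(1-\delta_1)B/2$ and attains utility $\ge (1-o(1))\,U_{f^*}([n]\setminus T)/2$ there; combined with Theorem~\ref{thm:greedy_instance_optimal} (instance-optimality of \textsc{Greedy} among uniform mechanisms), this yields $U_{f_X}(X)\ge (1-o(1))\,U_{f^*}([n]\setminus T)/2$. (b) Truncating $f_X$ into $\hat{f_X}$ loses at most $U(T)/(\eta C)$ in utility by Observation~\ref{obs:truncation_does_not_hurt}, which is negligible since $V^*\ge U(T)$ and $\eta C$ is large. (c) Transfer from $X$ to $Y$: show that with high probability $U_{\hat{f_X}}(Y)\ge (1-o(1))\,U_{\hat{f_X}}([n]\setminus T)/2\approx U_{\hat{f_X}}(X)$ and simultaneously $B_{\hat{f_X}}(Y)\le(1-\epsilon_1)B/2$, so that the sequential stopping never prematurely excludes a seller of $Y$. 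Symmetric bounds hold with $X$ and $Y$ swapped, and summing yields the target $(1-\epsilon)V^*$.

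The main obstacle is stage~(c), where $\hat{f_X}$ itself is a random function of the partition and one cannot simply fix it before applying a concentration inequality. The remedy --- where the truncation and the pre-purchase of the top-$C$ sellers pay off --- is a discretization-plus-union-bound. By Observations~\ref{obs:two_step} and~\ref{obs:two_prices} every candidate rule $\hat{f}$ is specified by a triple $(t,\hat{p_1},p_2)$ with $\hat{p_1},p_2\in\{c_i/u_i\}_{i\in[n]}\cup\{0^-\}$ and $t$ discretized on a $1/\epsilon$-grid, giving $\mathrm{poly}(n,1/\epsilon)$ candidates. For each fixed candidate, Observation~\ref{obs:utility_and_payment} decomposes $U_{\hat{f}}(S)$ and $B_{\hat{f}}(S)$ into non-negative linear combinations of the posted-price quantities $U_{\hat{p_1}}(S)=\sum_{i\in S:\,c_i/u_i\le \hat{p_1}} u_i$ and $U_{p_2}(S)$, each a sum of independent Bernoulli-weighted contributions indexed by the random partition. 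The pre-purchase guarantees $u_i\le U(T)/C$ for every $i\notin T$, while the truncation guarantees $U_{\hat{p_1}}([n]\setminus T)\ge U(T)/(\eta C)$ whenever $\hat{p_1}\neq 0^-$; hence each seller contributes at most an $O(\eta)$-fraction of the mean, so a multiplicative Chernoff bound delivers per-rule concentration with failure probability $\exp(-\Omega(1/\eta))$. A matching argument handles the $p_2$-tier, where either $U_{p_2}([n]\setminus T)\gg u_T$ (Chernoff applies directly) or the $p_2$-tier contributes negligibly to $V^*$ and can be absorbed into the $(1-\epsilon)$ slack. A union bound over the discretized grid, combined with the three stages above and appropriately small $\delta_1,\eta$ and large $C$ (all depending only on $\epsilon$), yields the claimed $(1-\epsilon)$-approximation with high probability.
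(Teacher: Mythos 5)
Your overall architecture mirrors the paper's closely: the truthfulness and budget-feasibility observations are correct and essentially identical, the three-stage utility plan (a benchmark at slightly reduced budget, truncation loss bounded by Observation~\ref{obs:truncation_does_not_hurt}, then a transfer-across-halves argument) is the same strategy the paper takes (the paper uses $f^*_{(1-\delta_3)B}$ rather than a scaled-down $f^*_B$, but the paper itself notes these are interchangeable), and you correctly identify the crux as a \emph{simultaneous} concentration statement over all candidate truncated greedy rules. However, the concentration step --- which is the technical heart of the proof --- has a genuine gap.

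Your per-rule bound is right: with $u_i\le U(T)/C$ for $i\notin T$ (from the pre-purchase) and $U_{\hat{p_1}}([n]\setminus T)\ge U(T)/(\eta C)$ (from the truncation), Hoeffding gives failure probability $\exp(-\Omega(1/\eta))$ for a fixed posted price. The problem is the union bound over the discretized candidate set. You discretize prices at the $n+1$ values $\{c_i/u_i\}\cup\{0^-\}$ (the $t$-grid is unnecessary, as you yourself note, since everything is a nonnegative combination of two posted-price quantities), and then take a union bound over $\mathrm{poly}(n)$ candidates, each failing with probability $\exp(-\Omega(1/\eta))$. That gives total failure probability $\mathrm{poly}(n)\cdot\exp(-\Omega(1/\eta))$, which does \emph{not} vanish when $\eta$ is a constant depending only on $\epsilon$ --- which is what the theorem requires (``there exists sufficiently small $\delta_1,\eta>0$ and sufficiently large $C$''). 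Worse, this is not just slack in the analysis: an adversary can place many sellers with tiny utilities just above the truncation threshold so that many distinct prefix-sum levels sit near $U(T)/(\eta C)$; then even the sharper price-dependent bound $\exp(-\delta^2 U_p([n]\setminus T) C /(2U(T)))$, summed over all $n$ prices, is genuinely $\Theta\bigl(n\,\exp(-\Omega(1/\eta))\bigr)$. You could salvage this by letting $\eta\to 0$ as $n\to\infty$, but that changes the quantifier structure of the theorem. The paper (Lemma~\ref{lem:concentration_tau_large}) instead discretizes in \emph{utility levels} rather than price values: it takes geometric buckets $[(1+\delta)^{i-1}\tau U(T),(1+\delta)^{i}\tau U(T))$ and only proves concentration at the two boundary prices $p^{(i)}_{\min},p^{(i)}_{\max}$ of each bucket, filling in the interior by monotonicity at the cost of an extra $(1+\delta)$ factor. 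Because the $i$-th bucket already guarantees $U_{p^{(i)}_{\min}}([n]\setminus T)\ge(1+\delta)^{i-1}\tau U(T)$, the Hoeffding failure probability at level $i$ is $\exp(-\Omega((1+\delta)^{i-1}/\eta))$ --- it decays \emph{doubly exponentially} in $i$ --- so the union bound over all $\approx\log_{1+\delta} n$ buckets telescopes to an $n$-independent constant $O(\exp(-\Omega(1/\eta))\cdot(1+1/\delta))$ via Facts~\ref{lem:product_sum_inequality} and~\ref{lem:sum_of_double_exponential}. This geometric bucketing is the missing ingredient in your argument; without it, the simultaneity over random $\hat{f_X}$ in your stage~(c) cannot be established at the advertised parameter regime.
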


\subsection{Numerical simulation on synthetic instances}\label{section:sythetic}
We compare the performance of \textsc{Greedy}, \textsc{Random-Sampling-Greedy} with~\cite{anari2018budget}'s mechanism \textsc{AGN}, and the best cutoff rule with proper tie breaking \textsc{Cutoff} (i.e.,~\cite{EG14}'s open clock auction) on synthetic datasets, where the market has 1000 sellers, each of whom has unit utility and cost sampled from various distributions (negative cost is rounded to $0$), and the buyer's budget is 20000. When we run \textsc{Random-Sampling-Greedy} for this instance, we simply set $\epsilon_1,\delta_1,\eta,C$ to $0$ (these constants were only used to prove asymptotically high probability bounds).
The results are summarized in Table~\ref{table:synthetic} (for each cost distribution, we take the average and the standard deviation of the results of 100 runs).
We observe that \textsc{Greedy} always dominates other mechanisms since it is instance-optimal uniform mechanism, and \textsc{Random-Sampling-Greedy} (\textsc{RS-Greedy}) is usually almost as good as \textsc{Greedy} with only a small difference due to random sampling (as illustrated in Figure~\ref{fig:random_error}, this difference goes to $0$ when the size of market increases, which matches Theorem~\ref{thm:random_sample_greedy}).
Moreover, on all the synthetic instances, \textsc{Greedy} and \textsc{RS-Greedy} beat the worst-case $1-1/e$ competitive ratio by a large margin, while \textsc{AGN} only obtains close-to-$(1-1/e)$ competitive ratio.
On the other hand, for unimodal distributions, \textsc{Cutoff} often performs well, but for multi-modal distributions, it is significantly outperformed by \textsc{Greedy} and \textsc{RS-Greedy}. This matches our intuition:
\begin{example}
\normalfont
Consider the instance where all $n$ sellers have unit utilities, and $n/2$ sellers have costs $0$ and $n/2$ sellers have costs $1$, and the buyer has budget $n$. The best cutoff rule (i.e., setting a best uniform price-per-utility for all sellers) only gets the $n/2$ sellers with zero cost and hence achieves competitive ratio $1/2$, while \textsc{Greedy} can choose tuple $(t=1/2,\,p_1=0,\,p_2=1)$ and achieve competitive ratio $3/4$.
\end{example}

\begin{table}[h!]
\begin{minipage}{\columnwidth}
\begin{center}
\caption{\label{table:synthetic} Competitive ratios achieved by different mechanisms on synthetic datasets}
\begin{tabular}{ c c c c } 
 \hline
 \textsc{Cutoff} & \textsc{AGN} & \textsc{Greedy} & \textsc{RS-Greedy} \\
\hline
 $0.816\pm 0.004$ & $0.632\pm 0.001$ & $0.818\pm 0.004$ & $0.81\pm 0.006$ \\

 $0.709\pm 0.005$ & $0.633\pm 0.003$ & $0.711\pm 0.004$ & $0.702\pm 0.006$ \\

 $0.74\pm 0.008$ & $0.663\pm 0.006$ & $0.743\pm 0.008$ & $0.736\pm 0.009$ \\

 $0.69\pm 0.003$ & $0.633\pm 0.002$ & $0.726\pm 0.003$ & $0.718\pm 0.005$ \\

 $0.68\pm 0.009$ & $0.634\pm 0.003$ & $0.712\pm 0.006$ & $0.706\pm 0.007$  \\
 \hline
\end{tabular}
\end{center}
\footnotesize
Each row contains the results for a distinct cost distribution. The distributions from top to bottom are $\mathcal{N}(20,5)$, $\textrm{Unif}(0,40)$, $\textrm{Exp}(20)$, $\frac{1}{2}\mathcal{N}(10,3)+\frac{1}{2}\mathcal{N}(30,3)$, $\frac{1}{3}\mathcal{N}(5,3)+\frac{1}{3}\mathcal{N}(20,3)+\frac{1}{3}\mathcal{N}(35,3)$.
\end{minipage}
\end{table}

\begin{figure}[h!]
\begin{center}
    \begin{subfigure}[b]{0.32\textwidth}
        \centering
        \includegraphics[scale=0.37]{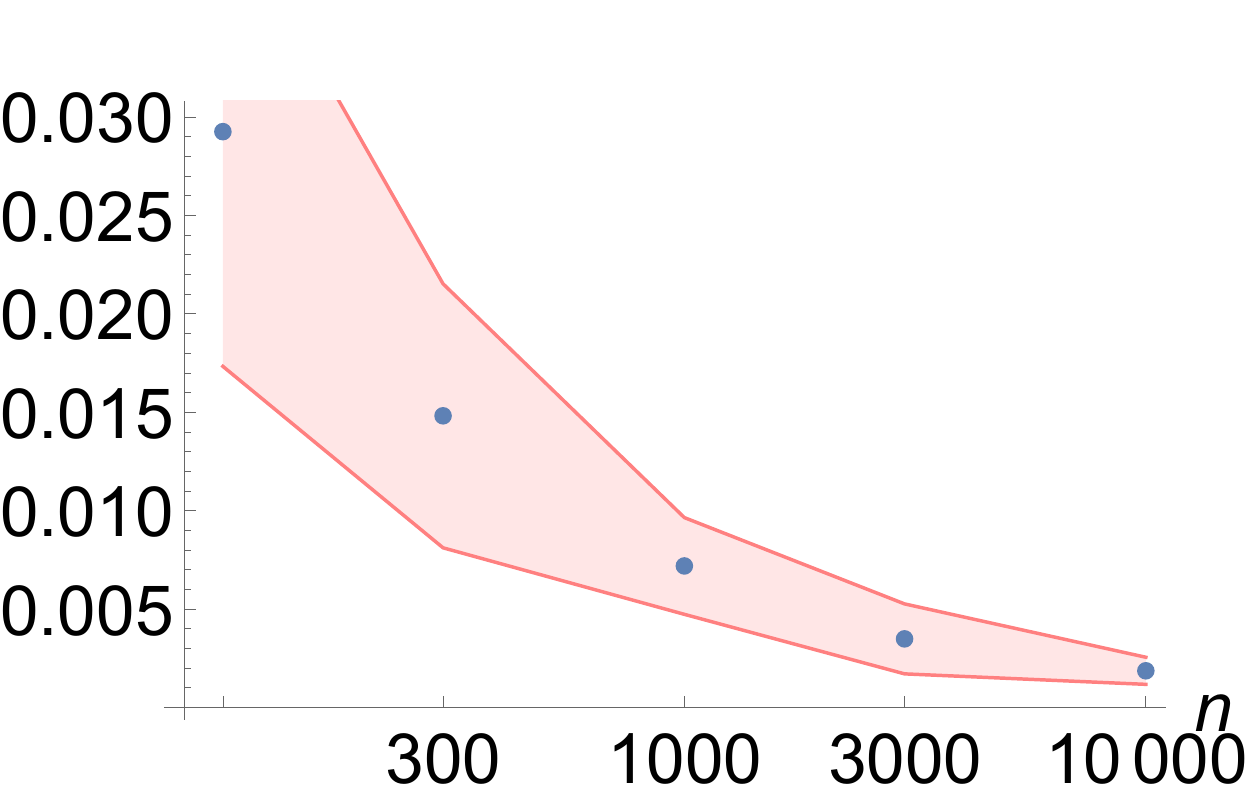}
        \caption{$\mathcal{N}(20,5)$}
    \end{subfigure}%
    ~
    \begin{subfigure}[b]{0.32\textwidth}
        \centering
        \includegraphics[scale=0.37]{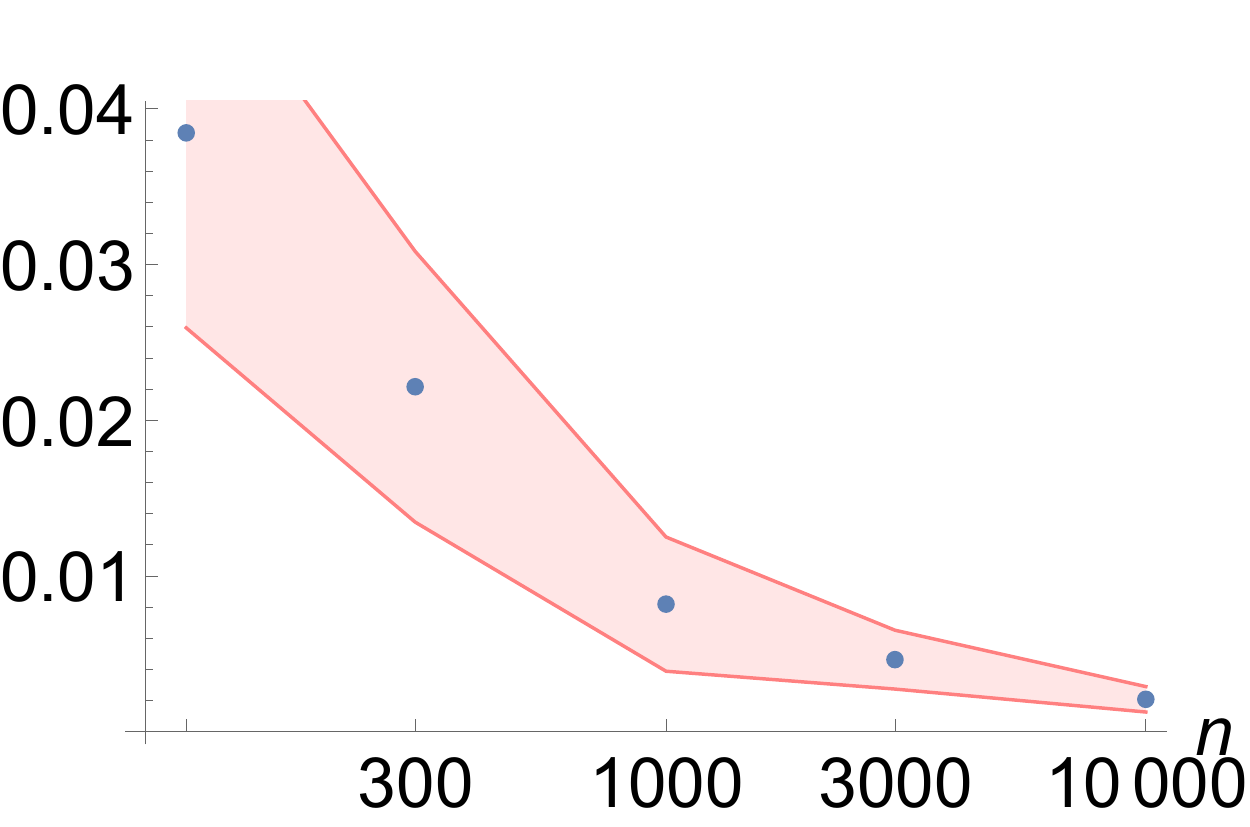}
        \caption{$\textrm{Unif}(0,40)$}
    \end{subfigure}%
    ~
    \begin{subfigure}[b]{0.32\textwidth}
        \centering
        \includegraphics[scale=0.37]{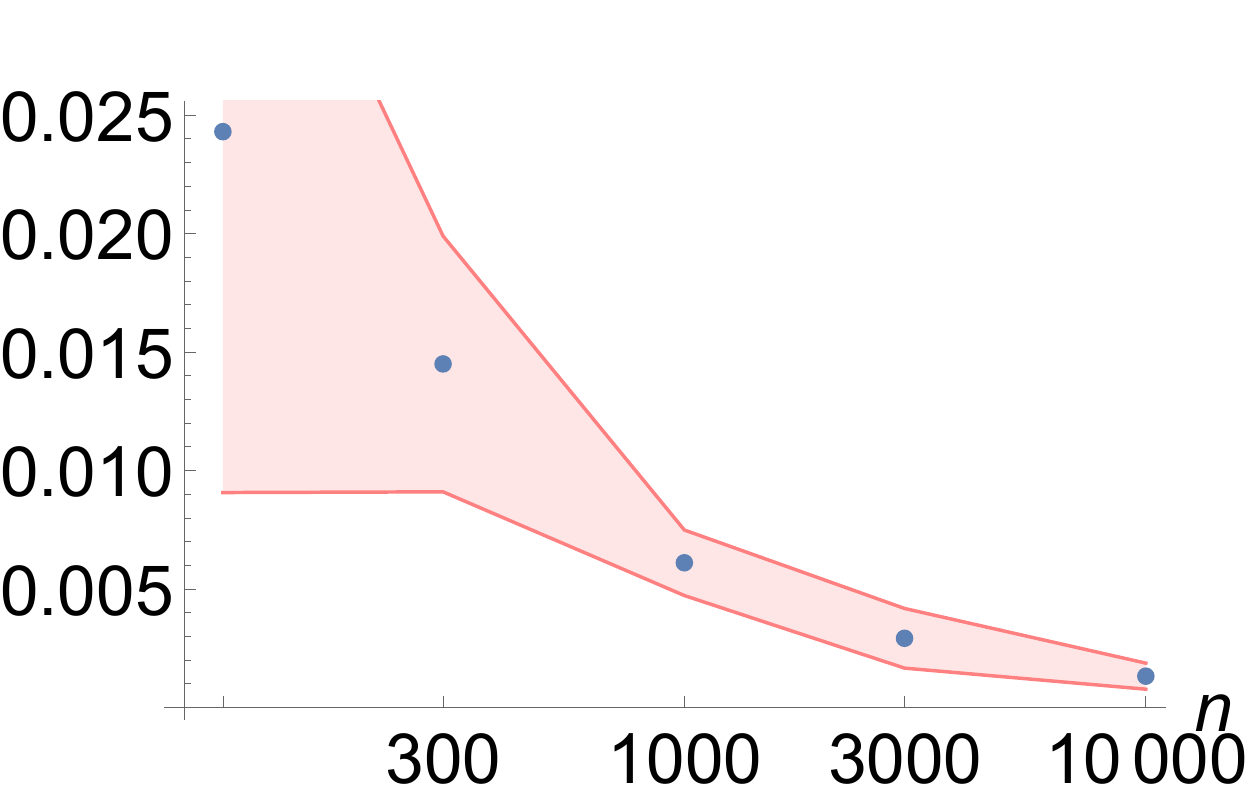}
        \caption{$\textrm{Exp}(20)$}
    \end{subfigure}
\end{center}
  \caption{This figure shows that the difference between competitive ratios achieved by \textsc{Greedy} and \textsc{RS-Greedy} ($y$-axis) diminishes when the market size $n$ ($x$-axis) increases. In each subplot, the market has $n$ sellers, each of whom has unit utility and cost sampled from a distinct distribution (negative cost is rounded to $0$), and the buyer's budget is $20n$. Each datapoint in the plot is the average of 20 runs, and the shaded area captures one standard deviation.}
\label{fig:random_error}
\end{figure}

\section{Budget-smoothed analysis}\label{sec:smoothed}
In this section, we analyze our mechanism in the budget-smoothed analysis framework. Our main results of budget-smoothed analysis are:
\begin{itemize}
\item Our mechanism obtains near-optimal budget-smoothed competitive ratio for any budget distribution when compared to all (possibly non-uniform) mechanisms (Theorem~\ref{thm:worst_distribution}).
\item Our mechanism obtains strictly better than $1-1/e$ budget-smoothed competitive ratio on any non-trivial budget distribution (Theorem~\ref{thm:beating}). In Section~\ref{section:numerical_computing_ratios}, we also formulate a (non-convex) mathematical program that computes the budget-smoothed competitive ratio for any given budget distribution. We solve this program for various distributions and observe non-negligible improvement over $1-1/e$.
\item Given any bounded range of budgets, there is a single market on which, simultaneously for every budget in the range,~\cite{anari2018budget}'s mechanism obtains only $1-1/e$ competitive ratio (Theorem~\ref{thm:hard_instance_for_agn}).
\item Our mechanism (and hence all possibly non-uniform mechanisms by Theorem~\ref{thm:worst_distribution}) has budget-smoothed competitive ratio bounded away from $1$ (specifically, at most $0.854$) for any budget distribution (Theorem~\ref{thm:lower_bound}).
\end{itemize}

\subsection{Greedy is optimal for any budget distribution}~\label{section:greedy_budget_smooth}
In this subsection, we analyze the budget-smoothed competitive ratio of our complete-information ``mechanism'' \textsc{Greedy} for any budget distribution. We show that \textsc{Greedy} is optimal for any budget distribution --even among non-uniform mechanisms-- and the ratio goes beyond $1-1/e$ when there are multiple budgets in the support of the budget distribution. These results extend to \textsc{Random-Sampling-Greedy} due to Theorem~\ref{thm:random_sample_greedy}. We will characterize the worst Bayesian market for truthful-in-expectation uniform mechanisms, where $n$ sellers have the same utility, and their costs are drawn from a continuous distribution. This characterization can be viewed as a generalization of the worst-case instance\footnote{J.Z.~wants to thank Nima Anari for an inspiring discussion of the worst-case instance in~\cite{anari2018budget}.} in~\cite{anari2018budget}. Then, we argue that this Bayesian market is as hard for truthful-in-expectation non-uniform mechanisms. But before that, we explain why the continuous cost distribution and equal utilities are not restrictions, \ie, for an arbitrary market, we can construct a Bayesian market with continuous cost distribution and equal utilities that exhibits the same hardness for truthful-in-expectation uniform mechanisms as the original market.

\subsubsection{From arbitrary market to Bayesian market}
Given an market $I$ with $n$ sellers of utilities $u_i$'s and costs $c_i$'s, we first construct a Bayesian market $I_1$ with a discrete distribution. The market $I_1$ has $M\cdot\sum_i u_i$ sellers\footnote{Without loss of generality, we assume that $M\cdot\sum_i u_i$ is an integer.}, where $M$ is a sufficiently large number. Let $D_1$ be a distribution over $\{\frac{c_i}{M u_i}\mid i\in[n]\}$ such that the probability of $\frac{c_i}{M u_i}$ is $u_i/(\sum_j u_j)$. Each seller has utility $\frac{1}{M}$, and his cost is drawn from $D_1$. We need to verify two things: (i) the non-IC optimal utilities of the knapsacks for $I$ and $I_1$ are almost equal, and (ii) the best achievable utilities by uniform mechanisms for $I$ and $I_1$ are also almost the same. For (ii), it suffices to consider \textsc{Greedy} because of Theorem~\ref{thm:greedy_instance_optimal}. The reason both of these hold is that the optimal utility and the best achievable utility only depend on the cost-to-utility ratio $\frac{c}{u}$'s and the total utility of the sellers with the same $\frac{c}{u}$, and if $M$ is sufficiently large, with high probability these quantities do not change much in $I_1$ compared to $I$.

Next, we construct a Bayesian market $I_2$ with the same setup as $I_1$ but a continuous distribution for sellers' costs. To this end, consider the CDF of $D_1$, which is some step function $F(c)$, we can approximate each step in $F$ arbitrarily well by a logistic function and glue them together such that the CDF is differentiable. For the same reason as above, the best achievable competitive ratio of a uniform mechanism for $I_2$ is approximately equal to that for $I_1$.

\subsubsection{Characterizing the worst Bayesian market}
\begin{theorem}\label{thm:worst_distribution}
For any distribution $\mathcal{D}$ over any $m$ budgets $B_1<B_2\dots<B_m$, let $F(c)$ be the CDF of the distribution of costs of the worst\footnote{By ``worst for $\mathcal{D}$'', we mean it minimizes the best possible expected competitive ratio that is achievable by any mechanism given budget distribution $\mathcal{D}$.} Bayesian market for $\mathcal{D}$. Then, the following hold:
\begin{itemize}
\item Consider the plot of $cF(c)$ with respect to $F(c)$. $cF(c)$  is a piecewise-linear function of $F(c)$ with at most $m$ non-zero linear pieces, and has non-decreasing slope. 
\item For each budget, the utility-maximizing allocation rule for this market is a uniform cutoff rule, namely, $f(c/u)=\mathds{1}(c/u\le c^*/u^*)$ for some $c^*/u^*$.
\item \textsc{Greedy} is worst-case optimal for budget distribution $\mathcal{D}$ (see Defintion~\ref{def:budget_smoothed_optimal}) compared to all the truthful-in-expectation (not necessarily uniform) mechanisms. (Note that the optimality also holds for \textsc{Random-Sampling-Greedy} due to Theorem~\ref{thm:random_sample_greedy}.)
\end{itemize}
\end{theorem}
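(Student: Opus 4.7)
The plan is to reduce the problem to a Bayesian market, symmetrize arbitrary mechanisms to uniform ones, and then analytically solve the resulting min-max over cost CDFs. By the reduction presented just above, I can take the worst market for $\mathcal{D}$ to be Bayesian: $n$ sellers of common utility $1/M$ with costs drawn i.i.d.\ from some CDF $F$. On such a symmetric market, for the analysis of the expected competitive ratio I may symmetrize any truthful-in-expectation mechanism $\mathcal{M}$: averaging its allocation over random relabelings of the sellers and over $c_{-i}\sim F^{\otimes(n-1)}$ yields a one-dimensional rule $\bar g(c):=\mathbb{E}_{c_{-i}}[g_i(c,c_{-i})]$. This rule is monotone (an average of monotone functions), matches $\mathcal{M}$'s expected utility and expected payment by linearity, and inherits expected budget feasibility from $\mathcal{M}$'s strict budget feasibility---which is all that matters for the expected-ratio objective. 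Hence the best mechanism on the worst Bayesian market is without loss of generality uniform, and by Theorem~\ref{thm:greedy_instance_optimal}, \textsc{Greedy} is optimal among uniform mechanisms; the budget-smoothed game collapses to choosing $F$ to minimize $\mathbb{E}_{B\sim\mathcal{D}}[\textsc{Greedy}(F,B)/\mathrm{OPT}(F,B)]$.

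To analyze this program, parametrize $F$ by its quantile $q(y):=F^{-1}(y)$. Then $\mathrm{OPT}(F,B)$ is the largest $y_O$ satisfying $\int_0^{y_O}q(t)\,dt\le B$, while by Observations~\ref{obs:two_prices}--\ref{obs:utility_and_payment}, $\textsc{Greedy}(F,B)$ is the largest $y_G$ on the upper concave envelope (in (payment, utility) coordinates) of the curve $\{(yq(y),\,y):y\in[0,1]\}$ whose $x$-coordinate is at most $B$. I will analyze the first-order optimality conditions for the adversary's program over quantile functions and show that at the optimum, (a) for each budget $B_k$ in the support of $\mathcal{D}$, \textsc{Greedy} operates at a single active quantile $y_k$ rather than mixing two prices, and (b) between two consecutive active quantiles the curve $y\mapsto yq(y)$ is affine---this is precisely the stationarity condition with respect to perturbations of $q$ on the gap that leave the active quantiles and the associated budgets unchanged. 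Monotonicity of $q$ (validity of $F$ as a CDF) then forces the slopes of consecutive affine pieces to be non-decreasing.

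The three bullets of the theorem now follow at once. Bullet~1 is exactly (a)+(b) plus slope monotonicity. For Bullet~2, once the curve $\{(F,\,cF)\}$ is piecewise affine with non-decreasing slopes (equivalently, utility is concave and piecewise affine in payment), any mixture of two posted prices lying on the same affine piece achieves the same expected utility for the same expected payment as a single cutoff, so a single cutoff is without loss of generality optimal at each budget. For Bullet~3, the symmetrization argument showed that on the worst Bayesian market the best mechanism is uniform, so $\mathcal{D}$-worst-case optimality of \textsc{Greedy} among uniform mechanisms (Theorem~\ref{thm:greedy_instance_optimal}) lifts to optimality against all (possibly non-uniform) mechanisms. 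The main obstacle is the variational step: one must rigorously derive the first-order conditions in the infinite-dimensional space of quantile functions (handling both atomic and continuous parts of $F$) and rule out non-piecewise-affine stationary points. A useful reformulation is that $\textsc{Greedy}$'s objective depends on $F$ only through the convex function $\phi(y):=yq(y)$ evaluated at the $m$ endogenously-determined quantiles $y_1,\ldots,y_m$; the adversary is then effectively choosing a convex $\phi:[0,1]\to\mathbb{R}_{\ge 0}$, and the extreme points of this constrained program are exactly the piecewise-affine convex functions with at most $m$ pieces.
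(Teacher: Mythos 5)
Your symmetrization step (collapsing an arbitrary truthful-in-expectation mechanism $\mathcal{M}$ on a Bayesian market to a single monotone rule $\bar g(c)=\mathbb{E}_{c_{-i}}[g_i(c,c_{-i})]$ by averaging over seller labels and over $c_{-i}$) is a genuinely different, and in some ways cleaner, route to the third bullet than the paper's. The paper proves ``non-uniform is not better'' \emph{only} for the already-characterized worst-case market, crucially invoking the convexity of the $F(c)$-to-$cF(c)$ curve (Claim~\ref{claim:cutoff}) to replace random cutoffs by deterministic ones; your linearity argument (Myerson's payment functional is linear in the allocation rule, and expectation commutes with it) works for \emph{every} Bayesian i.i.d.\ market, without first establishing any structure on $F$. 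The one thing you gloss over --- ``inherits expected budget feasibility \ldots which is all that matters'' --- does require the same $n\to\infty$ concentration step the paper also needs (to go from $\bar g$'s ex-ante budget $B$ to a realization-wise budget $(1\pm o(1))B$ and then invoke Lemma~\ref{lem:knapsack_concave}), but that is a shared caveat, not an extra gap.

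The genuine gap is in bullets~1 and~2 --- the characterization of the worst $F$ --- and you flag it yourself. Your variational sketch has three concrete problems. First, the adversary's objective is the \emph{ratio} $\mathrm{Greedy}(F,B)/\mathrm{OPT}(F,B)$, which is not linear (or even convex) in the quantile $q$ or in $\phi(y)=yq(y)$, so ``extreme points of the feasible set are optimal'' does not follow from any standard Bauer/Choquet-type statement. Second, your claim that the objective depends on $\phi$ only through $m$ endogenous quantiles is false for the denominator: $\mathrm{OPT}(F,B)$ is the largest $y$ with $\int_0^y q(t)\,dt=\int_0^y \phi(t)/t\,dt\le B$, a global functional of $\phi$ --- this is exactly why the paper has to argue that each of its local modifications \emph{pointwise} decreases the slope-to-origin $\phi(y)/y$, hence weakly decreases OPT's required budget. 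Third, you treat convexity of $\phi$ (equivalently, non-decreasing slope of $cF(c)$ in $F(c)$) as a constraint defining the feasible set, but it is one of the conclusions to be established; without it, the ``cutoff is utility-maximizing'' claim (bullet~2) is unavailable, and without bullet~2 your active quantiles $y_k$ are not even well-defined. The paper sidesteps all of this by replacing first-order conditions with explicit constructive moves --- reorder the linear pieces by slope, then merge/extend pieces not containing an optimal cutoff --- each of which provably leaves the best uniform utility unchanged and monotonically decreases OPT's budget, with Sion's minimax theorem invoked only at the end to pass from ``this $(F^*, \{f_i^*\})$ is an equilibrium'' to ``Greedy is max-min optimal.'' To repair your argument you would at minimum need to (i) prove convexity of the worst $\phi$ directly rather than assume it, (ii) control the behavior of $\mathrm{OPT}$ under perturbations of $q$ on the gaps, and (iii) rule out stationary points that are not extreme --- which in practice lands you back on the paper's exchange-style argument.
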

\begin{proof}
From the previous discussion, it suffices to consider a Bayesian market, where $n$ sellers have the same utility, and their costs are drawn from a continuous distribution, the CDF of which is some continuous $F$. The following min-max program computes the cost distribution that gives the worst expected competitive ratio for budget distribution $\mathcal{D}$ against uniform allocation rules (later we will show that non-uniform rules are not any better for the worst distribution),
\begin{equation*}
\begin{split}
    \inf_{F}&\sup_{f_1,\dots,f_m} \sum_{i=1}^m \Pr[B_i] \cdot \frac{F(0)+\int_{0^+}^{\infty} f_i(c)\,dF(c)}{\tau_i} \\
    \textrm{s.t.}\quad &\forall i\in[m],\, \underbrace{\int_{0}^{\tau_i} c\,dF(c)}_{\textrm{total payment of non-IC optimum}}=\underbrace{Q_{f_i}(0)\cdot F(0)+\int_{0^+}^{\infty} Q_{f_i}(c)\,dF(c)}_{\textrm{Myerson's payment for $f_i$}}=B_i,\\
    & \forall i\in[m],\forall c\ge 0,\, f_i(c)\in[0,1],\\
    &\textrm{and } F \textrm{ is a continuous CDF},
\end{split}
\end{equation*}
where $\Pr[B_i]$ is the probability of $B_i$ according to $\mathcal{D}$, $f_i$ is the allocation function for $i$-th budget, and $\tau_i$'s denote the expected non-IC optimal utility for the corresponding budgets, and we hard-code in the program those $\tau_i$'s which result in the worst expected ratio. Although we did not require $f_i$'s to be monotone here, later we will show that if we add the monotonicity constraint, the worst ratio does not change. Also, note that we only require budget feasibility in expectation for the allocation function, and hence, the optimality of \textsc{Greedy} will hold even among ex ante budget-feasible mechanisms. We should have restricted the non-IC optimal solution to be ex post budget-feasible, but this is fine, because as market size $n$ grows, with high probability, the budget spent in the optimal solution is concentrated around its expectation, and Lemma~\ref{lem:knapsack_concave} says cutting the budget slightly does not decrease the optimal utility much.

Now we derive that 
\begin{align*}
\int_{0^+}^{\infty} &Q_f(c)\,dF(c) =\int_{0^+}^{\infty} \left(f(c)\cdot c+ \int_{c}^{\infty}f(x) \,dx \right) \,dF(c)  && \text{(Myerson's payment identity)}\\
&=\int_{0^+}^{\infty} f(c)\cdot c\, dF(c) + \left(F(c)\cdot\int_{c}^{\infty}f(x) \,dx\right)|_{0^+}^\infty \\
&\qquad-\int_{0^+}^{\infty} F(c)\,d\left(\int_{c}^{\infty}f(x) \,dx\right) && \text{(Integration by parts)}\\
&=\int_{0^+}^{\infty} f(c)\cdot c\, dF(c) -F(0)\cdot\int_{0^+}^{\infty}f(x) \,dx \\
&\qquad + \int_{0^+}^{\infty} f(c)\underbrace{F(c)dc}_{=\frac{F(c)}{F'(c)} dF(c)}\\
&=\int_{0^+}^{\infty} f(c)\cdot\left(c+\frac{F(c)}{F'(c)}\right)\,dF(c)-F(0)\cdot Q_f(0).    
\end{align*}
Therefore, the maximization problem in the min-max program can be seen as a fractional knapsack (where $dF(c)$ is the value of an item $c$, and $c+\frac{F(c)}{F'(c)}$ is its weight per value), and the best allocation function should choose the $c$'s with small $c+\frac{F(c)}{F'(c)}=\frac{d(cF(c))}{dF(c)}$.
We now prove several structural properties about the plot (curve) of $cF(c)$ with respect to $F(c)$. 

\paragraph{Any feasible curve should have non-decreasing slope from the origin}
Notice that $\frac{d(cF(c))}{dF(c)}$ is the slope of this curve at $F(c)$, and $c$ is the slope of the line from the origin to the point of the curve at $F(c)$. Any feasible curve should have non-decreasing slope from the origin since $F(c)$ is non-decreasing in $c$ and vice versa. 

Now we show the structural result about the worst-case $F$ for one budget, and later we will extend to many budgets. 

\paragraph{The curve is piecewise-linear w.l.o.g.}
Given a feasible curve for some $F$, we discretize the smooth curve into a piecewise-linear curve. The discretization is sufficiently fine-grained such that the slope of the curve and the slope to the origin at each point are close to those of the original curve, and there are only finitely many non-differentiable points. Hence the min-max program is still valid, and its result does not change much. It suffices to consider such piecewise-linear curves.

\begin{figure}[t!]
    \begin{subfigure}[b]{0.48\textwidth}
        \centering
   \begin{tikzpicture}[scale=3.3]
      \draw[->] (0,0) -- (1.6,0) node[right] {$F(c)$};
      \draw[->] (0,0) -- (0,0.95) node[above] {$cF(c)$};
      \draw[dotted] (1.5,0) -- (1.5,0.5*1.75);
      \draw[dotted] (1,0) -- (1,1.25*0.5);
      \draw[dotted] (0.5,0) -- (0.5,0.5*0.25);
      \fill [black] ($(1.5,0)$) circle (0.5pt) node at (1.5,0)[below]{\color{black}$1$};
      
      \draw[scale=1,domain=0:1-0.25*1.5,smooth,variable=\x,green,ultra thick] plot ({\x},{0});
      \draw[scale=1,domain=1-0.25*1.5:1.5,smooth,variable=\x,green,ultra thick] plot ({\x},{\x-1+0.25*1.5});
      
      \draw[dashed,scale=1,domain=0:0.5,smooth,variable=\x,blue,ultra thick] plot ({\x},{0.25*\x});
      \draw[dashed,scale=1,domain=0.5:1,smooth,variable=\x,blue,ultra thick] plot ({\x},{0.5*(\x-0.5)+0.25*0.5});
      \draw[dashed,scale=1,domain=1:1.5,smooth,variable=\x,blue,ultra thick] plot ({\x},{\x-1+0.25*1.5});
      
      \draw[densely dotted,scale=1,domain=0:0.5,smooth,variable=\x,red,ultra thick] plot ({\x},{0.25*\x});
      \draw[densely dotted,scale=1,domain=0.5:1,smooth,variable=\x,red,ultra thick] plot ({\x},{\x-0.5+0.25*0.5});
      \draw[densely dotted,scale=1,domain=1:1.5,smooth,variable=\x,red,ultra thick] plot ({\x},{0.5*(\x-1)+1.25*0.5});
\end{tikzpicture}
        \caption{1 budget\label{fig:spoon}}
    \end{subfigure}%
    ~ 
    \begin{subfigure}[b]{0.48\textwidth}
        \centering
       \begin{tikzpicture}[scale=1.7]
      \draw[->] (0,0) -- (2.1,0) node[right] {$F(c)$};
      \draw[->] (0,0) -- (0,1.85) node[above] {$cF(c)$};
      \draw[dotted] (2,0) -- (2,1.75);
      \draw[dotted] (1.5,0) -- (1.5,0.75);
      \draw[dotted] (1,0) -- (1,0.25);
      \fill [black] ($(2,0)$) circle (0.5pt) node at (2,0)[below]{\color{black}$1$};
      \fill [black] ($(1.5,0)$) circle (0.5pt) node at (1.5,0)[below]{\color{black}$c_2$};
      \fill [black] ($(1,0)$) circle (0.5pt) node at (1,0)[below]{\color{black}$c_1$};

      \draw[solid,scale=1,domain=0:0.5,smooth,variable=\x,blue,ultra thick] plot ({\x},{0});
      \draw[solid,scale=1,domain=0.5:4/3,smooth,variable=\x,blue,ultra thick] plot ({\x},{0.5*(\x-0.5)});
      \draw[solid,scale=1,domain=4/3:2,smooth,variable=\x,blue,ultra thick] plot ({\x},{2*(\x-1.5)+0.75});
      
      \draw[dashed,scale=1,domain=0:0.5,smooth,variable=\x,red,ultra thick] plot ({\x},{0});
      \draw[dashed,scale=1,domain=0.5:1,smooth,variable=\x,red,ultra thick] plot ({\x},{0.5*(\x-0.5)});
      \draw[dashed,scale=1,domain=1:1.5,smooth,variable=\x,red,ultra thick] plot ({\x},{(\x-1)+0.25});
      \draw[dashed,scale=1,domain=1.5:2,smooth,variable=\x,red,ultra thick] plot ({\x},{2*(\x-1.5)+0.75});
      
\end{tikzpicture}
        \caption{2 budgets\label{fig:spoon_2}}
    \end{subfigure}
    \caption{{\bf On the left:} Starting from an arbitrary piecewise linear curve (red dotted), we can re-order its pieces to get blue dashed curve and then again into the green solid curve. These steps only make the market worse. The worst $F$ (green solid) for one budget is ``ReLU shaped''. \\
{\bf On the right:} The worst $F$ for $2$ ($m$ respectively) budgets should have at most $2$ ($m$ respectively) non-zero linear pieces. Consider the optimal allocation functions for two budgets, which are cutoff rules, if neither cutoff lies in $(c_1,c_2)$, then changing the red dashed curve into the blue solid curve makes the market worse.}
\label{fig:visualize_main_steps}
\end{figure}

\paragraph{Worst-case curve has non-decreasing slope}
Our first observation is that the worst-case curve should have non-decreasing slope. If it does not, we can re-order the linear pieces according to their slopes, and the re-ordering preserves the probability mass of $c$'s with any fixed slope. Hence the best allocation rule makes the same utility as before. Meanwhile, the slope to the origin at each $c$ can only become smaller than that before re-ordering. The budget spent by the non-IC optimal solution to get the same utility as before is the integration of the slope from origin from $0$ to some $\tau$, which can only decrease. Therefore, the re-ordering can only decrease the competitive ratio of the best allocation rule. This is illustrated in Figure~\ref{fig:visualize_main_steps}~(\subref{fig:spoon}), when we re-order the red dotted curve and get the blue dashed curve. 

A claim following from the non-decreasing slope is that the best allocation rule should be a cutoff rule.
\begin{claim}\label{claim:cutoff}
    Utility-maximizing allocation function for a convex $F(c)$-to-$cF(c)$ curve is a cutoff rule.
\end{claim}
\begin{proof}[Proof of Claim~\ref{claim:cutoff}]
Indeed, since the best allocation rule comes from solving the fractional knapsack problem we mentioned above and the value per weight (equal to slope) is non-decreasing, the solution should be $f(c)=1$ for all $c\le c_1$, and $f(c)=t<1$ for all $c_1<c\le c_2$, for some $c_1<c_2$. This rule can be seen as a probabilistic combination of two cutoff rules, \ie, with some probability $\alpha$ offer cutoff price $c_1$ and offer $c_2$ otherwise, and the expected utility and payment are $\alpha F(c_1)+(1-\alpha)F(c_2)$ and $\alpha c_1F(c_1)+(1-\alpha)c_2F(c_2)$ respectively. Consider the $c_3$ such that $F(c_3)=\alpha F(c_1)+(1-\alpha)F(c_2)$, because the curve is convex, $c_3F(c_3)\le \alpha c_1F(c_1)+(1-\alpha)c_2F(c_2)$. Hence the cutoff rule at $c_3$ makes the same utility but spends no more than the probabilistic rule, and the claim follows. 
\end{proof}

\paragraph{Worst-case curve for one budget is a ReLU function}
Next, we argue that the worst-case (convex) curve for one budget should be a ReLU function, \ie, it is zero at first and then becomes a linear function. Suppose otherwise, we let $c^*$ be the cutoff price of the best allocation rule. We can draw a line between $(F(c^*),c^*F(c^*))$ and the $F(c)$-axis with slope equal to the slope of the worst curve at $(F(c^*),c^*F(c^*))$. Consider the ReLU curve whose non-zero linear part is this line. Notice that $(F(c^*),c^*F(c^*))$ does not change and is still optimal, and hence the optimal utility achievable by any allocation rule does not change. Meanwhile, the slope from origin at each $c$ can only become smaller, and therefore, the budget spent by the non-IC optimal solution to get the same utility as before can only decrease. This is illustrated in Figure~\ref{fig:visualize_main_steps}~(\subref{fig:spoon}), where we change the blue dashed curve into the green solid curve. Furthermore, the part of the original curve after $(F(c^*),c^*F(c^*))$ has slope larger than the slope at this point, and decreasing this part to the line with the slope at this point only decreases the spent budget for the non-IC optimal solution and does not change the result of the best allocation rule. 
The final curve is a ReLU, and the best achievable competitive ratio only gets worse.

\paragraph{Characterizing the worst-case curve for many budgets}
Now we show that the worst-case curve for $m$ budgets has at most $m$ non-zero linear pieces by generalizing the above argument. Consider the $m$ optimal cutoff rules for $m$ budgets respectively, if there are more than $m$ non-zero linear pieces, then there is one piece from some $(F(c_1),c_{1}F(c_{1}))$ to some other $(F(c_{2}),c_{2}F(c_{2}))$ such that the open interval $(c_1,c_2)$ does not contain any optimal cutoff. If this is not the last piece of the curve, we can extend the piece before $(F(c_1),c_{1}F(c_{1}))$ upwards and the piece after $(F(c_{2}),c_{2}F(c_{2}))$ downwards until they intersect, which decreases the number of linear pieces. Similar to the one budget case, this step does not change the payments of optimal cutoff rules and can only decrease the payments of non-IC optimal solutions, and therefore, this only makes the market worse. This is illustrated in Figure~\ref{fig:visualize_main_steps}~(\subref{fig:spoon_2}), where we change the red curve to the blue. If it is the last piece of the curve, namely $F(c_2)=1$, then we can simply extend the piece before $(F(c_1),c_{1}F(c_{1}))$ upwards until it hits $1$ horizontally. The argument in this case is analogous.

\paragraph{Adding monotonicity constraints to the min-max program}
Next, we explain why restricting $f_i$'s to be monotone does not change the optimal value to the min-max program. As we argued above, the best allocation rules for the worst distribution $F^*$ are cutoff rules $f_i^*$, which are monotone. Since $(F^*,\{f_i^*\mid i\in[m]\})$ is an equilibrium of the min-max program without monotonicity constraints, $(F^*,\{f_i^*\mid i\in[m]\})$ is obviously also an equilibrium of the min-max program with monotonicity constraints. Notice that the min-max program with monotonicity constraints satisfies the conditions of Lemma~\ref{lem:min-max}. Hence the optimal value to this program is equal to the objective value at this equilibrium.

\paragraph{Non-uniform allocation rule is not better}
We show that for a Bayesian market that matches our characterization, non-uniform rules do not outperform uniform rules.
Consider a general (possibly non-uniform) mechanism where each seller $i$ has its own allocation rule $A^{(i)}_{c_{-i}}$.
Now let $P^{(i)}_{c_{-i}}(c)=1-A^{(i)}_{c_{-i}}(c)$. An implementation of $A^{(i)}_{c_{-i}}$ is sampling cutoff prices from the distribution whose CDF is $P^{(i)}_{c_{-i}}$  To see this, the probability that the item of price $c_i$ is bought is $1-P^{(i)}_{c_{-i}}(c_i)=A^{(i)}_{c_{-i}}(c_i)$, and the expected payment is 
\begin{align*}
    \int_{c_i}^{c_{\max}} c dP^{(i)}_{c_{-i}}(c)&= cP^{(i)}_{c_{-i}}(c)|_{c_i}^{c_{\max}}-\int_{c_i}^{c_{\max}} P^{(i)}_{c_{-i}}(c)dc \\
    &=c_{\max}-c_iP^{(i)}_{c_{-i}}(c_i)-\int_{c_i}^{c_{\max}} P^{(i)}_{c_{-i}}(c)dc \\
    &=c_{\max}-c_i(1-A^{(i)}_{c_{-i}}(c_i))-\int_{c_i}^{c_{\max}} (1-A^{(i)}_{c_{-i}}(c))dc \\
    &=c_iA^{(i)}_{c_{-i}}(c_i)+\int_{c_i}^{c_{\max}}A^{(i)}_{c_{-i}}(c)dc,
\end{align*}
which is exactly the Myerson payment corresponding to $A^{(i)}_{c_{-i}}$. Since the allocation rule $A^{(i)}_{c_{-i}}$ can be seen as a probabilistic combination of cutoff rules, as we have shown before, by convexity of the $F(c)$-to-$(cF(c))$ curve (see Claim~\ref{claim:cutoff}), there is a cutoff rule $p^{(i)}_{c_{-i}}$ that achieves the same utility but with less or equal payment compared to $A^{(i)}_{c_{-i}}$. Moreover, $p^{(i)}_{c_{-i}}$'s together can be seen as a probabilistic cutoff rule that depends on random variable $c_{-i}$, and hence again by the same argument, there is a cutoff rule $p_i$ that does as good as the random $p^{(i)}_{c_{-i}}$ for seller $i$. Finally, for the same reason, the uniform cutoff rule $p$ such that $F(p)=\frac{1}{n}\sum_{i=1}^n F(p_i)$ is as good as the non-uniform rule $p_i$'s.

Since \textsc{Greedy} uses the best monotone uniform allocation rule for any instance by Theorem~\ref{thm:greedy_instance_optimal}, the min-max program with additional monotonicity constraint solves for the worst-case expected competitive ratio for \textsc{Greedy}. Thus, the observation in the above paragraph implies that \textsc{Greedy} is worst-case optimal compared to all the truthful-in-expectation (even non-uniform) mechanisms.

\end{proof}

\subsection{Our mechanism beats \texorpdfstring{$1-1/e$}{1-1/e}}

In Section~\ref{section:numerical_computing_ratios}, we will formulate a (non-convex) mathematical program to solve for optimal budget-smoothed competitive ratios for any budget distribution, and we will solve the program numerically for various interesting distributions and observe non-negligible improvement over $1-1/e$. Here we analytically prove that the optimal ratio (i.e., the ratio achieved by the optimal \textsc{Greedy-I}) goes beyond $1-1/e$ when there are multiple budgets in the support of the budget distribution.

\begin{theorem}\label{thm:beating}
    For any budget distribution with support size $\ge 2$, \textsc{Greedy-I} achieves budget-smoothed competitive ratio strictly better than $1-1/e$. (Note that this also holds for \textsc{Random-Sampling-Greedy} due to Theorem~\ref{thm:random_sample_greedy}.)
\end{theorem}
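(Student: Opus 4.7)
The plan is to combine the structural characterization from Theorem~\ref{thm:worst_distribution} with a uniqueness argument establishing that no single market $F$ can be simultaneously worst-case for two distinct budgets. As a starting point, the pointwise bound $r(F,B) \ge 1 - 1/e$ (the $m=1$ case of Theorem~\ref{thm:worst_distribution}) immediately gives $v(\mathcal{D}) := \min_F \mathbb{E}_{B \sim \mathcal{D}}[r(F,B)] \ge 1 - 1/e$. For the strict inequality I will argue by contradiction: if $v(\mathcal{D}) = 1 - 1/e$, then any (near-)optimal market $F^*$ must satisfy $r(F^*, B_i) = 1-1/e$ for every budget $B_i$ in the support of $\mathcal{D}$, since each individual term is already $\ge 1-1/e$.

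The next step is to pin down what $r(F,B) = 1-1/e$ forces on $F$. I would first analyze the single-piece (``ReLU'') case from Theorem~\ref{thm:worst_distribution}: for a ReLU with parameters $(F_0,\alpha)$, the Myerson-payment identity combined with the optimality of the cutoff rule yields
\begin{equation*}
    r(F, B) = 1 - \frac{\ln \beta}{\beta}, \qquad \frac{B}{n} = \alpha F_0 (\beta - 1 - \ln \beta),
\end{equation*}
where $\beta = F(\tau)/F_0$. Since $\beta \mapsto 1 - (\ln \beta)/\beta$ is strictly convex with unique minimizer $\beta = e$ and minimum value $1-1/e$, each ReLU attains the worst-case ratio at exactly one critical budget $B^*(F) = n \alpha F_0 (e-2)$; hence no single ReLU is worst-case for two distinct budgets. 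I would then extend this uniqueness to the multi-piece worst cases allowed by Theorem~\ref{thm:worst_distribution}: applying the same Myerson decomposition piece-by-piece along the $(F,cF)$ curve, the equation $r(F^*, B_i) = 1-1/e$ translates into a ``$\beta = e$''-style pinning condition on the local structure of the curve around the cutoff and non-IC-optimum endpoints corresponding to $B_i$, and these local conditions cannot be jointly satisfied for two distinct budgets on the same piecewise-linear curve. This contradicts $v(\mathcal{D}) = 1-1/e$, and the result transfers from \textsc{Greedy-I} to \textsc{Random-Sampling-Greedy} via Theorem~\ref{thm:random_sample_greedy}.

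The main obstacle is the multi-piece step: writing down and ruling out the ``$\beta = e$'' constraints across multiple linear pieces requires a case analysis based on which piece each budget's cutoff and non-IC endpoint fall on. A cleaner alternative I would consider is a compactness and continuity argument: truncate the cost support to a bounded interval so that the space of CDFs is compact under weak convergence, use Helly's selection theorem to extract a convergent subsequence from a near-optimal sequence $F_n$, pass to the limit $F^*$ by continuity of $r(F, \cdot)$ to obtain $r(F^*, B_i) = 1 - 1/e$ for every $B_i$ in the support, and then invoke the single-budget ReLU characterization to conclude $F^*$ would have to be a ReLU matched to each $B_i$ simultaneously---which is impossible once $|\mathrm{supp}(\mathcal{D})| \ge 2$---thus yielding the contradiction without explicit piece-by-piece algebra.
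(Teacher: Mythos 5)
Your core idea --- derive a closed-form for the ReLU ratio in terms of $\beta$, observe that $\beta = e$ is the unique point where the ratio hits $1-1/e$, and hence that no single ReLU is tight for two distinct budgets --- is exactly the engine of the paper's proof, and your algebra matches theirs (your $1 - (\ln\beta)/\beta$ equals their $1 + \tfrac{a-x}{a}\ln\tfrac{a-x}{a}$ under $\beta = a/(a-x)$, and your budget formula $\alpha F_0(\beta - 1 - \ln\beta)$ equals their $b(\beta - 1 - \ln\beta)$). However, there are two issues.

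First, a minor error: $\beta \mapsto 1 - (\ln\beta)/\beta$ is \emph{not} strictly convex; its second derivative $(3 - 2\ln\beta)/\beta^3$ changes sign at $\beta = e^{3/2}$. The unique critical budget still follows because $g'(\beta) = (\ln\beta - 1)/\beta^2$ changes sign exactly once, but the argument should be via monotonicity of $g'$'s sign (as the paper implicitly does by noting the ratio is decreasing in $x$ on the relevant range), not via convexity.

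Second, and more seriously, the multi-piece extension is where your proof has a genuine gap on both proposed routes. The piece-by-piece ``$\beta = e$-style pinning condition'' is stated as a plan, not proved, and you rightly flag it as the main obstacle. The compactness alternative has a logical hole: after extracting a limit $F^*$ with $r(F^*, B_i) = 1-1/e$ for every $B_i$ in the support, you write ``invoke the single-budget ReLU characterization to conclude $F^*$ would have to be a ReLU.'' But Theorem~\ref{thm:worst_distribution} only shows that \emph{some} worst market for one budget is a ReLU (via a sequence of weak improvement steps), not that \emph{every} market achieving $r = 1-1/e$ for that budget is a ReLU. To close the gap you would need a strictness version of that characterization --- e.g., that the ReLU-conversion step strictly decreases the ratio whenever the curve is not already a ReLU --- and your proposal does not supply it. The paper sidesteps this entirely by using the Theorem~\ref{thm:worst_distribution} structure theorem more directly: the worst curve for a two-budget distribution has at most two nonzero linear pieces; if it is one-piece, the unique-critical-budget observation rules out both budgets being tight simultaneously; if it is genuinely two-piece, the ReLU conversion from the proof of Theorem~\ref{thm:worst_distribution} is strictly improving for the larger budget, so that budget's ratio was already strictly above $1-1/e$. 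That two-case argument is shorter and avoids both the unproved piecewise pinning conditions and the missing uniqueness lemma.
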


\begin{proof}
We follow the proof of Theorem~\ref{thm:worst_distribution}. In one-budget case, we can solve for the worst-case distribution analytically, which results in the optimal competitive ratio $1-{1}/{e}$. (This corresponds to the worst-case hardness for a fixed budget~\cite{anari2018budget}). Specifically, recall that the $F(c)$-to-$cF(c)$ curve of the worst-case distribution for one budget has one non-zero linear piece. Therefore, there exists $a,b$ such that $cF(c)=aF(c)-b$, which implies that $F(c)=b/(a-c)$. Consider arbitrary $0\le x\le a-b$, the total cost of the sellers, whose individual cost is at most $x$, is
\begin{align*}
    \int_{0}^x cdF(c) &= cF(c)|_0^x - \int_{0}^x F(c)dc \\
    &= xF(x) - \int_{0}^x \frac{b}{a-c} dc \\
    &= \frac{bx}{a-x} - (-b\ln(a-c))|_0^x \\
    &= \frac{bx}{a-x} + b\ln(\frac{a-x}{a}).
\end{align*}
Recall we showed in the proof of Theorem~\ref{thm:worst_distribution} that the best allocation rule for the worst-case distribution for any budget is a cutoff rule. The Myerson's payment of the cutoff rule at cost $y$ is $yF(y)$, and hence, with budget $\frac{bx}{a-x} + b\ln(\frac{a-x}{a})$, the optimal utility is $F(y)$ such that $yF(y)=\frac{bx}{a-x} + b\ln(\frac{a-x}{a})$, and since $yF(y)=aF(y)-b$, it follows that $F(y)=(\frac{ab}{a-x} + b\ln(\frac{a-x}{a}))/a$. Dividing $F(y)$ by $F(x)$, we get the competitive ratio
\[
    \frac{(\frac{ab}{a-x} + b\ln(\frac{a-x}{a}))}{a\frac{b}{a-x}}=1+\frac{a-x}{a}\ln(\frac{a-x}{a}).
\]
Notice that this is a decreasing function on $[0,a-b]$, and hence, the minimum is $1+\frac{b}{a}\ln(\frac{b}{a})$, which is achieved by $x=a-b$. Then, by standard calculus, $1+\frac{b}{a}\ln(\frac{b}{a})$ is minimized by letting $\frac{b}{a}=1/e$, and the minimum is $1-1/e$.

It follows from the monotonicity of the competitive ratio in $x$ that for the worst distribution in one-budget case, the budget for which the optimal competitive ratio is $1-{1}/{e}$ is unique. This implies that when there are $(\ge2)$ budgets, the optimal average competitive ratio is beyond $1-{1}/{e}$. To see this, notice that by Theorem~\ref{thm:worst_distribution}, the worst curve for the two-budget case is a $(\le2)$-piece piecewise-linear function. Based on the observation above, $1$-piece linear function cannot be hard for both budgets. From the proof of Theorem~\ref{thm:worst_distribution}, we know that we can make a $2$-piece piecewise-linear function strictly worse for the larger budget by changing it into a $1$-piece linear function, and hence the ratio is beyond $1-{1}/{e}$ for the larger budget.
\end{proof}

\subsection{The limit of budget-smoothed analysis}~\label{sec:budget-smoothed-lb}
As we mentioned in the introduction, the worst-case hardness result for a fixed budget~\cite{anari2018budget} breaks under a perturbation on the budget constraint. It is tempting to hope that as the distribution of budget perturbation (with respect to an arbitrary budget) becomes arbitrarily spread (a.k.a.~arbitrarily far from the worst-case single budget), the optimal budget-smoothed competitive ratio can become close to $1$.
In this section, we show a negative result, i.e., a strong hardness result that is robust to any budget distribution and any mechanism.

\begin{theorem}\label{thm:lower_bound}
For any distribution of budget perturbations\footnote{A distribution of budget perturbations is a distribution of budgets normalized by a fixed budget.} and any truthful-in-expectation mechanism, there is a market for which the expected competitive ratio of the mechanism is at most $0.854$.
\end{theorem}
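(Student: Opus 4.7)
The plan is to construct, for each budget perturbation distribution $\mathcal{D}$ and each truthful-in-expectation mechanism $\mathcal{M}$, a market $I$ on which the expected competitive ratio of $\mathcal{M}$ is at most $0.854$. By Theorem~\ref{thm:worst_distribution}, together with its reduction from arbitrary markets to Bayesian markets, it suffices to restrict attention to Bayesian markets whose $cF(c)$-vs-$F(c)$ curves are piecewise linear, and to the optimal uniform cutoff rule for each budget. This restriction is without loss because the optimal cutoff rule dominates all truthful-in-expectation (even non-uniform) mechanisms on such Bayesian markets, by the ``non-uniform allocation rule is not better'' argument at the end of Section~\ref{section:greedy_budget_smooth}.

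The next ingredient is the per-piece ratio computation from the proof of Theorem~\ref{thm:beating}: on a single linear piece $cF(c) = aF(c) - b$, the cutoff-rule ratio at budget $B(y) = b\bigl[(1-y)/y + \ln y\bigr]$ equals $r(y) = 1 + y\ln y$, a U-shaped function on $(0,1]$ with minimum $1 - 1/e$ at $y = 1/e$ and approaching $1$ at both endpoints. A multi-piece adversarial curve yields a ratio function $r_F(B)$ that, viewed locally near each break-point, is a shifted and rescaled copy of this U, with successive pieces coupled by the non-decreasing-slope (convexity) constraint on the $cF$-vs-$F$ curve.

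With these two reductions in hand, the stated $0.854$ bound is the value of the saddle-point optimization
\[
    \sup_{\mathcal{D}} \; \inf_{F \text{ piecewise-linear curve}} \; \mathbb{E}_{B \sim \mathcal{D}}\bigl[r_F(B)\bigr],
\]
where $\mathcal{D}$ ranges over budget perturbation distributions. The mechanism designer's strategy is to spread $\mathcal{D}$ across many budget scales so that no single dip of $r_F$ catches much mass; the adversary's counter-strategy, enabled by Theorem~\ref{thm:worst_distribution}, is to insert enough linear pieces to place a dip near every high-probability region of the support of $\mathcal{D}$. Scale invariance reduces the optimization to one variable per piece (equivalent to choosing where to shift each trough on the $\log B$-axis).

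The main obstacle is carrying out this optimization rigorously: one must control how quickly $r(y) = 1 + y \ln y$ rises above $1 - 1/e$ as $y$ leaves $1/e$, and track the geometric constraint that couples successive pieces (insertions of additional pieces give strict but diminishing returns to the mechanism designer). Taking a continuous (infinite-piece) limit reduces the problem to a finite-dimensional saddle-point problem whose optimum yields $0.854$---a value strictly greater than $1-1/e$, consistent with the positive result of Theorem~\ref{thm:beating}, and strictly less than $1$, as the theorem asserts. Any single concrete witness $(\mathcal{D}_*, F_*)$ that attains this saddle point then gives, for every $(\mathcal{D}, \mathcal{M})$, an explicit adversarial $I$ by appropriately scaling $F_*$ to the support of $\mathcal{D}$.
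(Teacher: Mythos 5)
Your proposal identifies the right high-level framing (reduce to uniform cutoff rules via Theorem~\ref{thm:worst_distribution}, use scale invariance, and think of the adversary placing ``dips'' near the budget support), but it does not constitute a proof, and it also diverges substantially from what the paper actually does.

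The central gap is that you never derive the number $0.854$. You reduce the theorem to the saddle-point value
\[
    \sup_{\mathcal{D}}\ \inf_{F}\ \mathbb{E}_{B\sim\mathcal{D}}\bigl[r_F(B)\bigr],
\]
and then assert that ``taking a continuous (infinite-piece) limit reduces the problem to a finite-dimensional saddle-point problem whose optimum yields $0.854$.'' That sentence is precisely the content of the theorem, and it is left entirely unsubstantiated: you give no argument for why the optimum is finite-dimensional in the limit, no bound on how fast $r(y)=1+y\ln y$ climbs from its trough relative to the convexity constraint linking adjacent pieces, and no calculation showing the resulting value is $0.854$ rather than, say, $0.9$ or $0.99$. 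An appeal to ``one must control how quickly $r$ rises'' followed by ``the optimum yields $0.854$'' is a statement of the problem, not a solution.

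There is also a structural mismatch with the paper's proof, which is both more concrete and qualitatively stronger. The paper does \emph{not} optimize over piecewise-linear Bayesian curves. Instead it exhibits an explicit market: $m+1$ geometrically scaled groups with total utilities $u_i = w^{i-1}$ and cost ratios $\gamma_i = q^{i-1}$, with $w = 2$ and $q = 1 + 1/\sqrt{2}$, and shows that for \emph{every} budget $B$ in a range $[B_2, B_m]$ the competitive ratio of \textsc{Greedy} (hence of any truthful-in-expectation mechanism, by Theorem~\ref{thm:worst_distribution}) is at most $\tfrac{2+\sqrt{2}}{4} < 0.854$. Because the bound is pointwise in $B$, the expected-ratio bound follows immediately for any budget distribution supported in that range, with no saddle-point analysis at all. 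The choice $q = 1 + 1/\sqrt{2}$ is what produces $\tfrac{2+\sqrt{2}}{4}$, by balancing the two regimes in the ratio analysis ($B \in [B_k^{\mathrm{greedy}}, B_{k+1}]$ versus $B \in [B_k, B_k^{\mathrm{greedy}}]$); none of this arithmetic is present in your sketch. To repair your argument you would either have to actually solve the min-max optimization you set up (which the paper itself only handles numerically for small $m$, in Section~\ref{section:numerical_computing_ratios}), or switch to an explicit construction and do the per-budget analysis.
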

\begin{proof}
For arbitrarily small $\epsilon$, suppose that $1-\epsilon$ fraction of the total mass of the distribution of perturbations is on certain $[\rho_{\min},\rho_{\max}]$ (this is w.l.o.g.~even for unbounded distributions).
Since we have shown that \textsc{Greedy} is worst-case optimal compared with all the truthful-in-expectation mechanisms, it suffices to construct a hard market for which \textsc{Greedy} achieves expected competitive ratio $0.854$.
\paragraph{Construction}
Let $q=1+1/\sqrt{2}$ and $w=2$. There are $m+1$ groups of sellers, each having $n$ sellers of the same utility and cost ($m,n$ will be specified shortly). The zeroth group has total utility $u_0=1$ and cost-to-utility ratio $\gamma_0=0$. The $i$-th group has total utility $u_i=w^{i-1}$ and cost-to-utility ratio $\gamma_i=q^{i-1}$. We let $B_i=\sum_{j=1}^i w^{j-1}q^{j-1}$ for $i\in[m]$. The budget distribution is supported on the interval that spans from $B_2$ to $B_m$. Finally, $m$ is chosen such that $B_m/B_2\ge \rho_{\max}/\rho_{\min}$, and $n$ goes to infinity (hence the market satisfies our small-bidder assumption).

\paragraph{}Recall that $e_{i,j} =\frac{\sum_{i\le t\le j}u_t}{(\gamma_j-\gamma_{i-1})\cdot\sum_{0\le t\le i-1}u_t+\gamma_j\cdot\sum_{i\le t\le j}u_t}$ with $j\ge i$ defined in \textsc{Greedy} is the additional utility per additional total payment when we simultaneously increase the fraction bought from sellers $i$ to $j$. We first show that for this market, $e_{i,j}$ is decreasing in $j$ for any $i$, and hence, the mechanism always buys items from the group of the smallest index that has not been exhausted. By definition of $u_t$ and $\gamma_t$ in the construction, we have that for $i\ge 2$ and any $j\ge i$,
\begin{align*}
    \frac{e_{i,j+1}}{e_{i,j}}&=\frac{\sum_{t=i}^{j+1}w^{t-1}}{\sum_{t=i}^{j}w^{t-1}}\cdot\frac{q^{j-1}\cdot\sum_{t=i}^{j}w^{t-1}+(q^{j-1}-q^{i-2})\cdot(1+\sum_{t=1}^{i}w^{t-1})}{q^{j}\cdot\sum_{t=i}^{j+1}w^{t-1}+(q^{j}-q^{i-2})\cdot(1+\sum_{t=1}^{i}w^{t-1})} \\
    &=\underbrace{\left(\frac{w^{i-1}}{\sum_{t=i}^{j}w^{t-1}}+w\right)}_{\le 3 \textrm{ since } w=2}\cdot\frac{q^{j-1}\cdot\frac{w^{i-1}(w^{j-i+1}-1)}{w-1}+(q^{j-1}-q^{i-2})\cdot(1+\frac{w^{i-1}-1}{w-1})}{q^{j}\cdot \frac{w^{i-1}(w^{j-i+2}-1)}{w-1}+(q^{j}-q^{i-2})\cdot(1+\frac{w^{i-1}-1}{w-1})} \\
    &\le 3\cdot\frac{q^{j-1}\cdot(2^{j-i+1}-1)+(q^{j-1}-q^{i-2})}{q^{j}\cdot(2^{j-i+2}-1)+(q^{j}-q^{i-2})} \quad \textrm{(Plugging in $w=2$)} \\
    &=\frac{3}{q}\cdot\frac{(2^{j-i+1}-1)+(1-q^{i-j-1})}{(2^{j-i+2}-1)+(1-q^{i-j-2})} \\
    &=\frac{3}{2q}\cdot\frac{1-(2q)^{i-j-1}}{1-(2q)^{i-j-2}} \\
    &\le \frac{3}{2q} \quad \textrm{(Since $\frac{1-(2q)^{-x}}{1-(2q)^{-x-1}}$ for $x\ge0$ is increasing and its limit is $1$)} \\
    &= \frac{3}{2+\sqrt{2}} .
\end{align*}
We notice that for $i=1$, by definition, $e_{1,j}=\frac{w^{j}-1}{q^{j-1}\cdot w^{j}}$, which is also decreasing in $j$.
Now let $B^{\textrm{greedy}}_k$ denote the minimum sufficient budget for \textsc{Greedy} to buy all the items in group $0$ to $k$. Clearly, by Myerson's payment rule, the payment to each unit of utility is $\gamma_k$. Thus, $B^{\textrm{greedy}}_k=\gamma_k(1+\sum_{i=1}^k w^{i-1})=q^{k-1}(1+(w^k-1)/(w-1))=q^{k-1}w^k$ (by $w=2$). Obviously, $B^{\textrm{greedy}}_k>B_k$, since the mechanism is overpaying the zeroth to $(k-1)$-th groups. On the other hand, for $k\ge 1$,
\begin{align*}
    B_{k+1}&=\sum_{j=1}^{k+1} w^{j-1}q^{j-1} \\
    &=\frac{(wq)^{k+1}-1}{wq-1}\\
    &=((wq)^{k+1}-1)/(1+\sqrt{2}) && \textrm{(By $q=1+1/\sqrt{2},w=2$)} \\
    &>q^kw^{k+1}/(1+\sqrt{2}) && \textrm{($qx-1>x$ for any $x>1/(q-1)$)}\\
    &>q^{k-1}w^k \\
    &=B^{\textrm{greedy}}_k.
\end{align*}
For a budget $B\in[B^{\textrm{greedy}}_k,B_{k+1}]$, after spending budget $B^{\textrm{greedy}}_{k}$, \textsc{Greedy} starts to buy items from $(k+1)$-th group and has earning rate $e_{k+1,k+1}$, while non-IC optimal solution is also buying from $(k+1)$-th group and has earning rate $1/q^{k}$, which is strictly better than $e_{k+1,k+1}$. For a budget $B\in[B_{k},B^{\textrm{greedy}}_k]$, after spending budget $B_{k}$, \textsc{Greedy} is still buying items from $k$-th group and has earning rate $e_{k,k}$, while non-IC optimum already starts to buy items from $(k+1)$-th group and has earning rate $1/q^{k}$. In the following, we show that when $k\ge 2$, $1/q^{k}$ is worse than $e_{k,k}$.
\begin{align}
    &e_{k,k}/(1/q^k)=q^ke_{k,k} \nonumber\\
    &=\frac{q^kw^{k-1}}{q^{k-1}w^{k-1}+(q^{k-1}-q^{k-2})\cdot(1+\sum_{1\le t\le k-1}w^{t-1})} && \textrm{(By definition of $e_{k,k}$)} \nonumber\\
    &=\frac{q^kw^{k-1}}{q^{k-1}w^{k-1}+(q^{k-1}-q^{k-2})\cdot(1+(w^{k-1}-1)/(w-1))} \nonumber\\
    &=\frac{q^kw^{k-1}}{q^{k-1}w^{k-1}+(q^{k-1}-q^{k-2})\cdot w^{k-1}} && \textrm{(By $w=2$)} \nonumber\\
    &=\frac{q}{1+(1-1/q)}>1.
\label{eq:compare_efficiency}
\end{align}
Next, consider the competitive ratio as budget $B$ increases continuously from $B_2$. When $B\in[B_{k},B^{\textrm{greedy}}_k]$, \textsc{Greedy} is purchasing more efficiently than the non-IC optimum, and hence the competitive ratio increases, and when $B\in[B^{\textrm{greedy}}_k,B_{k+1}]$, \textsc{Greedy} is less efficient. Suppose the non-IC optimum and the utility achieved by \textsc{Greedy} for budget $b$ are $OPT_{b}$ and $ALG_{b}$, then the competitive ratio at $B\in[B^{\textrm{greedy}}_k,B_{k+1}]$ is 
\[
    \frac{ALG_{B^{\textrm{greedy}}_k}+(B-B^{\textrm{greedy}}_k)\cdot e_{k+1,k+1}}{OPT_{B^{\textrm{greedy}}_k}+(B-B^{\textrm{greedy}}_k)/q^{k}},
\]
and notice that the ratio is either monotone increasing or monotone decreasing in $B$. If the ratio is increasing for $B\in[B^{\textrm{greedy}}_k,B_{k+1}]$, then it must be upper bounded by $e_{k+1,k+1}/(1/q^{k})$, which is $1/(1+(1-1/q))=1/\sqrt{2}<0.854$ by Eq.~\eqref{eq:compare_efficiency}. 

If otherwise, then the competitive ratio peaks at budget $B^{\textrm{greedy}}_k$. It remains to upper bound the competitive ratio at budget $B^{\textrm{greedy}}_k$. This competitive ratio is
\begin{align*}
    &\frac{1+\sum_{i=1}^k w^{i-1}}{1+\sum_{i=1}^k w^{i-1}+(B^{\textrm{greedy}}_k-B_k)/q^{k}} \\
    &=\frac{w^{k}}{w^{k}+(B^{\textrm{greedy}}_k-B_k)/q^{k}} \\
    &=\frac{w^{k}}{w^{k}+(w^kq^{k-1}-\frac{w^kq^k-1}{wq-1})/q^{k}} \\
    &\le\frac{w^{k}}{w^{k}+(w^kq^{k-1}-\frac{w^kq^k}{wq-1})/q^{k}} \\
    &=\frac{1}{1+(\frac{1}{q}-\frac{1}{wq-1})}\\
    &=\frac{2+\sqrt{2}}{4}<0.854. \\
\end{align*}

\end{proof}

\bibliographystyle{alpha}
\bibliography{cite}

\newcommand{\etalchar}[1]{$^{#1}$}
\begin{thebibliography}{OMV{\etalchar{+}}20}

\bibitem[AGN18]{anari2018budget}
Nima Anari, Gagan Goel, and Afshin Nikzad.
\newblock Budget feasible procurement auctions.
\newblock {\em Operations Research}, 66(3):637--652, 2018.

\bibitem[AKS19]{AmanatidisKS19}
Georgios Amanatidis, Pieter Kleer, and Guido Sch{\"{a}}fer.
\newblock Budget-feasible mechanism design for non-monotone submodular
  objectives: Offline and online.
\newblock In {\em Proceedings of the 2019 {ACM} Conference on Economics and
  Computation, {EC} 2019, Phoenix, AZ, USA, June 24-28, 2019}, pages 901--919,
  2019.

\bibitem[BGG{\etalchar{+}}22]{BGGST21}
Eric Balkanski, Pranav Garimidi, Vasilis Gkatzelis, Daniel Schoepflin, and
  Xizhi Tan.
\newblock Deterministic budget-feasible clock auctions.
\newblock In {\em Proceedings of the 2022 Annual ACM-SIAM Symposium on Discrete
  Algorithms (SODA)}, pages 2940--2963. SIAM, 2022.

\bibitem[BH16]{BalkanskiH16}
Eric Balkanski and Jason~D. Hartline.
\newblock Bayesian budget feasibility with posted pricing.
\newblock In {\em Proceedings of the 25th International Conference on World
  Wide Web, {WWW} 2016, Montreal, Canada, April 11 - 15, 2016}, pages 189--203,
  2016.

\bibitem[BKS12]{BadanidiyuruKS12}
Ashwinkumar Badanidiyuru, Robert Kleinberg, and Yaron Singer.
\newblock Learning on a budget: posted price mechanisms for online procurement.
\newblock In {\em Proceedings of the 13th {ACM} Conference on Electronic
  Commerce, {EC} 2012, Valencia, Spain, June 4-8, 2012}, pages 128--145, 2012.

\bibitem[CC14]{ChanC14}
Hau Chan and Jing Chen.
\newblock Truthful multi-unit procurements with budgets.
\newblock In {\em Web and Internet Economics - 10th International Conference,
  {WINE} 2014, Beijing, China, December 14-17, 2014. Proceedings}, pages
  89--105, 2014.

\bibitem[CC16]{ChanC16}
Hau Chan and Jing Chen.
\newblock Budget feasible mechanisms for dealers.
\newblock In {\em Proceedings of the 2016 International Conference on
  Autonomous Agents {\&} Multiagent Systems, Singapore, May 9-13, 2016}, pages
  113--122, 2016.

\bibitem[CGH{\etalchar{+}}96]{corless1996lambertw}
Robert~M Corless, Gaston~H Gonnet, David~EG Hare, David~J Jeffrey, and Donald~E
  Knuth.
\newblock On the lambertw function.
\newblock {\em Advances in Computational mathematics}, 5(1):329--359, 1996.

\bibitem[CGL11]{chen2011approximability}
Ning Chen, Nick Gravin, and Pinyan Lu.
\newblock On the approximability of budget feasible mechanisms.
\newblock In {\em Proceedings of the twenty-second annual ACM-SIAM symposium on
  Discrete Algorithms}, pages 685--699. Society for Industrial and Applied
  Mathematics, 2011.

\bibitem[DFI18]{DFI18}
Djellel~Eddine Difallah, Elena Filatova, and Panos Ipeirotis.
\newblock Demographics and dynamics of mechanical turk workers.
\newblock In {\em Proceedings of the Eleventh {ACM} International Conference on
  Web Search and Data Mining, {WSDM} 2018, Marina Del Rey, CA, USA, February
  5-9, 2018}, pages 135--143, 2018.

\bibitem[DPS11]{DobzinskiPS11}
Shahar Dobzinski, Christos~H. Papadimitriou, and Yaron Singer.
\newblock Mechanisms for complement-free procurement.
\newblock In {\em Proceedings 12th {ACM} Conference on Electronic Commerce
  (EC-2011), San Jose, CA, USA, June 5-9, 2011}, pages 273--282, 2011.

\bibitem[EG14]{EG14}
Ludwig Ensthaler and Thomas Giebe.
\newblock A dynamic auction for multi-object procurement under a hard budget
  constraint.
\newblock {\em Research Policy}, 43(1):179 -- 189, 2014.

\bibitem[GJLZ19]{gravin2019optimal}
Nick Gravin, Yaonan Jin, Pinyan Lu, and Chenhao Zhang.
\newblock Optimal budget-feasible mechanisms for additive valuations.
\newblock In {\em Proceedings of the 2019 ACM Conference on Economics and
  Computation}, pages 887--900. ACM, 2019.

\bibitem[GNS14]{GoelNS14}
Gagan Goel, Afshin Nikzad, and Adish Singla.
\newblock Mechanism design for crowdsourcing markets with heterogeneous tasks.
\newblock In {\em Proceedings of the Seconf {AAAI} Conference on Human
  Computation and Crowdsourcing, {HCOMP} 2014, November 2-4, 2014, Pittsburgh,
  Pennsylvania, {USA}}, 2014.

\bibitem[HAM{\etalchar{+}}18]{HaraAMSCB18}
Kotaro Hara, Abigail Adams, Kristy Milland, Saiph Savage, Chris
  Callison{-}Burch, and Jeffrey~P. Bigham.
\newblock A data-driven analysis of workers' earnings on amazon mechanical
  turk.
\newblock In Regan~L. Mandryk, Mark Hancock, Mark Perry, and Anna~L. Cox,
  editors, {\em Proceedings of the 2018 {CHI} Conference on Human Factors in
  Computing Systems, {CHI} 2018, Montreal, QC, Canada, April 21-26, 2018}, page
  449. {ACM}, 2018.

\bibitem[HAM{\etalchar{+}}19]{HaraAMSHBC19}
Kotaro Hara, Abigail Adams, Kristy Milland, Saiph Savage, Benjamin~V. Hanrahan,
  Jeffrey~P. Bigham, and Chris Callison{-}Burch.
\newblock Worker demographics and earnings on amazon mechanical turk: An
  exploratory analysis.
\newblock In Regan~L. Mandryk, Stephen~A. Brewster, Mark Hancock, Geraldine
  Fitzpatrick, Anna~L. Cox, Vassilis Kostakos, and Mark Perry, editors, {\em
  Extended Abstracts of the 2019 {CHI} Conference on Human Factors in Computing
  Systems, {CHI} 2019, Glasgow, Scotland, UK, May 04-09, 2019}. {ACM}, 2019.

\bibitem[HIM14]{HorelIM14}
Thibaut Horel, Stratis Ioannidis, and S.~Muthukrishnan.
\newblock Budget feasible mechanisms for experimental design.
\newblock In {\em {LATIN} 2014: Theoretical Informatics - 11th Latin American
  Symposium, Montevideo, Uruguay, March 31 - April 4, 2014. Proceedings}, pages
  719--730, 2014.

\bibitem[Hoe94]{hoeffding1994probability}
Wassily Hoeffding.
\newblock Probability inequalities for sums of bounded random variables.
\newblock In {\em The Collected Works of Wassily Hoeffding}, pages 409--426.
  Springer, 1994.

\bibitem[HPC18]{hauser2018}
David Hauser, Gabriele Paolacci, and Jesse~J Chandler.
\newblock Common concerns with mturk as a participant pool: Evidence and
  solutions, Sep 2018.

\bibitem[KT18]{KhalilabadiT18}
Pooya~Jalaly Khalilabadi and {\'{E}}va Tardos.
\newblock Simple and efficient budget feasible mechanisms for monotone
  submodular valuations.
\newblock In {\em Web and Internet Economics - 14th International Conference,
  {WINE} 2018, Oxford, UK, December 15-17, 2018, Proceedings}, pages 246--263,
  2018.

\bibitem[LMSZ17]{LeonardiMSZ17}
Stefano Leonardi, Gianpiero Monaco, Piotr Sankowski, and Qiang Zhang.
\newblock Budget feasible mechanisms on matroids.
\newblock In {\em Integer Programming and Combinatorial Optimization - 19th
  International Conference, {IPCO} 2017, Waterloo, ON, Canada, June 26-28,
  2017, Proceedings}, pages 368--379, 2017.

\bibitem[LZY20]{LiZY20}
Juan Li, Yanmin Zhu, and Jiadi Yu.
\newblock Redundancy-aware and budget-feasible incentive mechanism in crowd
  sensing.
\newblock {\em Comput. J.}, 63(1):66--79, 2020.

\bibitem[Mic20]{microworkersdata}
Microworkers.
\newblock Budget data of microworkers.
\newblock Private communication, 2020.

\bibitem[Mye81]{myerson1981optimal}
Roger~B Myerson.
\newblock Optimal auction design.
\newblock {\em Mathematics of operations research}, 6(1):58--73, 1981.

\bibitem[NSKK16]{NushiS0K16}
Besmira Nushi, Adish Singla, Andreas Krause, and Donald Kossmann.
\newblock Learning and feature selection under budget constraints in
  crowdsourcing.
\newblock In {\em Proceedings of the Fourth {AAAI} Conference on Human
  Computation and Crowdsourcing, {HCOMP} 2016, 30 October - 3 November, 2016,
  Austin, Texas, {USA}}, pages 159--168, 2016.

\bibitem[OMV{\etalchar{+}}20]{OMVI20}
Jonas Oppenlaender, Kristy Milland, Aku Visuri, Panos Ipeirotis, and Simo
  Hosio.
\newblock Creativity on paid crowdsourcing platforms.
\newblock In {\em {CHI} '20: {CHI} Conference on Human Factors in Computing
  Systems, Honolulu, HI, USA, April 25-30, 2020}, pages 1--14, 2020.

\bibitem[PBL{\etalchar{+}}19]{PBLDFS19}
Lisa Posch, Arnim Bleier, Clemens Lechner, Daniel Danner, Fabian Fl{\"{o}}ck,
  and Markus Strohmaier.
\newblock Measuring motivations of crowdworkers: The multidimensional
  crowdworker motivation scale.
\newblock {\em {ACM} Trans. Soc. Comput.}, 2(2):8:1--8:34, 2019.

\bibitem[RZ22]{RZ20}
Aviad Rubinstein and Junyao Zhao.
\newblock {Budget-Smoothed Analysis for Submodular Maximization}.
\newblock In Mark Braverman, editor, {\em 13th Innovations in Theoretical
  Computer Science Conference (ITCS 2022)}, volume 215 of {\em Leibniz
  International Proceedings in Informatics (LIPIcs)}, pages 113:1--113:23,
  Dagstuhl, Germany, 2022. Schloss Dagstuhl -- Leibniz-Zentrum f{\"u}r
  Informatik.

\bibitem[S{\etalchar{+}}58]{sion1958general}
Maurice Sion et~al.
\newblock On general minimax theorems.
\newblock {\em Pacific Journal of mathematics}, 8(1):171--176, 1958.

\bibitem[Sin10]{singer2010budget}
Yaron Singer.
\newblock Budget feasible mechanisms.
\newblock In {\em 2010 IEEE 51st Annual Symposium on Foundations of Computer
  Science}, pages 765--774. IEEE, 2010.

\bibitem[SM13]{SingerM13}
Yaron Singer and Manas Mittal.
\newblock Pricing mechanisms for crowdsourcing markets.
\newblock In {\em 22nd International World Wide Web Conference, {WWW} '13, Rio
  de Janeiro, Brazil, May 13-17, 2013}, pages 1157--1166, 2013.

\bibitem[ZLM16]{ZhaoLM16}
Dong Zhao, Xiang{-}Yang Li, and Huadong Ma.
\newblock Budget-feasible online incentive mechanisms for crowdsourcing tasks
  truthfully.
\newblock {\em {IEEE/ACM} Trans. Netw.}, 24(2):647--661, 2016.

\bibitem[ZWG{\etalchar{+}}17]{ZhengWGZTC17}
Zhenzhe Zheng, Fan Wu, Xiaofeng Gao, Hongzi Zhu, Shaojie Tang, and Guihai Chen.
\newblock A budget feasible incentive mechanism for weighted coverage
  maximization in mobile crowdsensing.
\newblock {\em {IEEE} Trans. Mob. Comput.}, 16(9):2392--2407, 2017.

\end{thebibliography}

\appendix
\section{Motivation of budget-smoothed analysis}\label{sec:smoothed_motivation}
In this section, we give an in-depth discussion on the motivation of budget-smoothed analysis in the context of budget-feasible mechanism design for ``beyond-worst-case instances''.

When modeling ``beyond-worst-case instances'', there are two dimensions to consider -- the market and the budget. Ideally, we want the family of beyond-worst-case instances to capture all the real-world applications (e.g., crowdsourcing platforms). Understanding the demographics of workers on crowdsourcing platforms is a complicated ongoing endeavor~\cite{DFI18,HaraAMSCB18,hauser2018,HaraAMSHBC19,PBLDFS19,OMVI20}, and the takeaway is that modeling beyond-worst-case market is tricky and application-dependent, and moreover, it is often hard to verify the model assumptions in practice. In contrast, calculating the empirical distribution of the budgets in practice is a much more tractable task. For example, it was easy for the crowdsourcing platform Microworkers to provide us with the budget distribution of the employers on their platform%
\footnote{We are grateful to Microworkers for sharing this dataset with us.}. Therefore, instead of making controversial assumptions on the real-world markets, in the budget-smoothed analysis, we model ``beyond-worst-case instances''  by modeling the beyond-worst-case behavior of the much simpler object---budget constraint\footnote{We note that beyond-worst-case model of budget constraint is orthogonal to any assumptions about the real-world markets, and in principle could be combined with any of them to obtain even stronger results.}. As we now explain, this has motivations from both theoretical and practical perspectives.

In practice, many applications of budget-feasible mechanisms have multiple buyers with similar objectives operating on the same, possibly the worst-case market. However, their
budgets can easily vary by an order of magnitude or more because of different sizes of business
or different amounts of funds. A concrete example is the budget distribution of the top 10 employers on Microworkers (Table~\ref{table:data}). Thus even if the market is worst-case, the ``average'' employer uses an ``average'' budget that is independent of the market. Therefore, from a practical
perspective, it is interesting to understand whether a widespread distribution\footnote{Tiny perturbations certainly cannot escape the worst-case hardness.} of budgets allows a mechanism to achieve better-than-worst-case utility for the ``average'' employers. In theory, the existing worst-case instances of~\cite{anari2018budget} break under a large budget perturbation: If we perturb (i.e., multiply) the budget by a significant factor like $2.5$, then simply setting a single uniform price (i.e.,~\cite{EG14}'s open clock auction) will achieve $100\%$ of the optimal utility, which is much better than the best possible $1-1/e$ competitive ratio for the worst-case budget. Therefore, from a theoretical perspective, it is interesting to study the effect perturbing the budget has on the
hardness results. With the above motivations from theory and practice, as part of our work, we investigate the semi-adversarial model---budget-smoothed analysis (formally defined in Section~\ref{sec:budget-smoothed-model}), and we hope to answer the following questions:
\begin{description}
    \item[Question 1] What is the optimal mechanism for a given budget distribution? Do our proposed mechanism and~\cite{anari2018budget}'s worst-case optimal mechanism continue to be optimal?
    \item[Question 2] Does a widespread budget distribution allow the optimal mechanism to achieve better-than-worst-case competitive ratio in expectation?
    \item[Question 3] Is there a strong hardness result that is robust to any budget perturbations?
\end{description}

\section{Useful lemmata}
The following standard lemmata and simple facts are used in our analysis.
\begin{lemma}[Hoeffding Bound~\cite{hoeffding1994probability}]\label{lem:hoeffding}
Let $X=\sum_{i=1}^n X_i$ where $X_i$'s are independent random variables in the interval $[a_i,b_i]$, then $\P[|X-\E[X]|\ge t]\le 2\exp\left(-\frac{2t^2}{\sum_{i=1}^n(b_i-a_i)^2}\right)$.
\end{lemma}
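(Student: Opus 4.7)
The plan is to prove this via the standard Chernoff-style exponential moment method. First I would observe that it suffices to prove the one-sided bound $\P[X - \E[X] \ge t] \le \exp(-2t^2/\sum_i (b_i-a_i)^2)$, because replacing each $X_i$ by $-X_i$ (which lies in $[-b_i,-a_i]$, an interval of the same length) yields the matching bound on $\P[X-\E[X] \le -t]$, and then a union bound gives the factor of $2$. After this reduction, the first step is to write, for any parameter $s>0$ to be optimized later,
\begin{equation*}
\P[X - \E[X] \ge t] = \P\!\left[e^{s(X-\E[X])} \ge e^{st}\right] \le e^{-st}\,\E\!\left[e^{s(X-\E[X])}\right],
\end{equation*}
by Markov's inequality, and then use independence of the $X_i$'s to factor the moment generating function as $\prod_{i=1}^n \E[e^{s(X_i-\E[X_i])}]$.

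The technical heart, and the main obstacle, is the per-coordinate bound (Hoeffding's lemma): for a zero-mean random variable $Y \in [a,b]$, one has $\E[e^{sY}] \le \exp\!\bigl(s^2(b-a)^2/8\bigr)$. I would prove this by convexity: since $y \mapsto e^{sy}$ is convex, for $y\in[a,b]$ we have $e^{sy} \le \tfrac{b-y}{b-a}e^{sa} + \tfrac{y-a}{b-a}e^{sb}$; taking expectations and using $\E[Y]=0$ yields $\E[e^{sY}] \le \tfrac{b}{b-a}e^{sa} - \tfrac{a}{b-a}e^{sb}$. Setting $p = -a/(b-a)$ and $u = s(b-a)$, the right-hand side equals $e^{-pu}(1-p+pe^u)$, and one defines $\varphi(u) := -pu + \log(1-p+pe^u)$. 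A direct computation gives $\varphi(0)=\varphi'(0)=0$ and $\varphi''(u) = \tfrac{pe^u(1-p)}{(1-p+pe^u)^2} \le \tfrac{1}{4}$ (the last inequality is AM-GM applied to the numerator and denominator), so Taylor's theorem yields $\varphi(u) \le u^2/8 = s^2(b-a)^2/8$, as desired.

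Finally I would combine the ingredients: applying Hoeffding's lemma to each centered variable $X_i - \E[X_i] \in [a_i-\E[X_i], b_i - \E[X_i]]$ (an interval of length $b_i-a_i$) gives
\begin{equation*}
\P[X-\E[X] \ge t] \le \exp\!\left(-st + \frac{s^2}{8}\sum_{i=1}^n (b_i-a_i)^2\right).
\end{equation*}
Optimizing over $s>0$ by setting $s = 4t/\sum_i (b_i-a_i)^2$ makes the exponent equal to $-2t^2/\sum_i (b_i-a_i)^2$, and doubling to account for the two-sided deviation completes the proof. The only nontrivial step is the upper bound $\varphi'' \le 1/4$ in Hoeffding's lemma; everything else is routine Markov/Chernoff manipulation.
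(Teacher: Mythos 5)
Your proof is correct and is the standard textbook derivation of Hoeffding's inequality (one-sided reduction, Chernoff/Markov on the MGF, per-coordinate Hoeffding's lemma via convexity and the $\varphi''\le 1/4$ second-derivative bound, then optimize $s$). Note, however, that the paper does not actually prove this lemma: it is stated as a cited classical result from~\cite{hoeffding1994probability} in the ``Useful lemmata'' appendix, so there is no in-paper argument to compare against. Your self-contained proof would be a valid substitute for the citation; the only thing worth double-checking in your write-up is that after centering, each $X_i-\E[X_i]$ indeed has mean zero and lies in an interval of length $b_i-a_i$, which you correctly note, so Hoeffding's lemma applies as stated.
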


\begin{fact}\label{lem:sos_inequality}
Given a sequence $u_1,\dots u_n\in [0,b]$ s.t. $\sum_{i\in[n]} u_i = U$, we have that $\sum_{i\in[n]}u_i^2\le Ub$.
\end{fact}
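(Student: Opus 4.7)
The plan is to use the elementary observation that squaring a nonnegative number bounded by $b$ can be traded against a single linear factor of $b$. Concretely, I would argue pointwise that $u_i^2 \le b\, u_i$ for every $i$, and then sum.

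First I would note that the hypothesis $u_i \in [0,b]$ gives both $u_i \ge 0$ and $u_i \le b$. Multiplying these two (which is valid precisely because $u_i \ge 0$) yields $u_i \cdot u_i \le u_i \cdot b$, i.e.\ $u_i^2 \le b u_i$. This is the only nontrivial step, and it is really just the factorization $u_i(b - u_i) \ge 0$.

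Next I would sum the inequality $u_i^2 \le b u_i$ over $i \in [n]$ and pull the constant $b$ out of the sum on the right-hand side, obtaining $\sum_{i\in[n]} u_i^2 \le b \sum_{i\in[n]} u_i = bU$, where the final equality uses the hypothesis $\sum_{i\in[n]} u_i = U$. This gives the claimed inequality.

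There is no real obstacle here — the fact is a one-line consequence of $u_i^2 \le b u_i$ followed by summation — so the proof is essentially a two-line computation and no additional lemmas (such as Cauchy–Schwarz or Hoeffding) are needed.
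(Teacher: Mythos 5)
Your proof is correct, and it is genuinely different from — and considerably simpler than — the paper's argument. The paper proves Fact~\ref{lem:sos_inequality} by an exchange argument: it repeatedly merges pairs $u_i,u_j$ (replacing them by $u_i+u_j$ when $u_i+u_j\le b$, or by $b$ and $u_i+u_j-b$ otherwise), shows each step can only increase $\sum u_i^2$, and then bounds the sum of squares of the resulting extremal sequence, which consists of $\floor{U/b}$ copies of $b$ plus one leftover term. Your argument instead observes the pointwise inequality $u_i(b-u_i)\ge 0$, i.e.\ $u_i^2\le bu_i$, and sums. Both establish the same bound, but your route avoids the termination and extremality bookkeeping entirely and is the cleaner way to present this fact; the paper's exchange argument buys nothing extra here, although it is the kind of argument that generalizes when one wants the exact extremal configuration rather than just the bound.
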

\begin{proof}
We prove this using an exchange argument. (1) Suppose there exist distinct $i, j$ such that $u_i+u_j\le b$, then removing $u_i,u_j$ from the sequence and inserting $u_i+u_j$ to the sequence can only increase the sum of squares of the numbers in the sequence, because $(u_i+u_j)^2\ge u_i^2+u_j^2$. (2) Suppose there exist distinct $i, j$ such that $u_i,u_j<b$ and $u_i+u_j>b$, then removing $u_i,u_j$ from the sequence and inserting $b$ and $u_i+u_j-b$ to the sequence can only increase the sum of squares of the numbers in the sequence. Indeed, it is elementary to verify that the quadratic function $g(x)=x^2+(u_i+u_j-x)^2=2x^2-2(u_i+u_j)x+(u_i+u_j)^2$ over $[u_i+u_j-b, b]$ has maximum value at $u_i+u_j-b$ and $b$.

Note that whenever there exist distinct $i, j$ such that $u_i,u_j<b$, we can apply one of the above two steps (depending on whether $u_i+u_j\le b$) to strictly decrease the number of $u_{\ell}$'s in the sequence that are strictly less than $b$, and the sum of squares of the numbers in the sequence can only increase. Since there are only $n$ numbers, this process will terminate when there is at most one number strictly less than $b$ in the sequence. Therefore, the sum of squares of the numbers in the final sequence, which is an upper bound for the sum of squares of the numbers in the original sequence, is
\[\floor{\frac{U}{b}}b^2+(U-\floor{\frac{U}{b}}b)^2=\floor{\frac{U}{b}}b^2+(\frac{U}{b}-\floor{\frac{U}{b}})^2b^2\le \floor{\frac{U}{b}}b^2+(\frac{U}{b}-\floor{\frac{U}{b}})b^2 = Ub.\]
\end{proof}

\begin{fact}\label{lem:product_sum_inequality}
For any $a,b\ge 2$, $ab\ge a+b$.
\end{fact}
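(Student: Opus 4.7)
The plan is to rewrite the inequality $ab \ge a+b$ in the equivalent factored form $(a-1)(b-1) \ge 1$, which follows immediately from the hypothesis $a,b \ge 2$.

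Concretely, I would first observe the algebraic identity
\[
  ab - a - b = (a-1)(b-1) - 1,
\]
which is a one-line expansion. Hence $ab \ge a + b$ is equivalent to $(a-1)(b-1) \ge 1$. Since $a \ge 2$ gives $a-1 \ge 1$ and likewise $b-1 \ge 1$, the product $(a-1)(b-1)$ is at least $1 \cdot 1 = 1$, and the desired inequality follows.

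There is no real obstacle here; the only thing to watch is being careful to note that both factors are nonnegative (in fact $\ge 1$) so that multiplying the inequalities $a-1 \ge 1$ and $b-1 \ge 1$ is valid. The entire argument is two or three lines and needs no additional machinery.
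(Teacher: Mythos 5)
Your proof is correct. The paper instead assumes WLOG $a \ge b$ and chains $ab \ge 2a \ge a+b$, using $b \ge 2$ for the first inequality and $a \ge b$ for the second. Your route, rewriting via the identity $ab - a - b = (a-1)(b-1) - 1$ and reducing to $(a-1)(b-1) \ge 1$, avoids the WLOG step entirely and is symmetric in $a$ and $b$ from the start; it also makes transparent exactly where the threshold $2$ comes from (the factors $a-1$, $b-1$ hit $1$). The paper's chain is marginally shorter but requires the explicit symmetry-breaking. Both are two-line elementary arguments and equally valid.
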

\begin{proof}
Assume without loss of generality $a\ge b$. We have that $ab\ge 2a\ge a+b$.
\end{proof}

\begin{fact}\label{lem:sum_of_double_exponential}
For any $\delta>0$ and integer $k\ge 1$, $\sum_{i\in[k]} \exp(-2(1+\delta)^{i-1})\le 1+\frac{1}{\delta}$.
\end{fact}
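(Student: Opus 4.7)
\noindent\textbf{Proof proposal for Fact~\ref{lem:sum_of_double_exponential}.}

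My plan is to dominate the double-exponential sum by an ordinary geometric series via Bernoulli's inequality, and then apply a standard elementary bound on $1/(1-e^{-x})$.

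First, I would apply Bernoulli's inequality $(1+\delta)^{i-1}\ge 1+(i-1)\delta$, which holds for any $\delta>0$ and integer $i\ge 1$, to get
\[
\exp\!\bigl(-2(1+\delta)^{i-1}\bigr)\;\le\;e^{-2}\cdot e^{-2(i-1)\delta}.
\]
Summing and extending to infinity (all summands are nonnegative) gives
\[
\sum_{i\in[k]}\exp\!\bigl(-2(1+\delta)^{i-1}\bigr)\;\le\;e^{-2}\sum_{i=1}^{\infty}e^{-2(i-1)\delta}\;=\;\frac{e^{-2}}{1-e^{-2\delta}}.
\]

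Next, I would establish the elementary inequality $1-e^{-x}\ge x/(1+x)$ for all $x\ge 0$. This follows by showing $e^{-x}(1+x)\le 1$: letting $h(x)=e^{-x}(1+x)$, one has $h(0)=1$ and $h'(x)=-xe^{-x}\le 0$, so $h$ is non-increasing on $[0,\infty)$. Setting $x=2\delta$ yields
\[
\frac{1}{1-e^{-2\delta}}\;\le\;\frac{1+2\delta}{2\delta}\;=\;\frac{1}{2\delta}+1.
\]

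Combining the two inequalities,
\[
\sum_{i\in[k]}\exp\!\bigl(-2(1+\delta)^{i-1}\bigr)\;\le\;e^{-2}\Bigl(\frac{1}{2\delta}+1\Bigr)\;=\;\frac{e^{-2}}{2\delta}+e^{-2}\;\le\;\frac{1}{\delta}+1,
\]
where the last step uses $e^{-2}/2<1$ on the first term and $e^{-2}<1$ on the second. There is no real obstacle here; the only thing to be careful about is choosing a bound on $1/(1-e^{-2\delta})$ that holds uniformly in $\delta>0$ rather than an asymptotic one, which the $x/(1+x)$ bound provides in a clean closed form.
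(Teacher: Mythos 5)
Your proof is correct, but it takes a genuinely different route from the paper's. You linearize via Bernoulli, $(1+\delta)^{i-1}\ge 1+(i-1)\delta$, producing a geometric series with ratio $e^{-2\delta}$; since that ratio has no exact closed-form relation to $\delta$, you then need the auxiliary estimate $1-e^{-x}\ge x/(1+x)$ to convert $1/(1-e^{-2\delta})$ into $1/(2\delta)+1$, and you retain the prefactor $e^{-2}$ to absorb slack at the end. The paper instead uses the exponential inequality $(1+\delta)^{i-1}=e^{(i-1)\ln(1+\delta)}\ge 1+(i-1)\ln(1+\delta)$, drops the $1$, and obtains a geometric series whose ratio is $(1+\delta)^{-2}$; that sum collapses in one line to $\frac{1}{1-(1+\delta)^{-1}}=1+\frac{1}{\delta}$ exactly, with no auxiliary lemma and no prefactor to track. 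Your route works and is elementary throughout, but it is slightly longer; incidentally, the $e^{-2}$ prefactor you keep is not actually needed, since $\frac{1}{2\delta}+1\le\frac{1}{\delta}+1$ already, so you could drop it and simplify the final step.
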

\begin{proof}
Because $(1+\delta)^{i-1}\ge 1+(i-1)\ln(1+\delta)$, we have that $\sum_{i\in[k]} \exp(-2(1+\delta)^{i-1})\le\sum_{i\in[k]} \exp(-2(i-1)\ln(1+\delta))$, and moreover,
\begin{equation*}
\sum_{i\in[k]} \exp(-2(i-1)\ln(1+\delta))=\sum_{i\in[k]}(1+\delta)^{-2(i-1)}=\frac{1-(1+\delta)^{-2k}}{1-(1+\delta)^{-2}}\le\frac{1}{1-(1+\delta)^{-1}}=1+\frac{1}{\delta}.
\end{equation*}
\end{proof}

\begin{lemma}[Diminishing marginal returns for fractional knapsack~\cite{anari2018budget}]\label{lem:knapsack_concave}
Let $U(V,B)$ be the optimal utility of a fractional knapsack of items $V$ with budget $B$. Then when $V$ is fixed, $U$ is a concave function in $B$, and in particular, $U(V,(1-\theta)B)\ge (1-\theta)U(V,B)$ for $0\le\theta\le1$.
\end{lemma}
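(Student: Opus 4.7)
The plan is to give a short, self-contained proof based on the LP/convex-combination structure of the fractional knapsack problem, since concavity of $U(V,\cdot)$ in $B$ is a standard consequence of the fact that the feasible region of fractional knapsack scales linearly with the budget. First I would write the fractional knapsack explicitly as the linear program
\begin{equation*}
U(V,B)=\max\Bigl\{\sum_{i\in V} u_i x_i \,\Big|\, \sum_{i\in V} c_i x_i \le B,\ 0\le x_i\le 1\Bigr\},
\end{equation*}
and note that the constraint polytope depends on $B$ only through the single inequality $\sum_i c_i x_i \le B$.

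The main step is to verify concavity via a direct convex-combination argument: for budgets $B_1, B_2\ge 0$ and $\alpha\in[0,1]$, let $x^{(1)}, x^{(2)}$ be optimal fractional solutions for $B_1, B_2$, and set $x := \alpha x^{(1)} + (1-\alpha) x^{(2)}$. Coordinate-wise $x\in[0,1]^{|V|}$, and
\begin{equation*}
\sum_{i\in V} c_i x_i \;=\; \alpha \sum_i c_i x^{(1)}_i + (1-\alpha)\sum_i c_i x^{(2)}_i \;\le\; \alpha B_1 + (1-\alpha)B_2,
\end{equation*}
so $x$ is feasible for the knapsack with budget $\alpha B_1 + (1-\alpha)B_2$. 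Its utility is $\alpha U(V,B_1) + (1-\alpha) U(V,B_2)$ by linearity of the objective, giving
\begin{equation*}
U(V,\alpha B_1 + (1-\alpha)B_2) \;\ge\; \alpha U(V,B_1) + (1-\alpha)U(V,B_2),
\end{equation*}
which is exactly concavity in $B$.

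The ``in particular'' clause then follows by applying concavity at $B_1 = B$ and $B_2 = 0$ with weight $\alpha = 1-\theta$, using the trivial fact $U(V,0)=0$:
\begin{equation*}
U(V,(1-\theta)B) \;=\; U(V,(1-\theta)B + \theta\cdot 0) \;\ge\; (1-\theta)U(V,B) + \theta U(V,0) \;=\; (1-\theta)U(V,B).
\end{equation*}

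There is no real obstacle here; the only thing to be careful about is that the argument genuinely uses the fractional relaxation (so that convex combinations of feasible $x$ are feasible), which is why the statement is about fractional and not integral knapsack. If one prefers a structural proof instead, the same conclusion follows by observing that the greedy fractional solution processes items in non-increasing order of $u_i/c_i$, so $U(V,B)$ is piecewise-linear with non-increasing slope in $B$; I would mention this as an alternative but use the convex-combination proof above as the main argument since it is a one-liner.
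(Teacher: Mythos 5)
Your proof is correct and uses essentially the same convex-combination argument as the paper: take optimal solutions for two budgets, average them, observe the average is feasible for the averaged budget, and conclude concavity (with the ``in particular'' clause following from $U(V,0)=0$). The only slight difference is that you are a bit more careful in writing the cost inequality as $\le$ rather than $=$, but this is the same proof.
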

\begin{proof}
Given $B=(1-\theta)B_1+\theta B_2$ for $0\le\theta\le1$, we let $x^*_1,x^*_2$ be the optimal solution for budget $B_1$ and $B_2$ respectively. Now consider the solution $x=(1-\theta)x^*_1+\theta x^*_2$. Because the utility and the payment are both additive, the utility of $x$ is $(1-\theta)U(V,B_1)+\theta U(V,B_2)$, and the payment of $x$ is $(1-\theta)B_1+\theta B_2=B$. Hence $x$ is feasible for budget $B$, and it follows that $U(V,B)\ge (1-\theta)U(V,B_1)+\theta U(V,B_2)$.
\end{proof}
\begin{lemma}[Minimax theorem~\cite{sion1958general}]\label{lem:min-max}
Let $X$ be a compact convex subset of a linear topological space and $Y$ a compact convex subset of a linear topological space. If $f$ is a real-valued function on $X\times Y$ with $f(x,\cdot )$ upper semi-continuous and quasi-concave on $Y$, $\forall x\in X$, and $f(\cdot ,y)$ lower semi-continuous and quasi-convex on $X$, $\forall y\in Y$, then, $\inf _{x\in X}\sup _{y\in Y}f(x,y)=\sup _{y\in Y}\inf _{x\in X}f(x,y)$. This implies that every equilibrium for the min-max program $\inf_{x\in X}\sup _{y\in Y}f(x,y)$ achieves optimal value. (An equilibrium is a solution pair $(x^*,y^*)$ for which $\forall y\in Y,\,f(x^*,y^*)\ge f(x^*,y)$ and $\forall x\in X,\,f(x^*,y^*)\le f(x,y^*)$.)
\end{lemma}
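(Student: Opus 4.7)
The plan is to prove Sion's minimax theorem by the standard two-step argument: establish the trivial direction $\sup_y \inf_x f \le \inf_x \sup_y f$ by direct inspection, then obtain the reverse inequality by contradiction using compactness together with a KKM/connectedness argument on the closed convex sublevel sets $A_y := \{x \in X : f(x,y) \le c\}$, and finally read off the equilibrium corollary from the resulting equality.

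The easy direction is that for every $x' \in X$ and $y' \in Y$, $\inf_x f(x, y') \le f(x', y') \le \sup_y f(x', y)$; taking $\sup_{y'}$ on the leftmost expression and $\inf_{x'}$ on the rightmost yields $\sup_y \inf_x f(x,y) \le \inf_x \sup_y f(x,y)$. The corollary is equally short: if $(x^*, y^*)$ is a saddle point in the stated sense, then $f(x^*, y^*) = \sup_y f(x^*, y) \ge \inf_x \sup_y f(x, y)$ and $f(x^*, y^*) = \inf_x f(x, y^*) \le \sup_y \inf_x f(x, y)$. Once the minimax equality is proven these inequalities collapse into equalities, forcing $f(x^*, y^*)$ to equal the common optimal value.

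For the reverse inequality I would argue by contradiction. Assume there is a real $c$ with $\sup_y \inf_x f(x,y) < c < \inf_x \sup_y f(x,y)$. Each $A_y$ is closed (by lower semi-continuity of $f(\cdot, y)$), convex (by quasi-convexity of $f(\cdot, y)$), and non-empty (because $\inf_x f(x, y) < c$); the strict inequality on the right forces $\bigcap_{y \in Y} A_y = \emptyset$, and compactness of $X$ then provides a finite subfamily $A_{y_1}, \ldots, A_{y_n}$ with empty intersection. Fix such a family with $n$ minimal; note $n \ge 2$ since every $A_{y_i}$ is non-empty.

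The main obstacle is the two-point reduction that drives $n$ down to $1$, producing the desired contradiction. Set $D := \bigcap_{i \ge 3} A_{y_i}$ (with $D := X$ when $n = 2$); by minimality of $n$, $D \cap A_{y_1}$ and $D \cap A_{y_2}$ are non-empty, while $D \cap A_{y_1} \cap A_{y_2}$ is empty. Using convexity of $Y$, parametrize the segment $y_t := (1-t) y_1 + t y_2$ for $t \in [0,1]$; the quasi-concavity and upper semi-continuity of $f(x, \cdot)$, the quasi-convexity and lower semi-continuity of $f(\cdot, y)$, and the compactness of the closed convex set $D$ together allow the standard KKM/connectedness step on $[0,1]$ to produce a $t^* \in [0,1]$ whose single sublevel set $A_{y_{t^*}}$ simultaneously plays the role of both $A_{y_1}$ and $A_{y_2}$, i.e.\ replacing $\{y_1, y_2\}$ by $y^* := y_{t^*}$ yields a strictly smaller family with empty intersection — contradicting minimality of $n$. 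Iterating collapses $n$ to $1$, contradicting non-emptiness of $A_{y_1}$, so no such $c$ exists and the minimax equality holds, which together with the first paragraph yields the corollary.
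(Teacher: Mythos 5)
The paper does not prove Lemma~\ref{lem:min-max}: it is Sion's minimax theorem, stated verbatim and cited to \cite{sion1958general}. So you are attempting to reprove a classical black-box result that the paper simply invokes. That is not wrong, but it is not what the paper does, and for the purpose of this paper it would suffice to verify that the ambient spaces, compactness, semicontinuity, and quasi-convexity/concavity hypotheses are met by the min-max program appearing in the proof of Theorem~\ref{thm:worst_distribution}, then apply the theorem off the shelf.

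As an independent proof attempt, your outline follows a well-known elementary route (Komiya's connectedness proof). The trivial inequality, the reduction by compactness and the finite intersection property to a minimal finite subfamily, and the derivation of the equilibrium corollary from the minimax equality are all correct. But there is a genuine gap at exactly the place you flag as ``the main obstacle'': you assert, without proof, that ``the standard KKM/connectedness step'' produces a single $t^*\in[0,1]$ with $D\cap A_{y_{t^*}}=\emptyset$. This two-point reduction is the entire content of Sion's theorem; everything else in your write-up is bookkeeping. Carrying it out is nontrivial: one must introduce a second level $c'$ with $c<c'<\inf_{x\in D}\max\{f(x,y_1),f(x,y_2)\}$, observe that for each $t$ the set $\{x\in D: f(x,y_t)\le c'\}$ is convex, non-empty, and (by quasi-concavity of $f(x,\cdot)$ along the segment) lies inside the disjoint union $\{x\in D: f(x,y_1)\le c'\}\cup\{x\in D: f(x,y_2)\le c'\}$ and hence, being connected, inside exactly one of the two; then one must show that the two resulting subsets $I,J\subseteq[0,1]$ are each closed (this is where the upper semicontinuity of $f(x,\cdot)$ is actually used, and it is the delicate point) and nonempty, contradicting connectedness of $[0,1]$. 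Since none of this is in your proposal, the argument as written does not establish the theorem; it reduces it to an unproved lemma of comparable difficulty.
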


\section{Analysis of \textsc{Random-Sampling-Greedy}}\label{section:proof_random_sample_greedy}
First, we define a subclass of greedy allocation rules, for which we will prove concentration bounds in Lemma~\ref{lem:concentration_tau_large}, and then, we will prove Theorem~\ref{thm:random_sample_greedy}.
\begin{definition}
Given an allocation rule $f$ of \textsc{Greedy} that is characterized by $(t, p_1, p_2)$ where $t\in[0,1)$ and $0^-\le p_1\le p_2$, for $\tau>0$, we say that $f$ is $\tau$-large if it satisfies the following properties:
\begin{enumerate}
    \item[(i)] $U_{p_2}([n]\setminus T)\ge \tau U(T)$;
    \item[(ii)] If $p_1\ge 0$, then $U_{p_1}([n]\setminus T)\ge \tau U(T)$.
\end{enumerate}
\end{definition}
The following observation will be useful in the proof of Theorem~\ref{thm:random_sample_greedy}.
\begin{observation}\label{obs:truncated_rule_is_tau_large}
Given an allocation rule $f$ of \textsc{Greedy}, we let $\hat{f}$ be the allocation rule we get after truncating $f$ using parameter $\eta$. If $U_{\hat{f}}([n]\setminus T)\ge \frac{U(T)}{\eta C}$, then $\hat{f}$ is $\frac{1}{\eta C}$-large.
\end{observation}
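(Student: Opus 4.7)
The plan is to verify the two defining properties of $\frac{1}{\eta C}$-largeness for the truncated rule $\hat{f}$, which is characterized by $(t,\hat{p_1},p_2)$, directly from the definition of the truncation step and from Observation~\ref{obs:utility_and_payment}.

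For condition (ii), I would simply unpack the definition of $\hat{p_1}$. The truncation sets $\hat{p_1}=p_1$ precisely when $U_{p_1}([n]\setminus T)\ge \frac{U(T)}{\eta C}$, and otherwise sets $\hat{p_1}=0^-<0$. So if $\hat{p_1}\ge 0$, then necessarily $\hat{p_1}=p_1$ and $U_{\hat{p_1}}([n]\setminus T)=U_{p_1}([n]\setminus T)\ge\frac{U(T)}{\eta C}$, giving (ii) without using the hypothesis.

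For condition (i), the hypothesis $U_{\hat{f}}([n]\setminus T)\ge \frac{U(T)}{\eta C}$ is used. By Observation~\ref{obs:utility_and_payment},
\[
    U_{\hat{f}}([n]\setminus T)=(1-t)U_{\hat{p_1}}([n]\setminus T)+tU_{p_2}([n]\setminus T).
\]
Since $\hat{p_1}\le p_2$ and $U_p(S)=\sum_{i\in S:\, c_i/u_i\le p} u_i$ is monotone non-decreasing in $p$, we have $U_{\hat{p_1}}([n]\setminus T)\le U_{p_2}([n]\setminus T)$. Hence the convex combination is bounded above by $U_{p_2}([n]\setminus T)$, giving $U_{p_2}([n]\setminus T)\ge U_{\hat{f}}([n]\setminus T)\ge\frac{U(T)}{\eta C}$, which is exactly (i).

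There is no real obstacle here: the statement is essentially a bookkeeping consequence of how truncation is defined and the representation of $U_{\hat{f}}$ as a convex combination of the two posted-price utilities. The only thing to be careful about is the $\hat{p_1}=0^-$ case, which is handled vacuously in (ii) since the condition ``$\hat{p_1}\ge 0$'' fails, while (i) only concerns $p_2$ and is unaffected by which branch of the truncation is taken.
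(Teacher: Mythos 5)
Your proof is correct and follows essentially the same route as the paper: condition (i) via the convex-combination identity from Observation~\ref{obs:utility_and_payment} together with monotonicity of $U_p$ in $p$, and condition (ii) by directly unwinding the definition of $\hat{p_1}$. The extra care you take to note the $\hat{p_1}=0^-$ branch is handled vacuously matches the paper's intent.
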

\begin{proof}
Suppose $\hat{f}$ is characterized by $(t,\hat{p_1},p_2)$, then because $U_{\hat{f}}([n]\setminus T)\ge \frac{U(T)}{\eta C}$, and $U_{p_2}([n]\setminus T)\ge U_{\hat{f}}([n]\setminus T)$ by Eq.~\eqref{eq:f_to_ps} and monotonicity of $U_p(S)$ with respect to $p$, it follows that $U_{p_2}([n]\setminus T)\ge \frac{U(T)}{\eta C}$. Moreover, the truncation step also makes sure $U_{\hat{p_1}}([n]\setminus T)\ge\frac{U(T)}{\eta C}$ if $\hat{p_1}\ge0$.
\end{proof}

\begin{lemma}\label{lem:concentration_tau_large}
Assuming $\delta^2\tau C\ge 8$. With probability at least $1-4\exp\left(-\frac{\delta^2 \tau C}{4}\right)\cdot \left(1+\frac{1}{\delta}\right)$ the following holds {\em simultaneously} for every $\tau$-large greedy allocation rule $f$:
\begin{align*}
    \frac{(1-\delta)U_f([n]\setminus T)}{2(1+\delta)} \le U_f(X)\le \frac{(1+\delta)^2U_f([n]\setminus T)}{2},\\
    \frac{(1-\delta)B_f([n]\setminus T)}{2(1+\delta)} \le B_f(X)\le \frac{(1+\delta)^2B_f([n]\setminus T)}{2},
\end{align*}
where $X$ is the uniformly random subset of $[n]\setminus T$ given in Mechanism~\ref{mech:RS_greedy}.
\end{lemma}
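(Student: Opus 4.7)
The plan is to reduce the desired multiplicative concentration of $U_f(X)$ and $B_f(X)$ to pointwise concentration of $U_p(X)$ at individual prices, apply Hoeffding with the variance controlled by the fact that every $i\in[n]\setminus T$ satisfies $u_i\le U(T)/C$, and then achieve uniformity across all $\tau$-large $f$ via a bilateral geometric net, union-bounded through Fact~\ref{lem:sum_of_double_exponential}.

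For the reduction, by Observation~\ref{obs:utility_and_payment} both $U_f(S)$ and $B_f(S)$ are non-negative linear combinations of $U_{p_1}(S)$ and $U_{p_2}(S)$ (for $B_f$ with the extra factors $p_1,p_2$). So a two-sided multiplicative bound on $U_{p_j}(X)/U_{p_j}([n]\setminus T)$ for $j\in\{1,2\}$ transfers verbatim to $U_f(X)/U_f([n]\setminus T)$ and $B_f(X)/B_f([n]\setminus T)$, with the corner case $p_1=0^-$ handled trivially since $U_{0^-}\equiv 0$. For any fixed price $p$, $U_p(X)=\sum_{i\in S_p}u_i\mathds{1}[i\in X]$ (where $S_p:=\{i\in[n]\setminus T:c_i/u_i\le p\}$) is a sum of independent bounded variables with mean $U_p([n]\setminus T)/2$. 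Since $T$ collects the top-$C$ sellers by utility, every $i\in[n]\setminus T$ has $u_i\le U(T)/C$; Fact~\ref{lem:sos_inequality} then gives $\sum_{i\in S_p}u_i^2\le U_p([n]\setminus T)\cdot U(T)/C$, and Lemma~\ref{lem:hoeffding} yields
\[
\Pr\!\left[\left|U_p(X)-\tfrac{1}{2}U_p([n]\setminus T)\right|>\tfrac{\delta}{2}U_p([n]\setminus T)\right]\le 2\exp\!\left(-\frac{\delta^2\,U_p([n]\setminus T)\,C}{2\,U(T)}\right).
\]

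For the bilateral discretization, set $M_i:=\tau U(T)(1+\delta)^{i-1}$, let $\hat p^{(i)}$ be the smallest price with $U_{\hat p^{(i)}}([n]\setminus T)\ge M_i$, and let $\tilde p^{(i)}$ be the largest cost-per-utility value with $U_{\tilde p^{(i)}}([n]\setminus T)<M_{i+1}$ (well-defined whenever some relevant $V$-value lies in bucket $[M_i,M_{i+1})$). For any $p_j$ with $V_j:=U_{p_j}([n]\setminus T)\in[M_i,M_{i+1})$, monotonicity of $U_p$ gives the sandwich $U_{\hat p^{(i)}}(X)\le U_{p_j}(X)\le U_{\tilde p^{(i)}}(X)$, together with $V_j/(1+\delta)\le M_i\le U_{\hat p^{(i)}}([n]\setminus T)$ and $U_{\tilde p^{(i)}}([n]\setminus T)<M_{i+1}\le(1+\delta)V_j$. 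Conditioning on the event that the Hoeffding bound above holds at every $\hat p^{(i)}$ and $\tilde p^{(i)}$ thus implies
\[
\frac{(1-\delta)V_j}{2(1+\delta)}\le U_{p_j}(X)\le\frac{(1+\delta)^2V_j}{2}
\]
for every such $p_j$, which by the reduction yields the same bound for every $\tau$-large $f$. Bucket $i$ contributes failure probability at most $4\exp(-\delta^2 M_iC/(2U(T)))$; writing this as $4\exp(-2a(1+\delta)^{i-1})$ with $a=\delta^2\tau C/4\ge 2$ (using $\delta^2\tau C\ge 8$), the factorization $\exp(-2a(1+\delta)^{i-1})\le\exp(-a)\exp(-a(1+\delta)^{i-1})$ and Fact~\ref{lem:sum_of_double_exponential} (applicable since $a\ge 2$, so $\exp(-a(1+\delta)^{i-1})\le\exp(-2(1+\delta)^{i-1})$) sum this to $4\exp(-\delta^2\tau C/4)(1+1/\delta)$, matching the stated bound.

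The main technical obstacle is achieving this uniformity over all $\tau$-large $f$ with a failure bound independent of the market size $n$. A purely one-sided geometric net on prices does not suffice: because $U_p([n]\setminus T)$ may jump by far more than $U(T)/C$ at a single price (when many sellers share a cost-per-utility), the nearest net point above $p_j$ could carry a $V$-value arbitrarily larger than $V_j$, which would wreck the upper bound. The bilateral net $(\hat p^{(i)},\tilde p^{(i)})$ inside each geometric bucket precisely sidesteps this issue, and pairing it with the double-exponential geometric tail of Fact~\ref{lem:sum_of_double_exponential} is what collapses a naively $n$-sized union bound into the stated $\exp(-\delta^2\tau C/4)(1+1/\delta)$ tail.
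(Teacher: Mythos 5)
Your proof is correct and follows essentially the same approach as the paper's: reduce to pointwise concentration of $U_p(X)$ via Observation~\ref{obs:utility_and_payment}, bound the variance using $u_i\le U(T)/C$ and Fact~\ref{lem:sos_inequality}, place a two-sided geometric net of prices indexed by the buckets $[(1+\delta)^{i-1}\tau U(T),(1+\delta)^i\tau U(T))$, and sum the per-bucket Hoeffding failures via the product-sum inequality and Fact~\ref{lem:sum_of_double_exponential}. The paper names the two net points in bucket $i$ as $p^{(i)}_{\min}$ and $p^{(i)}_{\max}$ (both required to lie in the bucket) rather than your $\hat p^{(i)},\tilde p^{(i)}$, but the content and the resulting $(1-\delta)/(2(1+\delta))$ and $(1+\delta)^2/2$ constants are identical; your aside explaining why a one-sided net fails (unbounded jumps of $U_p$) is a useful remark that the paper leaves implicit.
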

\begin{proof}
Suppose $f$ is characterized by $(t,p_1,p_2)$.
By Observation~\ref{obs:utility_and_payment}, both $U_f(X)$ and $B_f(X)$ are non-negative linear combination of $U_{p_1}(X)$ and $U_{p_2}(X)$, and hence, it suffices to prove concentration inequalities for $U_{p_1}(X)$ and $U_{p_2}(X)$. Moreover, because $f$ is $\tau$-large, we only need to prove concentration inequality for $U_{p}(X)$, for all $p$ such that $U_p([n])\ge \tau U(T)$ and for all $p<0$
The case of $p<0$ is trivial because $U_p([n])=0$ (as all the seller's costs are non-negative), and thus, it remains to prove the concentration bound for $U_{p}(X)$, for all $p$ such that $U_p([n]\setminus T)\ge \tau U(T)$.

\subsubsection*{Discretization and the high-level proof plan}
We discretize the space of all the $p$ such that $U_p([n]\setminus T)\ge \tau U(T)$. Specifically, for $i\ge 1$, we let $p^{(i)}_{\min}$ be the smallest $\frac{c_j}{u_j}$ among all $j\in[n]\setminus T$ such that $U_{\frac{c_j}{u_j}}([n]\setminus T)\in [(1+\delta)^{i-1} \tau U(T), (1+\delta)^{i} \tau U(T))$, and similarly, we let $p^{(i)}_{\max}$ be the largest such $\frac{c_j}{u_j}$ among all $j\in[n]\setminus T$. We let $k$ be largest integer such that $U_{\infty}([n]\setminus T)\ge(1+\delta)^{k-1} \tau U(T)$ (we will only consider $k\ge 1$ because if $k<1$, then $U_{\infty}([n]\setminus T)<\tau U(T)$, which implies that there is no $\tau$-large allocation rule by definition of $\tau$-largeness, and this lemma is trivially true).

The high-level proof plan is as follows: We first prove a concentration bound for all $U_{p^{(i)}_{\min}}(X)$ and $U_{p^{(i)}_{\max}}(X)$, and then we argue by monotonicity that $U_p(X)$ for all $p\in(p^{(i)}_{\min}, p^{(i)}_{\max})$ is also concentrated because of our discretization. Notice that we do not need to prove concentration bounds for $p\in (p^{(i)}_{\max}, p^{(i+1)}_{\min})$ (or $p\in (p^{(k)}_{\max},\infty)$ in the case that $i=k$),  
because there are no sellers with such cost-per-utility. Similarly, we do not need the concentration if $p^{(i)}_{\min}$ and $p^{(i)}_{\max}$ do not exist for some $i$ because the $i$-th ``bucket'' is empty.

\subsubsection*{A concentration bound for all $U_{p^{(i)}_{\min}}(X)$ and $U_{p^{(i)}_{\max}}(X)$}
Now we prove the concentration bound for $U_{p^{(i)}_{\min}}(X)$ for all $i\in[k]$ (the proof of concentration bound for $U_{p^{(i)}_{\max}}(X)$ is analogous). Let $N_i$ be the set of sellers in $[n]\setminus T$ whose cost-per-utility is at most $p^{(i)}_{\min}$. Since $U(T)$ is the total utility of the top $C$ sellers, the utility $u_j$ of a seller $j\in[n]\setminus T$ is at most $U(T)/C$. Thus, we have that $u_j\le U(T)/C$ for all $j\in N_i$ and that $\sum_{j\in N_i} u_j=U_{p^{(i)}_{\min}}([n]\setminus T)$ (by definition of $N_i$), and then it follows by Lemma~\ref{lem:sos_inequality} that
\begin{equation}\label{eq:total_influence_bound}
\sum_{j\in N_i} u_j^2\le U_{p^{(i)}_{\min}}([n]\setminus T)\cdot U(T)/C.
\end{equation}
Since each seller in $[n]\setminus T$ is added to $X$ independently with probability $1/2$, we can apply the Hoeffding bound (Lemma~\ref{lem:hoeffding}) and get
\begin{align*}
    \P\left[\underbrace{\left|U_{p^{(i)}_{\min}}(X)-\frac{U_{p^{(i)}_{\min}}([n]\setminus T)}{2}\right|\ge \frac{\delta U_{p^{(i)}_{\min}}([n]\setminus T)}{2}}_{ =: \text{event $E^{(i)}_{\min}$}}\right] 
    \le& 2\exp\left(-\frac{\delta^2U_{p^{(i)}_{\min}}([n]\setminus T)^2}{2\sum_{j\in N_i}u_j^2}\right) \\
    \le& 2\exp\left(-\frac{\delta^2U_{p^{(i)}_{\min}}([n]\setminus T)}{2U(T)/C}\right) && \text{(By Eq.~\eqref{eq:total_influence_bound})} \\ 
    \le& 2\exp\left(-\frac{\delta^2(1+\delta)^{i-1} \tau C}{2}\right).
\end{align*}
Where the last inequality follows by $U_{p^{(i)}_{\min}}([n]\setminus T)\ge (1+\delta)^{i-1} \tau U(T)$.

Taking a union bound over all $i\in[k]$, we have that
\begin{align*}
    \P\left[\exists i\in[k],\,E^{(i)}_{\min}\right] 
    \le&2\sum_{i\in[k]}\exp\left(-\frac{\delta^2(1+\delta)^{i-1} \tau C}{2}\right)\\
    =&2\sum_{i\in[k]}\exp\left(-\frac{\delta^2 \tau C}{4}\cdot 2(1+\delta)^{i-1}\right)\\
    \le&2\sum_{i\in[k]}\exp\left(-\left(\frac{\delta^2 \tau C}{4}+ 2(1+\delta)^{i-1}\right)\right) &&\text{(Fact~\ref{lem:product_sum_inequality} and $\delta^2\tau C\ge 8$)}\\
    =&2\exp\left(-\frac{\delta^2 \tau C}{4}\right)\cdot\sum_{i\in[k]} \exp(-2(1+\delta)^{i-1}) \\
    \le&2\exp\left(-\frac{\delta^2 \tau C}{4}\right)\cdot \left(1+\frac{1}{\delta}\right) &&\text{(Fact~\ref{lem:sum_of_double_exponential})}.
\end{align*}
We also want to take a union bound for $U_{p^{(i)}_{\max}}(X)$, and we get
\begin{gather*}
    \P\left[\exists i\in[k],\,E^{(i)}_{\min} \textnormal{ or }  E^{(i)}_{\max}\right]\le 4\exp\left(-\frac{\delta^2 \tau C}{4}\right)\cdot \left(1+\frac{1}{\delta}\right).
\end{gather*}

\subsubsection*{Concentration bound for general $U_{p}(X)$ follows from monotonicity}
Next we show that if $U_{p^{(i)}}(X)\in\left[\frac{(1-\delta)U_{p^{(i)}}([n]\setminus T)}{2}, \frac{(1+\delta)U_{p^{(i)}}([n]\setminus T)}{2}\right]$ for $p^{(i)}=p^{(i)}_{\min}\textrm{ or }p^{(i)}_{\max}$ for all $i\in[k]$, then $U_{p}(X)\in\left[\frac{(1-\delta)U_{p}([n]\setminus T)}{2(1+\delta)}, \frac{(1+\delta)^2U_{p}([n]\setminus T)}{2}\right]$ for all $p$ such that $U_p(n\setminus T)\ge \tau U(T)$. This essentially follows from the monotonicity of $U_{p}(S)$ with respect to $p$, given any fixed $S\subseteq [n]$ (see Eq.~\eqref{eq:B_and_U_for_p}).

Specifically, for any $p\in (p^{(i)}_{\min}, p^{(i)}_{\max})$ with $i\in[k]$, by Eq.~\eqref{eq:B_and_U_for_p}, $U_{p^{(i)}_{\min}}(S)\le U_p(S)\le U_{p^{(i)}_{\max}}(S)$ for all $S\subseteq[n]$, and thus, $U_p(X)\le U_{p^{(i)}_{\max}}(X)\le \frac{(1+\delta)U_{p^{(i)}_{\max}}([n]\setminus T)}{2}$. Moreover, by definition of $p^{(i)}_{\min},p^{(i)}_{\max}$ (i.e., our discretization), we have that $U_{p^{(i)}_{\max}}([n]\setminus T)\le (1+\delta) U_{p^{(i)}_{\min}}([n]\setminus T)$, which implies that $\frac{(1+\delta)U_{p^{(i)}_{\max}}([n]\setminus T)}{2}\le\frac{(1+\delta)^2U_{p^{(i)}_{\min}}([n]\setminus T)}{2}\le \frac{(1+\delta)^2U_{p}([n]\setminus T)}{2}$. Therefore, it follows that $U_p(X)\le \frac{(1+\delta)^2U_{p}([n]\setminus T)}{2}$, and similarly, we can derive that $U_p(X)\ge \frac{(1-\delta)U_{p}([n]\setminus T)}{2(1+\delta)}$.
\end{proof}

Now we are ready to prove Theorem~\ref{thm:random_sample_greedy}.
\begin{proof}[Proof of Theorem~\ref{thm:random_sample_greedy}]
Note that \textsc{Random-Sampling-Greedy} is
truthful-in-expectation simply because each seller's allocation and payment rules are independent of his own private cost, and the allocation rule for each seller is monotone.
Moreover, it strictly satisfies the budget constraint, because it spends $\epsilon_1 B$ at line~\ref{algline:topseller} and at most $(1-\epsilon_1)B$ at line~\ref{algline:otherseller}, which in total is at most $B$. Throughout the rest of the proof, we let $\delta_2>0$ be arbitrarily small constant, and then we choose $C$ such that $\eta C$ is arbitrarily large and $\delta_2^2\tau C\ge 8$, and we choose $\delta_1,\epsilon_1,\delta_3>0$ such that $(1-\delta_3)(1+\delta_2)^2, \frac{(1+3\delta_2)(1-\delta_3)}{1+\delta_2}\le 1-\delta_1$ and $\frac{(1+3\delta_2)(1-\delta_1)}{(1-\delta_2)}, \frac{(1+\delta_2)^2(1-\delta_1)}{1-2\delta_2-\delta_2^2} \le 1-\epsilon_1$ (also note that all $\delta_1,\epsilon_1,\delta_3$ can be made arbitrarily small given sufficiently small $\delta_2$). Moreover, we {\bf always condition on the event $E$} that for all $\frac{1}{\eta C}$-large greedy allocation rule $f$,
\begin{align*}
    \frac{(1-\delta_2)U_f([n]\setminus T)}{2(1+\delta_2)} \le U_f(X)\le \frac{(1+\delta_2)^2U_f([n]\setminus T)}{2},\\
    \frac{(1-\delta_2)B_f([n]\setminus T)}{2(1+\delta_2)} \le B_f(X)\le \frac{(1+\delta_2)^2B_f([n]\setminus T)}{2},
\end{align*}
which happens with probability at least $1-4\exp\left(-\frac{\delta_2^2}{4\eta}\right)\cdot \left(1+\frac{1}{\delta_2}\right)$ by Lemma~\ref{lem:concentration_tau_large}. Since $\eta$ can be made arbitrarily small given any $\delta_2$, this is a high-probability event.

Let $f^*_{B}$ and $f^*_{(1-\delta_3)B}$ denote the optimal allocation rules that \textsc{Greedy} chooses for all the sellers $[n]$ with budget $B$ and $(1-\delta_3)B$ respectively. We note that $U_{f^*_{(1-\delta_3)B}}([n])\ge (1-\delta_3)U_{f^*_B}([n])$, because we can scale $f^*_B$ down by a multiplicative factor $1-\delta_3$ such that it achieves a total utility $(1-\delta_3)U_{f^*_B}([n])$ and makes a total payment $\le(1-\delta_3)B$, and it follows by optimality of $f^*_{(1-\delta_3)B}$ among uniform allocation rules with budget $(1-\delta_3)B$ (see Theorem~\ref{thm:greedy_instance_optimal}), $f^*_{(1-\delta_3)B}$ achieves no less utility than $(1-\delta_3)U_{f^*_B}([n])$.
Without loss of generality we assume that $U_{f^*_{(1-\delta_3)B}}([n])\ge U(T)$, because otherwise the first step of our mechanism already achieves nearly optimal utility. We apply the truncation step with parameter $\eta$ to $f^*_{(1-\delta_3)B}$ and get the truncated allocation rule $\widehat{f^*_{(1-\delta_3)B}}$. By Observation~\ref{obs:truncation_does_not_hurt}, we have that $U_{\widehat{f^*_{(1-\delta_3)B}}}([n])\ge U_{f^*_{(1-\delta_3)B}}([n])-\frac{U(T)}{\eta C}\ge(1-\frac{1}{\eta C})U_{f^*_{(1-\delta_3)B}}([n])$. Therefore, we can henceforth use $U_{\widehat{f^*_{(1-\delta_3)B}}}([n])$ as a benchmark, because it is arbitrarily close to $U_{f^*_B}([n])$ for sufficiently small $\delta_3$ and sufficiently large $\eta C$.

\subsubsection*{$f_X$, $f_Y$ achieve utility as good as $\widehat{f^*_{(1-\delta_3)B}}$ on $X$, $Y$ respectively}
We show that without loss of generality $\widehat{f^*_{(1-\delta_3)B}}$ is $\frac{1}{\eta C}$-large. Indeed, without loss of generality, we can assume that $U_{\widehat{f^*_{(1-\delta_3)B}}}([n]\setminus T)\ge \frac{U(T)}{\eta C}$, because otherwise $U_{\widehat{f^*_{(1-\delta_3)B}}}([n])=U_{\widehat{f^*_{(1-\delta_3)B}}}([n]\setminus T)+U_{\widehat{f^*_{(1-\delta_3)B}}}(T)\le (1+\frac{1}{\eta C})U(T)$, which is arbitrarily close to $U(T)$ for sufficiently large $\eta C$, and thus, the first step of Mechanism~\ref{mech:RS_greedy} already achieves nearly optimal utility. Given $U_{\widehat{f^*_{(1-\delta_3)B}}}([n]\setminus T)\ge \frac{U(T)}{\eta C}$, it follows by Observation~\ref{obs:truncated_rule_is_tau_large} that $\widehat{f^*_{(1-\delta_3)B}}$ is $\frac{1}{\eta C}$-large.

It follows from the event $E$ that
\begin{align*}
U_{\widehat{f^*_{(1-\delta_3)B}}}(X)\in\left[\frac{(1-\delta_2)U_{\widehat{f^*_{(1-\delta_3)B}}}([n]\setminus T)}{2(1+\delta_2)}, \frac{(1+\delta_2)^2U_{\widehat{f^*_{(1-\delta_3)B}}}([n]\setminus T)}{2}\right],\\ B_{\widehat{f^*_{(1-\delta_3)B}}}(X)\in\left[\frac{(1-\delta_2)B_{\widehat{f^*_{(1-\delta_3)B}}}([n]\setminus T)}{2(1+\delta_2)},\frac{(1+\delta_2)^2B_{\widehat{f^*_{(1-\delta_3)B}}}([n]\setminus T)}{2}\right].
\end{align*}

By definition of $\widehat{f^*_{(1-\delta_3)B}}$, $B_{\widehat{f^*_{(1-\delta_3)B}}}([n]\setminus T)\le B_{\widehat{f^*_{(1-\delta_3)B}}}([n])\le B_{f^*_{(1-\delta_3)B}}([n])\le (1-\delta_3)B$, and hence $B_{\widehat{f^*_{(1-\delta_3)B}}}(X)\le \frac{(1+\delta_2)^2(1-\delta_3)B}{2}$, which is at most $\frac{(1-\delta_1)B}{2}$ by our choice of $\delta_1,\delta_3$. Similarly, since $X\cup Y=[n]\setminus T$, we have that $B_{\widehat{f^*_{(1-\delta_3)B}}}(Y)=B_{\widehat{f^*_{(1-\delta_3)B}}}([n]\setminus T)-B_{\widehat{f^*_{(1-\delta_3)B}}}(X)\le \frac{(1+3\delta_2)B_{\widehat{f^*_{(1-\delta_3)B}}}([n]\setminus T)}{2(1+\delta_2)}\le \frac{(1+3\delta_2)(1-\delta_3)B}{2(1+\delta_2)}$, which is also at most $\frac{(1-\delta_1)B}{2}$ by our choice of $\delta_1$. Thus, by definition of $f_X$ and $f_Y$ in Mechanism~\ref{mech:RS_greedy} and the optimality of Mechanism~\ref{mech:greedy} (see Theorem~\ref{thm:greedy_instance_optimal}), $f_X$ and $f_Y$ must achieve no less utility than $\widehat{f^*_{(1-\delta_3)B}}$ on $X$ and $Y$ respectively, i.e.,
\begin{align*}
    &U_{f_X}(X)\ge U_{\widehat{f^*_{(1-\delta_3)B}}}(X)\ge \frac{(1-\delta_2)U_{\widehat{f^*_{(1-\delta_3)B}}}([n]\setminus T)}{2(1+\delta_2)},\\
    U_{f_Y}(Y)\ge U_{\widehat{f^*_{(1-\delta_3)B}}}(Y)&=U_{\widehat{f^*_{(1-\delta_3)B}}}([n]\setminus T)-U_{\widehat{f^*_{(1-\delta_3)B}}}(X)\ge \frac{(1-2\delta_2-\delta_2^2)U_{\widehat{f^*_{(1-\delta_3)B}}}([n]\setminus T)}{2}.
\end{align*}
Furthermore, it follows by Observation~\ref{obs:truncation_does_not_hurt} that
\begin{align}
    U_{\hat{f_X}}(X)\ge U_{f_X}(X)-\frac{U(T)}{\eta C}\ge \frac{(1-\delta_2)U_{\widehat{f^*_{(1-\delta_3)B}}}([n]\setminus T)}{2(1+\delta_2)}-\frac{U(T)}{\eta C},\nonumber\\
    U_{\hat{f_Y}}(Y)\ge U_{f_Y}(Y)-\frac{U(T)}{\eta C}\ge \frac{(1-2\delta_2-\delta_2^2)U_{\widehat{f^*_{(1-\delta_3)B}}}([n]\setminus T)}{2}-\frac{U(T)}{\eta C}.\label{eq:utility_of_hat_f_X_and_hat_f_Y}
\end{align}

Henceforth, we assume that $U_{\hat{f_X}}(X),U_{\hat{f_Y}}(Y)\ge \frac{U(T)}{\eta C}$. This is without loss of generality -- For example, if $U_{\hat{f_X}}(X)<\frac{U(T)}{\eta C}$, then by Eq.~\eqref{eq:utility_of_hat_f_X_and_hat_f_Y}, $U_{\widehat{f^*_{(1-\delta_3)B}}}([n]\setminus T)\le \frac{4(1+\delta_2)U(T)}{(1-\delta_2)\eta C}$. Moreover, since $U_{\widehat{f^*_{(1-\delta_3)B}}}(T)\le U(T)$, we have that $U_{\widehat{f^*_{(1-\delta_3)B}}}([n])=U_{\widehat{f^*_{(1-\delta_3)B}}}(T)+U_{\widehat{f^*_{(1-\delta_3)B}}}([n]\setminus T)\le \left(1+\frac{4(1+\delta_2)}{(1-\delta_2)\eta C}\right)U(T)$, which is arbitrarily close to $U(T)$ for sufficiently small $\delta_2$ and sufficiently large $\eta C$. In this case, the first step of Mechanism~\ref{mech:RS_greedy} already achieves nearly optimal utility. 

\subsubsection*{$\hat{f_X}$ achieves utility on $Y$ as good as that on $X$}
Now we show that $\hat{f_X}$ achieves utility on $Y$ as good as that on $X$ (and it can be proved analogously that $\hat{f_Y}$ achieves utility on $X$ as good as that on $Y$). First, because $U_{\hat{f_X}}([n]\setminus T)\ge U_{\hat{f_X}}(X)\ge \frac{U(T)}{\eta C}$, it follows by Observation~\ref{obs:truncated_rule_is_tau_large} that $\hat{f_X}$ is $\frac{1}{\eta C}$-large. Then, because of the event $E$, $U_{\hat{f_X}}(X)\le \frac{(1+\delta_2)^2U_{\hat{f_X}}([n]\setminus T)}{2}=\frac{(1+\delta_2)^2(U_{\hat{f_X}}(X)+U_{\hat{f_X}}(Y))}{2}$, which implies that $U_{\hat{f_X}}(Y)\ge \frac{1-2\delta_2-\delta_2^2}{(1+\delta_2)^2}U_{\hat{f_X}}(X)$ (analogously, because $U_{\hat{f_Y}}(X)\ge \frac{(1-\delta_2)U_{\hat{f_Y}}([n]\setminus T)}{2(1+\delta_2)}=\frac{(1-\delta_2)(U_{\hat{f_Y}}(X)+U_{\hat{f_Y}}(Y))}{2(1+\delta_2)}$, we have that $U_{\hat{f_Y}}(X)\ge \frac{1-\delta_2}{1+3\delta_2}U_{\hat{f_Y}}(Y)$). Then, it follows from Eq.~\eqref{eq:utility_of_hat_f_X_and_hat_f_Y} that
\begin{equation}\label{eq:utility_of_hat_f_X_on_Y}
    U_{\hat{f_X}}(Y)\ge \frac{1-2\delta_2-\delta_2^2}{(1+\delta_2)^2}\left(\frac{(1-\delta_2)U_{\widehat{f^*_{(1-\delta_3)B}}}([n]\setminus T)}{2(1+\delta_2)}-\frac{U(T)}{\eta C}\right),
\end{equation}
and analogously
\begin{equation}\label{eq:utility_of_hat_f_Y_on_X}
U_{\hat{f_Y}}(X)\ge \frac{1-\delta_2}{1+3\delta_2}\left(\frac{(1-2\delta_2-\delta_2^2)U_{\widehat{f^*_{(1-\delta_3)B}}}([n]\setminus T)}{2}-\frac{U(T)}{\eta C}\right).
\end{equation}

\subsubsection*{$\hat{f_X}$ and $\hat{f_Y}$ never reach the budget limit (assuming event $E$)}
Before we proceed, we note that we have already proved that Mechanism~\ref{mech:RS_greedy} is strictly budget-feasible at the beginning of the proof. \textbf{The purpose of this part} is only to show that the budget threshold at line~\ref{algline:otherseller} is not met (conditioned on $E$), in which case the mechanism achieves all the utility attainable by $\hat{f_X}$ and $\hat{f_Y}$, i.e., $U_{\hat{f_X}}(Y)+U_{\hat{f_Y}}(X)$.

Specifically, because of the event $E$, we get $B_{\hat{f_X}}(X)\ge \frac{(1-\delta_2)B_{\hat{f_X}}([n]\setminus T)}{2(1+\delta_2)}=\frac{(1-\delta_2)(B_{\hat{f_X}}(X)+B_{\hat{f_X}}(Y))}{2(1+\delta_2)}$, which implies that $B_{\hat{f_X}}(Y)\le \frac{1+3\delta_2}{1-\delta_2}B_{\hat{f_X}}(X)$ (analogously, because $B_{\hat{f_Y}}(X)\le \frac{(1+\delta_2)^2B_{\hat{f_Y}}([n]\setminus T)}{2}=\frac{(1+\delta_2)^2(B_{\hat{f_Y}}(X)+B_{\hat{f_Y}}(Y))}{2}$, we have that $B_{\hat{f_Y}}(X)\le \frac{(1+\delta_2)^2}{1-2\delta_2-\delta_2^2}B_{\hat{f_Y}}(Y)$). Since $B_{\hat{f_X}}(X)\le B_{f_X}(X)$ and $B_{\hat{f_Y}}(Y)\le B_{f_Y}(Y)$ are at most $\frac{(1-\delta_1) B}{2}$ by design of Mechanism~\ref{mech:RS_greedy}, it follows by our choice of $\delta_1,\epsilon_1$ that $B_{\hat{f_X}}(Y)\le \frac{(1+3\delta_2)(1-\delta_1)}{2(1-\delta_2)} B\le \frac{(1-\epsilon_1)B}{2}$ (analogously, $B_{\hat{f_Y}}(X)\le\frac{(1+\delta_2)^2(1-\delta_1)}{2(1-2\delta_2-\delta_2^2)}B\le \frac{(1-\epsilon_1)B}{2}$). Therefore, we will not exceed the budget limit at line~\ref{algline:otherseller}.

\subsubsection*{The total utility achieved by Mechanism~\ref{mech:RS_greedy} is nearly optimal}
Finally, we calculate the total utility achieved by Mechanism~\ref{mech:RS_greedy}. As we argued above, we get utility $U_{\hat{f_X}}(Y)+U_{\hat{f_Y}}(X)$ at line~\ref{algline:otherseller}. Moreover, we get utility $U(T)$ at line~\ref{algline:topseller}. Therefore, the total utility is
\begin{align*}
    &U_{\hat{f_X}}(Y)+U_{\hat{f_Y}}(X)+U(T)\\
    \ge& \frac{1-2\delta_2-\delta_2^2}{(1+\delta_2)^2}\left(\frac{(1-\delta_2)U_{\widehat{f^*_{(1-\delta_3)B}}}([n]\setminus T)}{2(1+\delta_2)}-\frac{U(T)}{\eta C}\right)\\
    &+\frac{1-\delta_2}{1+3\delta_2}\left(\frac{(1-2\delta_2-\delta_2^2)U_{\widehat{f^*_{(1-\delta_3)B}}}([n]\setminus T)}{2}-\frac{U(T)}{\eta C}\right)+U(T) &&\text{(By Eq.~\eqref{eq:utility_of_hat_f_X_on_Y} and Eq.~\eqref{eq:utility_of_hat_f_Y_on_X})},
\end{align*}
which is arbitrarily close to (i.e., up to a multiplicative factor that is arbitrarily close to 1) $$U_{\widehat{f^*_{(1-\delta_3)B}}}([n]\setminus T)+U(T)\ge U_{\widehat{f^*_{(1-\delta_3)B}}}([n]\setminus T)+U_{\widehat{f^*_{(1-\delta_3)B}}}(T)=U_{\widehat{f^*_{(1-\delta_3)B}}}([n]),$$ for sufficiently small $\delta_2$ and sufficiently large $\eta C$.
\end{proof}

\section{A robust hard instance for the AGN mechanism}~\label{section:agn_instance}
In a nutshell,~\cite{anari2018budget} designed a proxy to a uniform mechanism that requires knowledge of all the costs of the sellers. For a parameter $r$ to be determined later, this uniform mechanism is defined by a carefully chosen allocation function $f_r : \mathbb{R}_{\geq 0} \rightarrow [0,1]$, which takes as input a seller's cost-to-utility ratio $\gamma := \frac{c}{u}$, and outputs the fraction of the item purchased from the seller.
The allocation function is given by
$$f_r(\gamma):=
\begin{cases} \ln(e-\frac{\gamma}{r}) & \gamma< r(e-1)\\
0 & \gamma\ge r(e-1)
\end{cases}.$$
There is an associated payment function (per utility) $Q_{f_r}(\gamma)$ given by Myerson's lemma. The mechanism is simply choosing the $r$ to be the largest value such that the total payment is within the budget.

\begin{algorithm}[ht]
\SetAlgoLined
\SetKwInOut{Input}{Input}
\SetKwInOut{Output}{Output}
\Input{$(c_i,u_i)$ for $i\in[n]$, $B$.}
\SetAlgorithmName{Mechanism}~~
Let $r$ be the largest value such that $\sum_{i=1}^n Q_{f_r}(\frac{c_i}{u_i})\cdot u_i\le B$\;
For each seller $i$, buy $f_r(\frac{c_i}{u_i})$ fraction of the $i$-th item and pay $Q_{f_r}(\frac{c_i}{u_i})\cdot u_i$ to the seller.
 \caption{\textsc{AGN}}
 \label{mech:AGN}
\end{algorithm}

The following fact by elementary calculus is the key to understand the behavior of this mechanism, which says that $Q_{f_r}(\gamma)-\gamma$,~\ie, the ``information rent'' that the buyer overpays the seller, is a linear function in $f_r(\gamma)$ with $f_r(\gamma)$-intercept $1-1/e$.
\begin{fact}[Implicit in~\cite{anari2018budget}]\label{fact:AGN_line}
$Q_{f_r}(\gamma)=ref_r(\gamma)-r(e-1)+\gamma$.
\end{fact}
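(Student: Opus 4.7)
The plan is a direct computation using Myerson's payment identity $Q_{f_r}(\gamma) = \gamma\cdot f_r(\gamma) + \int_{\gamma}^{\infty} f_r(z)\,dz$, since the identity in the Fact is purely an algebraic consequence of the definition of $f_r$.

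First, I would note that $f_r(z) = 0$ for $z \ge r(e-1)$, so the tail integral is really $\int_{\gamma}^{r(e-1)} \ln(e - z/r)\,dz$ (assuming $\gamma < r(e-1)$; the boundary case $\gamma \ge r(e-1)$ gives $Q_{f_r}(\gamma) = 0$, which is consistent with the formula since $f_r(\gamma)=0$ forces $re f_r(\gamma) - r(e-1) + \gamma = \gamma - r(e-1) \le 0$, and this edge requires a tiny separate check).

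Second, I would evaluate the integral via the substitution $u = e - z/r$, so that $du = -dz/r$ and the limits become $u = e - \gamma/r$ (at $z=\gamma$) and $u = 1$ (at $z = r(e-1)$). This turns the tail integral into $r \int_{1}^{e - \gamma/r} \ln u\, du$, and applying the antiderivative $\int \ln u\, du = u\ln u - u$ yields
\begin{equation*}
\int_{\gamma}^{r(e-1)} f_r(z)\,dz = r\bigl[(e - \gamma/r)\ln(e - \gamma/r) - (e - \gamma/r) + 1\bigr].
\end{equation*}

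Third, I would recognize $\ln(e - \gamma/r) = f_r(\gamma)$ and expand: the above equals $(re - \gamma)f_r(\gamma) - (re - \gamma) + r = (re-\gamma)f_r(\gamma) - r(e-1) + \gamma$. Adding the $\gamma f_r(\gamma)$ term from Myerson's identity gives $Q_{f_r}(\gamma) = re\, f_r(\gamma) - r(e-1) + \gamma$, as claimed. No step is a real obstacle here; the only thing to be careful about is the edge case $\gamma \ge r(e-1)$, where $f_r$ vanishes and the formula holds trivially (in AGN's analysis, this regime corresponds to a seller being excluded, so no payment is made).
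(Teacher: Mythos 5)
Your proof is correct and takes essentially the same approach as the paper: both start from Myerson's identity and evaluate the tail integral $\int_\gamma^{r(e-1)}\ln(e-z/r)\,dz$ directly (the paper uses the substitution $u=r(e-e^t)$ plus integration by parts, while you use the simpler linear substitution and the antiderivative of $\ln u$, but the calculation is the same). One minor slip in your parenthetical edge-case remark: for $\gamma\ge r(e-1)$ you have $\gamma - r(e-1)\ge 0$, not $\le 0$, so the displayed formula only agrees with $Q_{f_r}(\gamma)=0$ at the boundary $\gamma=r(e-1)$; this is harmless since the identity is only applied in the regime $\gamma\le r(e-1)$.
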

\begin{proof}
By definition, $Q_{f_r}(\gamma)=f_r(\gamma)\cdot \gamma+\int_{\gamma}^{r(e-1)} f_r(u)du$ for $\gamma\le r(e-1)$, and $Q_{f_r}(\gamma)=0$ for $\gamma\ge r(e-1)$. Taking the integration, we have that
\begin{align*}
    \int_{\gamma}^{r(e-1)} f_r(u)du&=\int_{\gamma}^{r(e-1)} \ln(e-\frac{u}{r})du \\
    &=\int_{\ln(e-\frac{\gamma}{r})}^{0} \underbrace{\ln\left(e-\frac{r(e-e^t)}{r}\right)}_{=t}d(r(e-e^t)) \\
    &=(r(e-e^t)\cdot t)|_{\ln(e-\frac{\gamma}{r})}^{0} - \int_{\ln(e-\frac{\gamma}{r})}^{0} r(e-e^t) dt \\
    &=(r(e-e^t)\cdot t)|_{\ln(e-\frac{\gamma}{r})}^{0}-r(et-e^t)|_{\ln(e-\frac{\gamma}{r})}^0 \\
    &=(r(1-t)\cdot e^t)|_{\ln(e-\frac{\gamma}{r})}^{0} \\
    &=r-r(1-\ln(e-\frac{\gamma}{r}))\cdot(e-\frac{\gamma}{r}) \\
    &=r-(1-f_r(\gamma))\cdot(re-\gamma).
\end{align*}
Therefore, this fact follows from
\begin{align*}
    Q_{f_r}(\gamma)&=f_r(\gamma)\cdot \gamma+\int_{\gamma}^{r(e-1)} f_r(u)du \\
    &=f_r(\gamma)\cdot \gamma+r-(1-f_r(\gamma))\cdot(re-\gamma) \\
    &=ref_r(\gamma)-r(e-1)+\gamma.
\end{align*}
\end{proof}

Now we prove the following hardness result for the $\textsc{AGN}$ mechanism.

\begin{theorem}\label{thm:hard_instance_for_agn}
For any bounded range of budgets, for any $\epsilon>0$, there is a hard market which satisfies the small-bidder assumption such that the \textsc{AGN} mechanism has competitive ratio at most $1-1/e+\epsilon$ for every budget in the range.
\end{theorem}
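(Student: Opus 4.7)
The strategy is a multi-scale construction: I will stack scaled-and-shifted copies of the single-budget Anari-Goel-Nikzad worst-case instance at a geometric sequence of cost scales, so that for every budget $B$ in the given range there is a ``matched'' sub-instance which dominates both AGN's behavior and OPT's behavior and forces ratio $\approx 1-1/e$.

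Construction: Given $[B_{\min},B_{\max}]$ and $\epsilon>0$, pick large constants $\Lambda,M$ (depending on $\epsilon$). Define scales $\lambda_i=\lambda_0\Lambda^i$ and sizes $N_i=N_0 M^i$ for $i=0,\ldots,K$, with $K$ large enough that the matched budgets span the range. Build sub-market $G_i$ of $N_i$ unit-utility sellers whose cost distribution is the AGN worst-case distribution at scale $\lambda_i$ (atom of mass $1/e$ at the lowest cost, continuous CDF $c\mapsto 1/(e-(c-s_i)/\lambda_i)$ on $(s_i,s_i+\lambda_i(e-1)]$), shifted by $s_i := \sum_{j<i}\lambda_j(e-1)$ so that the cost ranges of different sub-markets are disjoint and increasing. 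Tune $N_0,\lambda_0$ so that scale $0$ naturally matches $B_{\min}$; note $s_i/\lambda_i = O(1/\Lambda)$, and the maximum seller cost $s_K+\lambda_K(e-1)$ can be made $o(B_{\min})$ by taking $N_0$ large, verifying the small-bidder assumption.

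Analysis: For any $B\in[B_{\min},B_{\max}]$, let $i^*=i^*(B)$ be the index for which the total cost of $G_0\cup\cdots\cup G_{i^*}$ is approximately $B$. Using Fact~\ref{fact:AGN_line} I argue that AGN's payment-balancing parameter $r$ must satisfy $r\approx\lambda_{i^*}$, splitting the analysis into three regimes. (i) For $j<i^*$, $r(e-1)\gg s_j+\lambda_j(e-1)$, so $f_r\approx 1$ on all of $G_j$; AGN fully serves $G_j$ while paying $\approx r$ per seller, but both the gained utility and the payment sum to at most $O(N_{i^*}/M)$, which is negligible for $M$ large. (ii) On $G_{i^*}$ itself, the shift $s_{i^*}/\lambda_{i^*}=O(1/\Lambda)$ is small, so AGN at $r\approx\lambda_{i^*}$ behaves up to $O(1/\Lambda)$ perturbation like the unshifted single-scale hard instance, achieving utility $(1-1/e)N_{i^*}$ and consuming essentially all of $B$. (iii) For $j>i^*$, the minimum cost $s_j\ge s_{i^*+1}\ge r(e-1)$, so $f_r\equiv 0$ on $G_j$ and it contributes nothing. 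For OPT, since cost ranges are disjoint, the greedy knapsack fills sub-markets in order: $G_0,\ldots,G_{i^*-1}$ are bought entirely for total cost $\ll B$, nearly all of $G_{i^*}$ is bought with the remaining budget ($\approx N_{i^*}$ utility), and no budget remains for $j>i^*$. Hence OPT $\approx N_{i^*}$ and AGN $\approx(1-1/e)N_{i^*}$, giving ratio $\le 1-1/e+o(1)$ as $\Lambda,M\to\infty$.

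The main obstacle is step (ii): quantifying that the shifted single-scale AGN worst-case remains within $\epsilon$ of its unshifted hardness ratio $1-1/e$. The shift causes $f_r(s_{i^*})=\ln(e-s_{i^*}/r)<1$ on the atom and analogous perturbations on the continuous part of the allocation and payment integrals. Since $f_r$ and $Q_{f_r}$ depend smoothly on $c/r$, a direct first-order expansion around $s_{i^*}/r=0$ gives that AGN's utility and payment on $G_{i^*}$ agree with the unshifted quantities up to a multiplicative factor $1\pm O(s_{i^*}/\lambda_{i^*})=1\pm O(1/\Lambda)$. Choosing $\Lambda$ and then $M$ large enough absorbs these error terms (along with the contributions of the other sub-markets) into $\epsilon$, yielding the claim.
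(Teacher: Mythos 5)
Your construction has the right geometric flavor---stacked cost scales with increasing masses---and superficially resembles the paper's bucket construction. But it has a genuine gap: the analysis only controls budgets close to the ``matched budgets'' $B^*_{i^*}\approx N_{i^*}\lambda_{i^*}(e-2)$, and with $\Lambda,M$ large these matched budgets are spaced a factor $\Lambda M$ apart, so you have not said anything about the budgets in between. Worse, for those intermediate budgets the ratio actually drifts toward $1$. Concretely, take $B\approx\sqrt{\Lambda M}\,B^*_{i^*}$, the geometric midpoint of $(B^*_{i^*},B^*_{i^*+1})$. Then \textsc{AGN} must raise $r$ above $\lambda_{i^*}\frac{\Lambda}{\Lambda-1}$ (so your regime (iii), which needs $s_{i^*+1}\ge r(e-1)$, already fails) and begins buying a small fraction $\epsilon'=f_r(s_{i^*+1})$ of the $G_{i^*+1}$ atom. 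By Fact~\ref{fact:AGN_line}, in the regime where $f_r(s_{i^*+1})$ is small, $Q_{f_r}(s_{i^*+1})\approx s_{i^*+1}\cdot f_r(s_{i^*+1})$, i.e.\ the information rent vanishes and \textsc{AGN}'s utility-per-dollar on the atom equals OPT's. Since at this budget both \textsc{AGN} and OPT spend almost everything on the atom and get $\Theta(\sqrt{\Lambda M})\cdot N_{i^*}$ utility there while losing only $O(N_{i^*})$ on $G_{\le i^*}$, the ratio is $\approx\frac{(1-1/e)+\Theta(\sqrt{\Lambda M})}{1+\Theta(\sqrt{\Lambda M})}\to 1$ as $\Lambda M\to\infty$. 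So for large $\Lambda,M$ the construction actually becomes \emph{easy} at intermediate budgets, contradicting the claim. The tension is intrinsic: you need $\Lambda,M$ large to make the shift errors $O(1/\Lambda)$ and the lower-scale contributions $O(1/M)$ negligible, but that very largeness spreads the hard budgets apart.

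The paper's construction avoids this tension by decoupling the error control from the budget spacing. It places $m$ buckets at \emph{single} costs $c_i=r_{i-1}(e-1)$ determined by a recurrence (so each bucket's cost is exactly the cutoff of the previous allocation rule), and tunes the bucket \emph{masses} $\lambda_i$ so that the total cost of the first $i$ buckets equals $B_i$ for an arbitrary finite set of target budgets $B_1<\cdots<B_m$. Using Fact~\ref{fact:AGN_line}, the definition of $r_i$ forces $\sum_j\lambda_j Q_{f_{r_i}}(c_j)=B_i$ and $\sum_j\lambda_j f_{r_i}(c_j)=(1-1/e)\sum_j\lambda_j$, so the ratio is exactly $1-1/e$ at each $B_i$ with no approximation. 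The grid can then be taken arbitrarily fine ($B_{k+1}/B_k=1+\epsilon'$), and for budgets between consecutive grid points the non-IC optimum moves by at most a factor $1+\epsilon'$, which pushes the ratio only up to $(1-1/e)(1+\epsilon')$. To repair your approach you would need either to densify the scales (which blows up the shift error $s_i/\lambda_i=(e-1)/(\Lambda-1)$) or to supply a separate argument that the ratio stays below $1-1/e+\epsilon$ at intermediate budgets---and the calculation above shows the latter is false.
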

\begin{proof}

We first construct a hard market for arbitrary number of budgets $0<B_1<B_2<\dots<B_m$.
\paragraph{Construction} The market has $m$ buckets of sellers with unit utilities. For each $i\in[m]$, the $i$-th bucket has $\lambda_in$ sellers of the same cost $c_i$, and hence each seller $s$ in the $i$-th bucket has cost-per-utility $\gamma_s=c_i$. The $c_i$'s and $\lambda_i$'s are specified as follows.
Let $c_1:=B_1/n$ and $\lambda_1:=1$, and hence $\lambda_1c_1n=B_1$, and let $r_1$ be such that $\ln(e-\frac{c_1}{r_1})=1-\frac{1}{e}$. Let $c_2:=r_1(e-1)$, and $\lambda_2$ be such that $\frac{\lambda_1c_1+\lambda_2c_2}{\lambda_1c_1}=\frac{B_2}{B_1}$. Inductively, let $r_{i-1}$ be such that $\sum_{j=1}^{i-1}\lambda_j\cdot\ln(e-\frac{c_j}{r_{i-1}})=(1-\frac{1}{e})\sum_{j=1}^{i-1}\lambda_j$ and $c_i:=r_{i-1}(e-1)$, and $\lambda_i$ is such that $\frac{\sum_{j=1}^i\lambda_j c_j}{\sum_{j=1}^{i-1}\lambda_j c_j}=\frac{B_i}{B_{i-1}}$.

\paragraph{Analysis}
In the above market, we show by induction that $c_i$'s and $r_i$'s are monotone increasing.
For the base case, it follows by definition of $r_1$ that $r_1>c_1/(e-1)$, and hence, we have $c_2>c_1$ by definition of $c_2$. For the induction step, observe that when $i\ge 2$, $r_{i}>r_{i-1}$ since 
\begin{align*}
    \sum_{j=1}^{i}\lambda_j\cdot\ln(e-\frac{c_j}{r_{i}})&=(1-\frac{1}{e})\sum_{j=1}^{i}\lambda_j && \text{(By definition of $r_i$)}\\
    &>(1-\frac{1}{e})\sum_{j=1}^{i-1}\lambda_j \\
    &=\sum_{j=1}^{i-1}\lambda_j\cdot\ln(e-\frac{c_j}{r_{i-1}}) &&\text{(By definition of $r_{i-1}$)}\\
    &=\sum_{j=1}^{i}\lambda_j\cdot\ln(e-\frac{c_j}{r_{i-1}}) &&\text{(By definition of $c_i$)},
\end{align*}
and $r_{i}>r_{i-1}$ implies $c_{i+1}>c_{i}$ by definition of $c_i$.

By monotonicity of $c_j$, for budget $B_i$, the allocation rule $f_{r_i}$ does not select anything from $(\ge i+1)$-th bucket, because $c_{j}\ge r_i(e-1)$ for all $j\ge i+1$. Hence, $Q_{f_{r_i}}(\gamma_{s})=0$ for all $s$ in $(\ge i+1)$-th bucket. Next, we show that $\sum_{s\,\in\,\textrm{first $i$ buckets}} Q_{f_{r_i}}(\gamma_s)=B_i$ (which implies $\sum_{s=1}^n Q_{f_{r_i}}(\gamma_s)=B_i$). Specifically, we derive that
\begin{align*}
    \sum_{s\,\in\,\textrm{first $i$ buckets}} Q_{f_{r_i}}(\gamma_s)&=\sum_{s\,\in\,\textrm{first $i$ buckets}}(\gamma_s+r_ief_{r_i}(\gamma_s)-r_i(e-1)) \quad\quad\quad\text{(By Fact~\ref{fact:AGN_line})}\\
    &=\sum_{j=1}^i \lambda_j n\cdot (c_j+r_ief_{r_i}(c_j)-r_i(e-1))\\ &\quad\quad\quad\quad\quad\,\text{($j$-th bucket has $\lambda_{j}n$ sellers with cost-per-utility $c_j$)}\\
    &=B_i+\sum_{j=1}^i \lambda_j n\cdot (r_ief_{r_i}(c_j)-r_i(e-1)) \quad\text{(By definition of $\lambda_j$'s)}.
\end{align*}
Thus, it suffices to prove $\sum_{j=1}^i \lambda_j n\cdot (r_ief_{r_i}(c_j)-r_i(e-1))=0$. By rearranging, this is equivalent to
$\sum_{j=1}^i \lambda_j\cdot f_{r_i}(c_j)=(1-\frac{1}{e})\sum_{j=1}^i \lambda_j$, which in turn holds by definition of $r_i$.

Finally, observe that by definition of $f_r$, if a non-zero fraction of seller $s$'s item is allocated by $f_r$, and $\gamma_s>0$ (which holds for every seller in our market), then increasing $r$ of $f_{r}$ will strictly increase the allocated fraction of seller $s$'s item and hence the payment to seller $s$. Thus, it follows from $\sum_{s\in[n]} Q_{f_{r_i}}(\gamma_s)=B_i$ that given budget $B_i$, the allocation function selected by \textsc{AGN} is $f_{r_i}$. Therefore, the total utility achieved by \textsc{AGN} for budget $B_i$ is
\begin{align*}
    \sum_{s\,\in\,\textrm{first $i$ buckets}}f_{r_i}(\gamma_{s})&=\sum_{j=1}^i \lambda_jn\cdot f_{r_i}(c_j)\quad \text{($j$-th bucket has $\lambda_{j}n$ sellers with cost-per-utility $c_j$)}\\
    &=(1-\frac{1}{e})\sum_{j=1}^i \lambda_jn \quad\text{(By definition of $r_i$)}.
\end{align*}
Notice that $B_i$ is just enough to buy the first $i$ buckets of sellers for the non-IC optimum. Hence, the mechanism has competitive ratio $1-1/e$ for budget $B_i$ (the choice of $i\in[m]$ is arbitrary).

\paragraph{Extending to continuous range of budgets}
We remark that the $1-1/e$ barrier still holds even if the range of budgets is continuous. Because by our construction, $\frac{OPT_{k+1}}{OPT_k}=\frac{B_{k+1}}{B_k}$, where $OPT_k$ denotes the non-IC optimum of the market for budget $B_k$, we can choose $\frac{B_{k+1}}{B_k}=1+\epsilon$ such that the optimal value does not change much from $B_k$ to $B_{k+1}$. It follows that for any budget between $B_k$ and $B_{k+1}$, the mechanism can only have competitive ratio close to $1-1/e$.

\end{proof}

\section{Computing optimal budget-smoothed competitive ratios}\label{section:numerical_computing_ratios}
In this section, we formulate a (non-convex) mathematical program that computes the optimal budget-smoothed competitive ratio for any fixed distribution of $m$ budget perturbation factors (with respect to the largest budget $B$) $0<\rho_1<\dots<\rho_m=1$ (let $p_i$ denote the probability of the perturbation $\rho_i$) and numerically solve it for various distributions with small $m$. By Theorem~\ref{thm:worst_distribution}, it suffices to consider the worst-case Bayesian market, in which the $F(c)$-to-$cF(c)$ curve is a piecewise-linear function with non-decreasing slope and has at most $m+1$ pieces including the zero part. Let $F_1$ be the value of $F(c)$ where the curve switches from the zero part to the first linear function, and similarly, for $2\le i\le m$, let $F_i$ be the value of $F(c)$ where the curve switches from $(i-1)$-th linear function to $i$-th linear function. Moreover, for all $i\in[m]$, let $a_i$ be the slope of the $i$-th linear function of the curve. Clearly, the curve is fully determined by $F_i$'s and $a_i$'s. Henceforth, $a_i's$ and $F_i$'s are the only variables in our mathematical program, and all other quantities are functions of them, but to simplify notation, we will not explicitly write the arguments of the functions.
Now, let $F_{m+1}=1$ and let $c_i$ be such that $F(c_i)=F_i$ for $i\in[m+1]$. Let $y_i$ denote the value of the curve at $F_i$, then we have that for all $i\in[m+1]$
\[
    y_i=\sum_{j=1}^{i-1} a_j(F_{j+1}-F_{j}).
\]
To ensure non-decreasing
slope, we introduce the following constraints
\begin{equation}\label{eq:constraint}
\begin{split}
    &0\le F_1\le\dots\le F_m\le F_{m+1}=1,\\
    &0<a_1\le\dots\le a_m.
\end{split}
\end{equation}
Furthermore, without loss of generality, we assume that with the largest budget $B$, the optimal solution can buy all the items in the Bayesian market (in expectation), since otherwise, we can simply scale the CDF. Next, we calculate the total cost of the first $F(c)$ fraction of items for arbitrary $F_{i}\le F(c)\le F_{i+1}$. First, notice that for any $c_j\le c\le c_{j+1}$, $cF(c)-y_j=a_j(F(c)-F_j)$, and hence by letting $b_j:=a_jF_j-y_j$, it follows that $F(c)=\frac{b_j}{a_j-c}$. Integrating over $[c,c_j]$, we have that we have that $$\int_{c_j}^{c} F(x)dx=b_j\ln{\frac{a_j-c_j}{a_j-c}}.$$ 
Therefore, the total cost of the first $F(c)$-fraction of items for arbitrary $F_{i}\le F(c)\le F_{i+1}$ is
\begin{align*}
    \int_0^{F(c)} x dF(x)&=xF(x)|_0^c-\int_0^c F(x)dx \qquad\qquad\text{(Integration by parts)}\\
    &=cF(c)-\left(\left(\sum_{j=2}^{i}\int_{c_{j-1}}^{c_j} F(x)dx\right) + \int_{c_i}^{c} F(x)dx\right) \\
    &=(a_iF(c)-b_i)-\left(\left(\sum_{j=2}^{i}b_{j-1}\ln{\frac{a_{j-1}-c_{j-1}}{a_{j-1}-c_j}}\right)+b_i\ln{\frac{a_i-c_i}{a_i-c}}\right) \\
    &=a_iF(c)-b_i-\left(\left(\sum_{j=2}^{i}b_{j-1}\ln{\frac{a_{j-1}-c_{j-1}}{a_{j-1}-c_j}}\right)+b_i\ln{\frac{a_i-c_i}{a_i-\frac{a_iF(c)-b_i}{F(c)}}}\right) \\
    &=a_iF(c)-b_i-\left(\left(\sum_{j=2}^{i}b_{j-1}\ln{\frac{a_{j-1}-c_{j-1}}{a_{j-1}-c_j}}\right)+b_i\ln{\frac{a_i-c_i}{b_i}}+b_i\ln{F(c)}\right).
\end{align*}
In particular, $B=\int_0^1 x dF(x)=a_m-b_m-\sum_{j=1}^{m}b_j\ln{\frac{a_{j}-c_{j}}{a_{j}-c_{j+1}}}$, and now we can calculate the non-IC optimal utility $g_k$ achieved by budget $\rho_k B$. If $\int_0^{F_i} x dF(x)\le \rho_k B\le \int_0^{F_{i+1}} x dF(x)$, then we need to solve for $F(c)$ in the following equation
\[
    a_iF(c)-b_i-\left(\sum_{j=2}^{i}b_{j-1}\ln{\frac{a_{j-1}-c_{j-1}}{a_{j-1}-c_j}}+b_i\ln{\frac{a_i-c_i}{b_i}}+b_i\ln{F(c)}\right)=\rho_k B.
\]
Let $h_k:=\rho_k B+b_i+\sum_{j=2}^{i}b_{j-1}\ln{\frac{a_{j-1}-c_{j-1}}{a_{j-1}-c_j}}+b_i\ln{\frac{a_i-c_i}{b_i}}$. Then above equation is simplified to $a_iF(c)-b_i\ln{F(c)}=h_k$. This in turn is equivalent to solving $e^{-F(c)\cdot a_i/b_i}(-F(c)\cdot a_i/b_i)=-(a_i/b_i)e^{-h_k/b_i}$. To this end, we use the Lambert W function\footnote{Namely, the branches of the inverse relation of the function $f(w) = we^w$, see~\cite{corless1996lambertw} for example.}. Since $-F(c)\cdot a_i/b_i=-(cF(c)+b_i)/b_i=-1-cF(c)/b_i<-1$, we should use the branch $W_{-1}$, \ie, $-F(c)\cdot a_i/b_i=W_{-1}(-(a_i/b_i)e^{-h_k/b_i})$. Therefore, we can represent $g_k$ as following piecewise function
\[
    g_k=-\frac{b_i}{a_i}\cdot W_{-1}(-\frac{a_i}{b_i}\cdot e^{-h_k/b_i}),\,\textrm{if $\int_0^{F_i} x dF(x)\le \rho_k B\le \int_0^{F_{i+1}} x dF(x)$}.
\]
Next, we calculate the best achievable utility $f_k$ of a truthful-in-expectation mechanism with budget $\rho_k B$. As shown in Theorem~\ref{thm:worst_distribution}, the optimal truthful mechanism for the Bayesian market is a cutoff rule. By Myerson's payment rule, the budget spent by a cutoff rule at $c$ is $cF(c)$. Thus the best achievable utility for budget $\rho_k B$ is the $F(c)$ such that $cF(c)=\rho_k B$. If $c_iF_i\le \rho_k B\le c_{i+1}F_{i+1}$, then $cF(c)=a_iF(c)-b_i$, and hence, 
\[
f_k=(\rho_k B+b_i)/a_i,\,\textrm{if $c_iF_i\le \rho_k B\le c_{i+1}F_{i+1}$}.
\]
The final mathematical program is
\[
\min_{a_i,F_i}\sum_{k=1}^m p_k\cdot\frac{f_k}{g_k},\,\textrm{s.t. Eq.~\eqref{eq:constraint}}. 
\]
We solve this program numerically for various interesting distributions, and the results are summarized in Table~\ref{table:worst_case_ratio}. For the uniform distribution over two budgets $\rho B$ and $B$, we solve the program for $\rho\in[0.01,0.99]$ and present the optimal budget-smoothed competitive ratios in Figure~\ref{fig:two_budget}.

\begin{table}[h!]
\begin{minipage}{\columnwidth}
\begin{center}
\caption{\label{table:worst_case_ratio} Optimal budget-smoothed competitive ratios on various budget distributions}
\begin{tabular}{ c c } 
 \hline
{\bf  Budget perturbation distribution} & \textbf{Optimal budget-smoothed competitive ratio} \\
 \hline
 Uniform over $[1,10]$ & 0.64 \\
 Log-scale-uniform over $[1,8]$ & 0.65 \\
 Log-scale-uniform over $[1,512]$ & 0.67 \\
 Microworkers & 0.64 \\
 \hline
\end{tabular}
\end{center}
\footnotesize
We calculated tight budget-smoothed competitive ratios for several exemplary budget distributions. The last row of the table contains a real-world distribution,
which is is uniform over top 10 budgets spent on Microworkers~\cite{microworkersdata}.
The data are provided in Table~\ref{table:data}.
\end{minipage}
\end{table}

\begin{table}[h!]
\begin{minipage}{\columnwidth}
\begin{center}
\caption{\label{table:data} Top 10 budgets spent on Microworkers}
\begin{tabular}{ c c } 
 \hline
{\bf Dataset } & \textbf{Budget perturbations} \\
 \hline
 Microworkers & $\{0.124,0.126,0.154,0.172,0.236,0.281,0.299,0.544,0.625,1\}$ \\
 \hline
\end{tabular}
\end{center}
\footnotesize This table contains the real-world budget data of Microworkers, which are normalized and rounded for privacy.
\end{minipage}
\end{table}

\begin{figure}
\centering
    \includegraphics[scale=0.5]{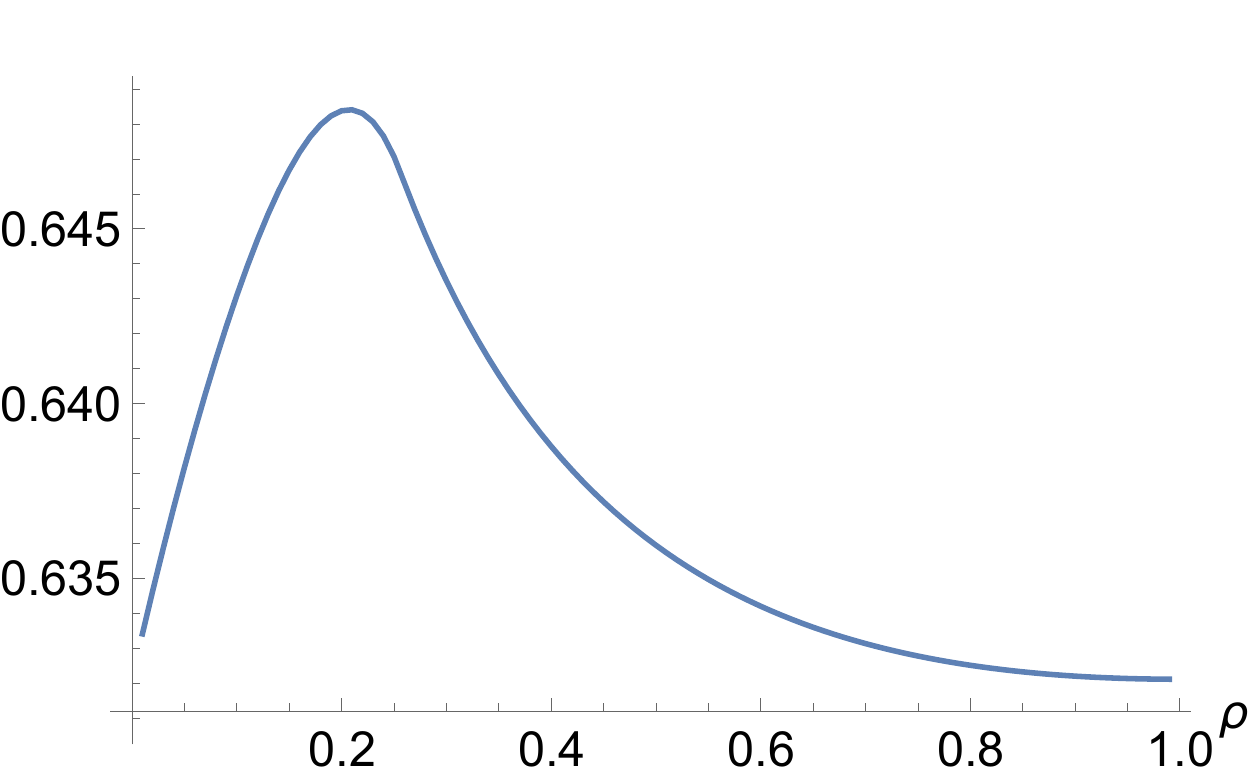}
    \caption{Optimal average competitive ratios for budgets $\rho B$ and $B$.}
    \label{fig:two_budget}
\end{figure}

We remark that the numerical results are interesting on both positive and negative sides --- On the positive side, the optimal budget-smoothed competitive ratios have modest but non-negligible improvements over $1-1/e$ for many interesting distributions. On the negative side, we pin down the worst-case markets for these distributions which remain significantly hard, which might provide new insights on how to model the structure of beyond-worst-case markets to exclude these markets.

In light of Theorem~\ref{thm:lower_bound}, we know there is a limit for the positive side of budget-smoothed analysis. That said, it is noteworthy that because the mathematical program for solving the optimal budget-smoothed competitive ratio is non-convex, we are only able to compute the optimal ratios after discretizing the budget distributions with a limited number of budgets, and therefore, the positive results may still have a lot of room to improve. This leaves an interesting open problem: close the gap by identifying the best budget distribution and solving the optimal budget-smoothed competitive ratio for the best budget distribution.

\end{document}